\definecolor{myorange}{RGB}{0,0,0}
\DeclareFixedFont{\ttb}{T1}{txtt}{bx}{n}{7} % for bold
\DeclareFixedFont{\ttm}{T1}{txtt}{m}{n}{7}  % for normal
\definecolor{deepblue}{rgb}{0,0,0.5}
\definecolor{deepred}{rgb}{0.6,0,0}
\definecolor{deepgreen}{rgb}{0,0.5,0}
\newcommand\pythonstyle{\lstset{
language=Python,
basicstyle=\ttm,
morekeywords={self},              % Add keywords here
keywordstyle=\ttb\color{deepblue},
emph={MyClass,__init__},          % Custom highlighting
emphstyle=\ttb\color{deepred},    % Custom highlighting style
stringstyle=\color{deepgreen},
frame=tb,                         % Any extra options here
showstringspaces=false
}}
  \providecommand\BibTeX{{%
    \normalfont B\kern-0.5em{\scshape i\kern-0.25em b}\kern-0.8em\TeX}}}
\newcommand\bs{\mathbf{s}}
\newcommand\bk{\mathbf{k}}
\newcommand\bx{\mathbf{x}}
\newcommand\be{\mathbf{e}}
\newcommand\bm{\mathbf{m}}
\newcommand\bS{\mathbf{S}}
\newcommand\bX{\mathbf{X}}
\newcommand\by{\mathbf{y}}
\newcommand\bz{\mathbf{z}}
\newcommand\bR{\mathbf{R}}
\newcommand\bg{\mathbf{g}}
\newcommand\bv{\mathbf{v}}
\newcommand\bw{\mathbf{w}}
\newcommand\bdrift{f}
\newcommand\bphi{\text{\boldmath$\phi$}}
\newcommand\bpsi{\text{\boldmath$\psi$}}
\newcommand\bQ{\mathbf{Q}}
\newcommand\calX{\mathcal{X}}
\newcommand\calI{\mathcal{I}}
\newcommand\calS{\mathcal{S}}
\newcommand\toN{^{(n)}}
\newcommand\bSn{\bS\toN}
\newcommand\bXn{\bX\toN}
\newcommand\Sn{S\toN}
\newcommand\Xn{X\toN}
\newcommand\Zn{Z\toN}
\newcommand{\ncopies}{^{[C\text{ copies}]}}
\newcommand{\onecopy}{^{[1\text{ copy}]}}
\newcommand\ns{n\times|\calS|} % Number of states
\newcommand{\R}{\ensuremath{\mathbb{R}}}
\DeclareMathOperator{\E}{\mathbb{E}} % uncond'l expectation operator
\newcommand\Proba[1]{\mathbb{P}\left(#1\right)} % expectation
\newcommand\esp[1]{\E\left[#1\right]} % expectation
\newcommand\norm[1]{\left\|#1\right\|}      % norm
\newcommand\conv[1]{\mathrm{conv}(#1)}      % convex hull
\newcommand\abs[1]{\left|#1\right|}         % absolute value
\newcommand\cro[1]{\langle#1\rangle}         % scalar product
\newcommand\floor[1]{\left\lfloor#1\right\rfloor}
\newcommand\p[1]{\left(#1\right)} % parenthesis
\newcommand\ind[1]{\mathbf{1}_{\{#1\}}} % indicator function
\newcommand{\dddd}[5]{ \frac{\partial^4 #1 }{ \partial x_{#2} \partial x_{#3} \partial x_{#4} \partial x_{#5} }}
\newcommand{\ddd}[4]{ \frac{\partial^3 #1 }{ \partial x_{#2} \partial x_{#3} \partial x_{#4} }}
\newcommand{\dd}[3]{ \frac{\partial^2 #1 }{ \partial x_{#2} \partial x_{#3}}}
\newcolumntype{o}{>{\color{myorange}}c}
\begin{document}

%% The "title" command has an optional parameter,
%% allowing the author to define a "short title" to be used in page headers.
%\title[Mean field Approximation(s) for Heterogeneous Population Models]{Accuracy of Mean Field and Refined Mean Field Approximations for Heterogeneous Population Models}
%\title[Mean field Approximation(s) for Heterogeneous Population Models]{Mean Field and Refined Mean Field Approximations for Heterogeneous Population Models: It Works!}
\title[Mean Field Approximation(s) for Heterogeneous Systems: It Works!]{Mean Field and Refined Mean Field Approximations for Heterogeneous Systems: It Works!}
%%
%% The "author" command and its associated commands are used to define
%% the authors and their affiliations.
%% Of note is the shared affiliation of the first two authors, and the
%% "authornote" and "authornotemark" commands
%% used to denote shared contribution to the research.

% adapt
%%%%%
\author{Sebastian Allmeier}
% \authornote{Some note on the authors}
%\orcid{1234-5678-9012}
\author{Nicolas Gast}
%\authornotemark[1]
\affiliation{%
  \institution{Univ. Grenoble Alpes, Inria}
  \streetaddress{F-38000}
  \city{Grenoble}
  \state{}
  \country{France}
  %\email{firstname.lastname@inria.fr}
  %\postcode{43017-6221}
}
%%%%%

%%
%% By default, the full list of authors will be used in the page
%% headers. Often, this list is too long, and will overlap
%% other information printed in the page headers. This command allows
%% the author to define a more concise list
%% of authors' names for this purpose.
%\renewcommand{\shortauthors}{Trovato and Tobin, et al.}

%%
%% The abstract is a short summary of the work to be presented in the
%% article.
\begin{abstract}
  Mean field approximation is a powerful technique to study the performance of large stochastic systems represented as $n$ interacting objects. Applications include load balancing models, epidemic spreading, cache replacement policies, or large-scale data centers. Mean field approximation is asymptotically exact for systems composed of $n$ homogeneous objects under mild conditions. In this paper, we study what happens when objects are \emph{heterogeneous}. This can represent servers with different speeds or contents with different popularities. We define an interaction model that allows obtaining asymptotic convergence results for stochastic systems with heterogeneous object behavior, and show that the error of the mean field approximation is of order $O(1/n)$. More importantly, we show how to adapt the refined mean field approximation, developed by the authors of \cite{gastSizeExpansionsMean2019}, and show that the error of this approximation is reduced to $O(1/n^2)$. To illustrate the applicability of our result, we present two examples. The first addresses a list-based cache replacement model RANDOM($m$), which is an extension of the RANDOM policy. The second is a heterogeneous supermarket model. These examples show that the proposed approximations are computationally tractable and very accurate. They also show that for moderate system sizes ($n\approx30$) the refined mean field approximation tends to be more accurate than simulations for any reasonable simulation time.

  % It is known that the approximation converges at rate $O(1/n)$ to the expected value of the stochastic system if the object behavior is homogeneous. Recently, it was show that this approximation technique can be made more accurate to obtain convergence rates of $O(1/n^2)$ by using a refined mean field approximation \cite{gastSizeExpansionsMean2019}. In this paper we generalize the classcalI mean and refined mean field approximation technique and show that same convergence rates can be obtained for stochastic models with heterogeneous object behavior with pairwise interactions. This gives a theoretcalI foundation to analyze naturally occurring heterogeneous systems, of which we give two examples. The first addressing a cache replacement model with multiple caches following the RANDOM policy. The second applying the approximation to a heterogeneous load balancing model. 
\end{abstract}

%%
%% The code below is generated by the tool at http://dl.acm.org/ccs.cfm.
%% Please copy and paste the code instead of the example below.
%%
% \begin{CCSXML}
% <ccs2012>
%  <concept>
%   <concept_id>10010520.10010553.10010562</concept_id>
%   <concept_desc>This needs to be done~at some point</concept_desc>
%   <concept_significance>500</concept_significance>
%  </concept>
% </ccs2012>
% \end{CCSXML}

\begin{CCSXML}
<ccs2012>
   <concept>
       <concept_id>10002950.10003648.10003700</concept_id>
       <concept_desc>Mathematics of computing~Stochastic processes</concept_desc>
       <concept_significance>300</concept_significance>
       </concept>
   <concept>
       <concept_id>10002950.10003714.10003727.10003728</concept_id>
       <concept_desc>Mathematics of computing~Ordinary differential equations</concept_desc>
       <concept_significance>300</concept_significance>
       </concept>
 </ccs2012>
\end{CCSXML}

\ccsdesc[300]{Mathematics of computing~Stochastic processes}
\ccsdesc[300]{Mathematics of computing~Ordinary differential equations}

% \ccsdesc[500]{This needs to be done~at some point}

% %%
% %% Keywords. The author(s) should pick words that accurately describe
% %% the work being presented. Separate the keywords with commas.
\keywords{mean field approximation, mean field models, Markov population processes, mean estimation, heterogeneity}

%%
%% This command processes the author and affiliation and title
%% information and builds the first part of the formatted document.
\maketitle

%%%%%%%%%%%%%%%%%%%%%%%%%%%%%%%%%%%%%%%%%%%%%%%%%%%%%%%%%%
%%%%%%%%%%%%%%%%%%%%%%%%%%%%% Section: Introduction
%%%%%%%%%%%%%%%%%%%%%%%%%%%%%%%%%%%%%%%%%%%%%%%%%%%%%%%%%%

\section{Introduction}

Mean field approximation method is a widely used tool to analyze large-scale and complex stochastic models composed of interacting objects. The idea of the approximation is to assume that objects within the system evolve independently. Following this assumption, interactions of objects in the system are approximated by a ``mean'' behavior, which allows to model the system's evolution as a set of deterministic ordinary differential equations. Mean field approximation finds widespread use in fields such as epidemic spreading \cite{montalbanHerdImmunityIndividual2020,dearrudaFundamentalsSpreadingProcesses2018}, load balancing strategies \cite{mukhopadhyayAnalysisLoadBalancing2015,mitzenmacherPowerTwoChoices2001}, caching \cite{gastTransientSteadystateRegime2015} or SSDs \cite{vanhoudtMeanFieldModel2013}. Building on this approximation, a refined mean field approximation is introduced in \cite{gastRefinedMeanField2018,gastSizeExpansionsMean2019} that greatly improves the accuracy of mean field approximations for populations of $n=10$ to $n=100$ objects. The popularity of mean field approximation lies in the ease of defining and solving the differential equations as well as the increasingly high accuracy for large systems. 

Most of the theoretical work, however, has been done for systems where the interacting objects have homogeneous transitions, as for density-dependent population processes of Kurtz \cite{kurtzStrongApproximationTheorems1978}, or can be clustered into a finite number of groups with similar statistical behavior. Yet, in many models, heterogeneity plays an important role. This is particularly relevant to model caches, where object popularities vary broadly among contents, or epidemic spreading, where variations of sensibility among agents can greatly influence the long-term dynamics and vaccination strategies \cite{gomesIndividualVariationSusceptibility2020}.  Using a finite number of clusters with homogeneous behavior simplifies the underlying models and essentially ignores the actual heterogeneity. Up to now, there are virtually no fully heterogeneous models with theoretical guarantees on why mean field approximation is a valid technique.

In this paper, we generalize the notion of mean field approximation and refined mean field approximation to stochastic systems composed of $n$  heterogeneous objects and show that similar asymptotic results as for the homogeneous case hold. For such a system, we show that it is possible to construct a set of ordinary differential equations (ODEs) which approximate $\Proba{S_k(t)=s}$, the probability for an object $k$ to be in a state $s$ at time $t$. {\color{myorange}This can be used to approximate the expectation of a function of the state of an object (such as the average queue length in a queuing system).}

To give some intuition, the way we construct our approximations is to consider a scaled model with $C$ identical copies of each object. This allows one to define the mean field approximation $x_{(k,s)}(t)$ and a $1/C$-expansion term $v_{(k,s)}(t)$ defined in \cite{gastSizeExpansionsMean2019}. These approximations are shown in \cite{gastSizeExpansionsMean2019} to be asymptotically accurate as the number of copies $C$ goes to infinity. As the fully heterogeneous system corresponds to having one copy, the heuristic reasoning is then to apply the approximation with $C=1$. Up to now, there was no theoretical foundation on why this should work because all results assume that the number of copies $C$ goes to infinity.
%This is what we study in this paper. 

We provide the first rigorous justification of the validity of this approach. The main contribution of our paper is to show that if $x_{(k,s)}(t)$ is the mean field approximation and $v_{(k,s)}(t)$ is the expansion term defined in \cite{gastSizeExpansionsMean2019}, then
\begin{align*}
  \Proba{S_k(t)=s} &= x_{(k,s)}(t) + O(1/n),\\
  \Proba{S_k(t)=s} &= x_{(k,s)}(t) + v_{(k,s)}(t) + O(1/n^2).
\end{align*}
{\color{myorange}As an important corollary, if the system is composed of $C$ copies of $n$ heterogeneous objects, we then have $\Proba{S_k(t)=s} = x_{(k,s)}(t) + (1/C) v_{(k,s)}(t) + O(1/(Cn)^2)$. This shows that the accuracy of the mean field and refined mean field approximation does not depend on the level of heterogeneity of the system but only on the total number of objects (being $n$ or $nC$).}

To do so, we develop a {\color{myorange} heterogeneous interaction model in which each object changes state either unilaterally or by interacting with $d-1$ other objects. The main assumption that we make in our model is that the rate at which a given tuple of $d$ individuals interact scales as $O(1/n^{d-1})$. As there are $O(n^d)$ such tuples, this guarantees a uniform bound on the interactions between tuples.} This model covers an extensive range of models with pairwise interactions, such as infection models, load balancing strategies, or cache replacement policies. These approximations can be computed by solving a differential equation that can be easily integrated numerically. For the mean field approximation, the number of variables grows linearly with the number of different objects $n$. For the refined approximation, it grows quadratically with $n$. Our proposed framework naturally extends mean field models for homogeneous population processes and the results are comparable with \cite{gast2017expected,kurtzStrongApproximationTheorems1978,ying2016rate}. Our approach does not assume any homogeneity in the system and does not cluster objects into a finite number of classes. Hence, it can be applied to interacting systems where all objects are different.

To illustrate our results we provide two examples that show how the mean field and refined mean field approximation can be applied. They also show that the hidden constants in the $O(1/n)$ and $O(1/n^2)$ error terms given by the theorems are small in practice. Our first example is a list-based cache replacement algorithm studied in \cite{gastTransientSteadystateRegime2015} consisting of $n$~objects whose popularities follow a Zipf-like distribution. We study how the cache popularities depend on the replacement policy for the transient and steady-state regime. For transient results, we compare the mean field and refined mean field approximation with simulations, which indicates that the refined mean field provides a significant improvement of accuracy. The results are even more striking for the steady-state regime for which it is possible to compute the exact steady-state distribution if the system size is small. This allows us to compare the accuracy of the two approximations and the simulation to the exact value. We observe that, for any reasonable computational power, the confidence intervals provided by the simulation are higher than the error of the refined mean field approximation as soon as $n$ exceeds a few tens. In a second example, we apply the approximation techniques to a heterogeneous two-choice load balancing model. The heterogeneity in the model is introduced by considering varying server rates. As for the previous examples, we give a full description justifying the use of the mean field models and show by numerical computation that the obtained results confirm the theoretical statements. Numerical calculations for values such as the average queue length and queue length tail distribution are given. We also compare the approximation results to a homogeneous variant of the system where the server rates are set to the average server rate of the heterogeneous model. This shows that, as expected, taking heterogeneity into account strikingly improves the accuracy of the results. For both examples, we adapt numerical methods from the toolbox developed by \cite{gastRefinedMeanField2018}, which allowed us to implement and solve the differential equations with relative ease.

\paragraph*{Roadmap} The remainder of this paper is organized as follows. We describe more related work in Section~\ref{sec:related_work}. We introduce the heterogeneous interaction model in Section~\ref{sec:model}. We define and study the accuracy of the approximation in Section~\ref{sec:main_results}. This section contains the main results of the paper. We present the two numerical examples in Section~\ref{sec:numerical}. The proofs are given in Section~\ref{sec:proofs}. Finally, we conclude in Section~\ref{sec:conclusion}.  Some technical lemmas are postponed to the appendix.

% With the examples we illustrate that the proposed framework can be used in practice and that the heterogeneous mean field and refined mean field approximation can be computed with relative ease. The numerical examples also confirm the theoretical results and show that the hidden constant is small. 

%%%%%%%%%%%%%%%%%%%%%%%%%%%%%%%%%%%%%%%%%%%%%%%%%%%%%%%%%%%%
%%%%%%%%%%      Related Work    %%%%%%%%%%%%%%%%%%%%%%%%%%%%
%%%%%%%%%%%%%%%%%%%%%%%%%%%%%%%%%%%%%%%%%%%%%%%%%%%%%%%%%%%%
\section{Related Work}
\label{sec:related_work}

\paragraph*{Generator and Stein's method}
Our paper builds on the recent line of work regarding the use of Stein's method \cite{stein1986approximate}. This method allows one to estimate precisely the distance between two random variables by looking at the distance between the generators of two related stochastic systems. This method has seen a resurgence of interest in the stochastic network community since the work of \cite{braverman2017stein,braverman2017stein2}. There is still an active development in this area. For instance, this method has been used to develop higher-order diffusion approximations \cite{bravermanHighOrderSteadystate2020,bravermanSteinMethodSteadystate2017a}. It is used in \cite{hodgkinsonNormalApproximationsDiscretetime2018} to develop a normal approximation of a heterogeneous discrete time population process. One of the key differences between our work and theirs is that the two aforementioned papers consider one-dimensional processes (i.e., the state of each object of the system is either $0$ or $1$), and the extension to more complex dynamics is not direct, at least from a computational point of view. One contribution of our work is to demonstrate how to deal with multiple states, by changing the state representation. 

\paragraph*{Refined mean field methods} Stein's method has been successfully used to study the accuracy of mean field methods in \cite{gast2017expected,kolokoltsovMeanFieldGames2012,ying2016rate,ying2017stein}. These works show that the accuracy of the mean field approximation is $O(1/n)$ for a system with $n$ \emph{homogeneous} objects. By using an expansion of the generator, these results have been extended in \cite{gastSizeExpansionsMean2019,gastRefinedMeanField2017} to propose what the authors called a \emph{refined} mean field approximation, that is similar to the system size expansion introduced in mathematical biology \cite{vankampen2007,grima2010effective,grima2011}. 

In fact, there exists a close link between the refined approximation that we propose in the paper and the approach of \cite{gastSizeExpansionsMean2019,gastRefinedMeanField2017}. Since the results of \cite{gastSizeExpansionsMean2019,gastRefinedMeanField2017} only apply to systems composed of homogeneous objects, let us consider a hypothetical model composed of $C$ identical copies of each of the $n$ object, and let us denote by $X_{(k,s)}\ncopies(t)$ the number of copies of the object $k$ that are in state $s$ at time $t$. By \cite{gastSizeExpansionsMean2019,gastRefinedMeanField2017}, there exists a constant $v_{(k,s)}(t)$ such that 
\begin{align}
  \label{eq:copies}
  \esp{X_{(k,s)}\ncopies(t)} = x_{(k,s)}(t) + \frac1C v_{(k,s)}(t) + O(1/C^2).
\end{align}
Since our original model corresponds to $C=1$, the rationale behind our approximation is to use $x_{(k,s)}(t)$ as a first order approximation and $x_{(k,s)}(t) + v_{(k,s)}(t)$ as the refined approximation. Yet, this is no a priori guarantee of why $O(1/C^2)$ should be small for $C$ equal to one. As a key technical contribution, our paper gives the theoretical foundation of this method. 
{\color{myorange}
This requires to overcome several difficulties, that are our main contributions compared to the aforementioned papers:
\begin{enumerate}%[leftmargin=3pt]
  \item We define an interaction model that can be dealt with by bounding the interaction rates. 
  \item A key idea of our work is to use the indicators $X_{k,s}(t)\in\{0,1\}$ and not the proportion of objects in a given state. This allows us to construct the expansion terms and do the proofs.
  \item Similar to previous works \cite{gast2017expected,kolokoltsovMeanFieldGames2012,ying2015rate}, we use generators to reduce the analysis of the mean field error to the study of the sensibility of an ODE with respect to its initial conditions (Section~\ref{sec:proof_lemma}). The extra difficulty in our case is to carefully analyze the remainder terms: while to analyze \eqref{eq:copies}, the error is a finite sum of $O(1/C)$ terms, here we have a sum of $n$ (or $n^2$) terms, some of them being of order $O(1/n^2)$ and others being of order $O(1/n^3)$. Dealing with all these different cases requires quite some care and is the subject of Appendix~\ref{apx:proofs}.
\end{enumerate}
}

% The key technical contribution of our paper is to prove that this method is theoretically grounded.
Note that to study heterogeneous population models, it is quite common to assume that there are $n$ classes with $C$ copies of the same objects of class $k\in\{1\dots n\}$. This has been used for instance to study load-balancing strategies \cite{mukhopadhyayAnalysisLoadBalancing2015} or cache replacement policies \cite{hirade1,tsukada1}. Our approach generalizes such methods as we assume that the objects can be fully heterogeneous. This is for instance what is used in replica models \cite{mezard1987sk,baccelli2019replica}. 

\paragraph*{Heterogeneous populations and caches} In the performance evaluation community, heterogeneous population models are very common when studying cache replacement policies, where the popularity of objects is typically assumed to follow a Zipf-like distribution. As the dynamics of caches are intrinsically complicated, many mean field like approximations have been proposed, such as the famous TTL-approximation of \cite{fagin1977asymptotic} (sometimes misleadingly called the Che-approximation after it was rediscovered in \cite{che2001analysis}) or fixed-point approximation like \cite{dan1990approximate}. Theoretical support exists to prove that these approximation are asymptotically correct \cite{fricker2012versatile,jiang2018convergence,gastTransientSteadystateRegime2015}. 

The generic method that we propose in this paper has two advantages: First, we prove that the accuracy of the mean field method is $O(1/n)$ whereas the above papers only obtain bounds in $O(1/\sqrt{n})$. Second, we develop a refined approximation that can greatly improve the accuracy compared to the cited method, at the price of being computationally more expensive.

{\color{myorange}
In particular, our result applies directly to the cache replacement model of \cite{gastTransientSteadystateRegime2015} in which a mean field approximation for the RAND($\bm$) policy is derived. This paper also contains a theorem that shows that the mean field approximation is $O(1/\sqrt{n})$-accurate. Yet, we do not think that the proof of the main result of \cite{gastTransientSteadystateRegime2015} is correct because of the use of a martingale inequality combined with the infinite-norm (and not a $L_2$-norm). More precisely, we believe that the problem in their proof is just below their Equation (13) when Lemma 1(ii) is used.  What their Lemma 1 implies is that $M(t)$ is a Martingale such that $\esp{ \norm{M(t+1)-M(t)}^2} \le c$. This is used below their Equation (13) to imply that $\esp{\norm{M(t)}^2} \le ct$. The problem is that this is true if the norm $\norm{M}$ is a $L_2$ norm (or any norm that can be written as a scalar product $\norm{M}^2 =\langle M,M\rangle$ ) but not if $\norm{M}$ it is a supremum norm. The norm used in \cite{gastTransientSteadystateRegime2015} is a supremum norm and we do not believe it can be derived from a scalar product. The approach that we take in this paper is radically different as we work with a comparison of generators. This allows us to correct the proof of \cite{gastTransientSteadystateRegime2015} by obtaining a tighter bound. Note that we do not claim that their result is false, but only that the proof is false. We explain in Appendix~\ref{apx:cache_adaptation2} that their result is a consequence of ours (and that our results give a finer bound).
}

%%%%%%%%%%%%%%%%%%%%%%%%%%%%%%%%%%%%%%%%%%%%%%%%%%%%%%%%%%
%%%%%%%%%%%%%%%%%%%%%%%%%%%%% Section: Model 
%%%%%%%%%%%%%%%%%%%%%%%%%%%%%%%%%%%%%%%%%%%%%%%%%%%%%%%%%%

\section{The Heterogeneous Population Model}
\label{sec:model}

%In this section, we define the set of models to which our results apply. 
%We first start by describing the interaction model. 

\subsection{Interaction Model}
\label{ssec:interaction_model_and_definition}

We consider a population model composed of $n$ interacting objects. Each object evolves in a finite\footnote{The fact that objects share the same state space is done without loss of generality as we do not assume that Markov chains are irreducible. } state space $\calS$. The state of the $k$-th object at time $t$ is denoted by $S\toN_k(t)\in\calS$ and the state of the system at time $t$ is given by $\bSn(t)=(\Sn_1(t), \dots, \Sn_n(t))\in \calS^n$. We assume that the stochastic process $\bSn=(\bSn(t))_{t\ge0}$ is a continuous time Markov chain (CTMC) whose transitions are the results of interactions between objects. 
\color{myorange}
More precisely, we assume for any tuple of $d$ objects $\bk=(k_1,..,k_{d})$, that these objects jump \emph{simultaneously}\footnote{Note that for the transitions caused by interactions, we do not impose that all objects jump: we may have $s_i=s_i', i\in \{1,\dots,d\}$ in which case some objects keep their state.} from their states $\bs=(s_1, \dots,s_{d})$ to $\bs'=(s_1', \dots, s_{d}') \ne (s_1,\dots,s_{d})$ at rate $\frac{1}{dn^{d-1}}r_{\bk, \bs \rightarrow \bs'}\toN$. All such interactions occur independently. We also assume that $d\le d_{\max}$ is a constant independent of $n$, i.e. the maximal amount of interacting objects does not scale with the system size. 

Throughout the paper, we will refer to such a model as a \emph{heterogeneous population model}.  Note that while a transition can affect up to $d_{\max}$ objects, all the examples studied in the paper will be with $d_{\max}=2$ for which there are two types of transitions:
\begin{itemize}
  \item $d=1$: An object jumps without interacting with others. We call this a unilateral transition.
  \item $d=2$: Two objects interact. We call it a pairwise interaction. 
\end{itemize}

The critical assumption of our model is that the interactions between $d$ objects scale as $O(\frac{1}{n^{d-1}})$. In particular, the rates of unilateral transitions scale like $O(1)$ and the one of pairwise interactions like $O(1/n)$.  This $1/n^{d-1}$ factor is here because there are $O(n^{d-1})$ tuples of $d$ objects. Hence, our condition implies that the total rate of transitions is $O(n)$ and that one tuple cannot have much higher rates than other tuples.

To simplify notations, we assume that for any permutation $\sigma$ of the set $\{1\dots d\}$, the rate satisfy $r_{k_1, \dots, k_{d}, (s_1, \dots, s_{d}) \rightarrow (s_1', \dots, s_{d}')}\toN = r_{k_{\sigma(1)}, \dots, k_{\sigma(d)}, (s_{\sigma(1)}, \dots, s_{\sigma(d)}) \rightarrow (s_{\sigma(1)}', \dots, s_{\sigma(d)}')}\toN$. This does not imply that objects are homogeneous but should be seen at a notation artifact. An alternative notation would be to consider tuples such that $k_1<k_2<\dots <k_d$ and to multiply all rates by $d!$. This would lead to the same model but at the price of heavier notations.

% that when we say that objects "1 and 2"  objets $1$ and $2$ jumps from $(s_1,s_2)$ to $(s'_1,s'_2)$, 

% Hence, for example, if a model has only unilateral and pairwise interactions an object $k_1$ that is in state $s_1$ makes a transition at rate
% \begin{align*}
%   & \sum_{s_1'} r_{k_1, (s_1) \rightarrow (s_1')}\toN + \frac{1}{2n}\sum_{k_2, s_2, s_1', s'_2} r_{k_1,k_2,(s_1,s_2) \rightarrow (s_1',s_2')}\toN + r_{k_2,k_1,(s_2,s_1) \rightarrow (s_2',s_1')}\toN \\
%   & = \sum_{s_1'} r_{k_1, (s_1) \rightarrow (s_1')}\toN + \frac{1}{n}\sum_{k_2, s_2, s_1', s_2'} r_{k_1,k_2,(s_1,s_2) \rightarrow (s_1',s'_2)}\toN \ ,
% \end{align*}
% The equality follows due to equality of permutations of the rates. As formalized in Equation~\eqref{eq:bound}, the only condition that we impose on our model is that the transition rates $r\toN_{k,k_1,...,(s,s_1,...) \rightarrow (s',s_1',\hdots)}$ remain bounded with $n$. This implies that the rates at which an object $k$ jumps from a state $s$ is bounded.
\color{black}

\subsection{State Representation}
\label{ssec:state_representation}
The key element of our analysis is to use an alternative, binary based, representation of the state space. For an object $k\in\{1\hdots n\}$ and a state $s\in\calS$, we define $\Xn_{(k,s)}(t)$ as
\begin{align*}
  \Xn_{(k,s)}(t) = \ind{S_k\toN(t)=s} &:=\left\{\begin{array}{ll}
    1 &\text{if object $k$ is in state $s$ at time $t$}, \\
    0 &\text{otherwise.}
  \end{array}\right.
\end{align*}
The state of the system is described by $\bXn(t)=(\Xn_{(k,s)}(t))_{k\in\{1\hdots n\},s\in\calS}$ with $\calX\toN\subset\{0,1\}^{\ns}$ being the set of attainable states for $\bXn=(\bXn(t))_{t\geq0}$. In particular, for all $\bx\in\calX\toN$, one has $\sum_{s\in\calS} x_{(k,s)} = 1$ which follows from the fact that an object can only be in one state at any time. The notation $\bXn$ is less compact than the original representation $\bSn$ but will allow for an easier definition of the mean field and refined mean field approximation. 

The transitions of the model can all be expressed in terms of $\bX\toN$. Let $\be\toN_{(k,s)}$ denote a matrix of size $n\times |S|$ whose $(k,s)$ component is equal to $1$, all others being equal to $0$. If object $k$ transitions from state $s$ to state $s'$, $\bXn$changes into $\bXn+\be\toN_{(k,s')}-\be\toN_{(k,s)}$. Hence, expressed as a function of $\bXn$, the process $\bX^{(n)}$ jumps to (for $k_1,k_2, \hdots \in\{1\hdots n\}$ and $s_1,s_1',s_2,s_2',\hdots \in\calS$):
\color{myorange}
\begin{subequations}
  \begin{align}
    &\bX^{(n)} {+} \be\toN_{(k_1,s_1')}{-}\be\toN_{(k_1,s_1)} && \text{ at rate $r_{k_1,(s_1)\rightarrow (s_1')}\toN X_{(k_1,s_1)}^{(n)}$}\label{eq:unilateral}\\
    &\bX^{(n)} {+} \be\toN_{(k_1,s_1')}{-}\be\toN_{(k_1,s_1)} + \be\toN_{(k_2,s_2')}{-}\be\toN_{(k_2,s_2)} && \text{ at rate $\frac{1}{2n}r_{k_1,k_2,(s_1,s_2) \rightarrow (s_1',s_2')}\toN X_{(k_1,s_1)}^{(n)}X_{(k_1,s_1)}^{(n)}$}\label{eq:interactions} \\
    &\bX^{(n)} {+} \be\toN_{(k_1,s_1')}{-}\be\toN_{(k_1,s_1)}{+}\hdots{+}\be\toN_{(k_{d},s_{d}')}{-}\be\toN_{(k_{d},s_{d})} &&\text{ at rate $\frac{1}{d n^{d-1}}r_{k_1,\hdots,k_{d},\bs \rightarrow \bs'}\toN X_{(k_1,s_1)}^{(n)}\hdots X_{(k_{d},s_{d})}^{(n)}$} \label{eq:d-type_interactions}
\end{align}
\label{eq:transitions}
\end{subequations}
In the above equations \eqref{eq:unilateral} corresponds to an unilateral transition of object $k_1$ from $s_1$ to $s_1'$, \eqref{eq:interactions} corresponds to transitions caused by a pairwise interaction between object $k_1$ and $k_2$ and \eqref{eq:d-type_interactions} describes the general form of the transitions of $d$ interacting objects. Recall that we assume that $(s_1,s_2,\hdots)\ne(s_1',s'_2,\hdots)$ but we do not necessarily assume all states change, i.e. a pairwise interaction might result in either one object changing state or two objects that change state simultaneously.
\color{black}

\subsection{Main Notations}

Throughout the paper, we use bold letters (like $\bX\toN,\bx,\hdots$) to denote matrices and regular letters (like $X_{(k,s)}\toN,x_{(k,s)},n,\hdots$) to denote scalars. Capital letters (like $\bX\toN, \bS\toN$) denote random variables whereas lower case letters $(\bx,v_{(k,s)}\toN,\hdots$) are for deterministic values. The indices $k,k_1,k',\hdots$ are reserved for objects while $s,s_1,s'\hdots$ are reserved for the states. 

In the results below, when we write that a quantity $h$ satisfies $h = O(1)$ or $h=O(1/n)$, this means that there exists a constant $C$ independent of $n$ such that $h\le C$ or $h\le C/n$. In general, these constants do depend on other parameters of the problem (like $|\calS|$, $\bar{r}$, or $t$).

%A comprehensive table of notation is available in Appendix~\ref{apx:notations}. 

%%%%%%%%%%%%%%%%%%%%%%%%%%%%%%%%%%%%%%%%%%%%%%%%%%%%%%%%%%
%%%%%%%%%%%%%%%%%%%%%%%%%%%%% Section: Mean field 
%%%%%%%%%%%%%%%%%%%%%%%%%%%%%%%%%%%%%%%%%%%%%%%%%%%%%%%%%%
\section{Main Results}
\label{sec:main_results}

In this section, we define the mean field and refined mean field approximation for the heterogeneous system and formulate the corresponding theorems.

\subsection{Drift and Mean Field Approximation}
\label{ssec:drift_mean_field}

% For convenience of notation, we will denote the set of possible transitions as $\calL=\Big\{\be_{(k,s)}-\be_{(k,s')}\Big\}_{k\in\{1\dots n\},s,s'\in\calS}\cup \Big\{\be_{(k,s)}-\be_{(k,s')}\Big\}_{k\in\{1\dots n\},s,s'\in\calS}$

For a given state $\bx\in\calX\subset \{0,1\}^{\ns}$, we define the drift of the system in $\bx$ as the expected variation of the stochastic process $\bX\toN$ at time $t$: $\bdrift\toN(\bX\toN(t))=\lim_{dt\to0}\frac{1}{dt}\esp{\bX\toN(t+dt)-\bX\toN(t) \mid \bX\toN(t)=\bx}$. \color{myorange} Based on the transitions \eqref{eq:transitions}, if $d\le 2$ (\emph{i.e.}, only unilateral and pairwise interactions)\footnote{We give the general drift definition in Appendix \ref{apx:drift_def}}, the $(k,s)$ component of the drift can be expressed as: 
\begin{align}
  & \sum_{s' \ne s} (r\toN_{k,(s') \rightarrow (s)} x_{(k,s')} - r\toN_{k,(s) \rightarrow (s')} x_{(k,s)})  \label{eq:drift}\\
  & \quad + \frac1n \sum_{s',k_1, s_1,s_1'} (r\toN_{k,k_1,(s',s_1')\rightarrow (s,s_1)} x_{(k,s')}x_{(k_1,s_1)} - r\toN_{k,k_1,(s,s_1) \rightarrow (s',s'_1)} x_{(k,s)}x_{(k_1,s_1')}) 
  \nonumber
  % & \qquad+ \sum_{d=3,...,d_{\max}}^{} \frac{1}{n^{d-1}}\sum_{(k_2\dots,k_d), s_2\dots,s_d, s'_1\dots s_d} r_{k,k_2\dots k_d,(s',s_2\hdots,s_{d-1}') \rightarrow (s,\hdots,s_{d-1})}\toN x_{(k,s')}^{(n)}\hdots x_{(k_{d-1},s_{d-1}')}^{(n)} \nonumber \\
  % & \phantom{++ + \sum_{d=3}^{n} \frac{1}{n^{d-1}}\sum_{{k_1,...,k_{d-1} \atop s_1,...,s_{d-1}} \atop s', s_1', \hdots, s_{d-1}'}++++++} - r\toN_{k,\hdots,k_{d-1},(s,\hdots,s_{d-1}) \rightarrow (s',\hdots,s_{d-1}')} x_{(k,s)}^{(n)}\hdots x_{(k_{d-1},s_{d-1})}^{(n)} \nonumber
\end{align} 
The first term corresponds to unilateral transitions while the second term corresponds to transitions caused by pairwise interactions. Note that compared to \eqref{eq:interactions}, there seems to be an extra factor $2$ in front of the pairwise interactions. This is not an error and it is due to the fact that we fixed the position of $s$ in the above equation.

\color{black}Note that if the conditional expectation is only defined for $\bx\in\calX\toN\subset\{0,1\}^{\ns}$, the above expression \eqref{eq:drift} can be extended to a function $f\toN:\conv{\calX\toN}\subset[0,1]^{\ns}\to\R^{\ns}$, where $\conv{\calX\toN}$ denotes the convex hull of $\calX$. For a given initial condition $\bx \in \calX\toN$, we define the mean field approximation of the heterogeneous population model as the solution  of the ODE $\frac{d}{dt} \bphi\toN(\bx,t) = f\toN(\bphi\toN(\bx, t))$ that starts in $\bx =\bX\toN(0)$, and we denote by $\bphi\toN(\bx, t)$ the value of this solution at time $t$. The solution is unique as $f\toN$ is Lipschitz-continuous (all elements of $f\toN$ are polynomials) and $\bphi(\bx, t)$ takes values in a bounded set. Notice from the definition of the ODE, that $\phi\toN_{(k,s)}\in[0,1]$, and  $\sum_{s} \phi\toN_{(k,s)}(\bx,t) = 1$ for all $t$ and $\bx$ in $\conv{\calX\toN}$. 

\subsection{Accuracy of the Mean Field Approximation}

\color{myorange}
To obtain asymptotic properties we require that there exists a uniform bound $\bar{r}$, independent of $n$, such that for all $s_1,s_1',s_2,s_2',... \in \calS$ and $k_1,k_2,\hdots \in \{1,\ldots, n\}$ we have:
\begin{align}
  r\toN_{k_1,\hdots,k_d, (s_1,\hdots,s_d)\rightarrow(s_1',\hdots,s_d')} \leq \bar{r}. \label{eq:bound:r}
\end{align}

% To obtain asymptotic properties we require the existence of uniform bounds $\bara, \barb$, independent of $n$, for the transition rate parameters such that for all $s,s',s_1,s_1' \in \calS$; $k,k_1 \in \{1,\ldots, n\}$
% \textcolor{myorange}{modify}
% \begin{subequations}
% \begin{align}
% & a\toN_{k,s,s'\phantom{,k_1,s_1,s_1'}} \leq \bara, \label{eq:bound:a}\\ % phantom for same alignment
% & b\toN_{k,k_1,s,s_1,s',s_1'} \leq \barb. \label{eq:bound:b}
% \end{align}
% \label{eq:bound}
% \end{subequations}
% Note that by definition of the initial value problem and the drift being a polynomial of degree two, the differentiability of the solution $\bphi$ w.r.t. the initial condition and time is given.

\begin{theorem}
  \label{th:MF}
  Assume that the model, defined in Section \ref{ssec:state_representation}, satisfies \eqref{eq:bound:r}. Let $\bphi\toN(\bx, t)$ be the solution of the ODE introduced in Section \ref{ssec:drift_mean_field} with initial condition $\bX\toN(0)= \bx \in \calX\toN $ and drift $f\toN$. Then, for $(k,s)\in \{1,\ldots,n\}\times \calS$ and $t < \infty$,
  \begin{align}
    \label{eq:MF_theorem}
    \Proba{S_k(t)=s} &= \esp{X\toN_{(k,s)}(t) } = \phi\toN_{(k,s)}(\bx,t) + O(1/n).
  % \E[X\toN_{(k,s)}(t)\mid X\toN(0)=x] &  = \phi\toN_{(k,s)}(\bx,t) + O(1/n).
  \end{align}
\end{theorem}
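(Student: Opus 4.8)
The plan is to bound the error through a comparison of generators (Stein's method), reducing everything to a sensitivity analysis of the mean field ODE. Write $\calL\toN$ for the generator of $\bX\toN$ read off from \eqref{eq:transitions}: for smooth $h$, $\calL\toN h(\bx)=\sum_\ell \beta_\ell(\bx)\,\p{h(\bx+\Delta_\ell)-h(\bx)}$, where $\ell$ indexes the transitions, $\Delta_\ell$ is the corresponding jump (supported on at most $d_{\max}$ rows, with entries in $\{-1,0,1\}$) and $\beta_\ell(\bx)$ its rate. Crucially $\sum_\ell \beta_\ell(\bx)\Delta_\ell = f\toN(\bx)$ is exactly the drift. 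Fix a horizon $T$ and, for the tracked pair $(k,s)$, set
\[
  G(t)=\esp{\phi\toN_{(k,s)}\p{\bX\toN(t),\,T-t}},\qquad t\in[0,T].
\]
Since $\bX\toN(0)=\bx$ is deterministic, $G(0)=\phi\toN_{(k,s)}(\bx,T)$, while $G(T)=\esp{\phi\toN_{(k,s)}(\bX\toN(T),0)}=\esp{X\toN_{(k,s)}(T)}$. Hence the error in \eqref{eq:MF_theorem} equals $\int_0^T G'(t)\,dt$.

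First I would compute $G'$ via Dynkin's formula, $G'(t)=\esp{\p{\partial_t+\calL\toN}\phi\toN_{(k,s)}(\cdot,T-t)(\bX\toN(t))}$. Differentiating the semigroup identity $\bphi\toN(\bphi\toN(\bx,r),u)=\bphi\toN(\bx,r+u)$ at $r=0$ gives the flow relation $\langle \nabla_\bx\phi\toN_{(k,s)}(\bx,u),f\toN(\bx)\rangle=f\toN_{(k,s)}(\bphi\toN(\bx,u))$, so that $\partial_t\phi\toN_{(k,s)}(\bx,T-t)=-f\toN_{(k,s)}(\bphi\toN(\bx,T-t))=-\langle\nabla_\bx\phi\toN_{(k,s)}(\bx,T-t),f\toN(\bx)\rangle$. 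Expanding $\calL\toN$ with an exact first-order Taylor formula (Lagrange remainder, legitimate since $\phi\toN_{(k,s)}$ is scalar) and using $\sum_\ell\beta_\ell\Delta_\ell=f\toN$, the first-order term is precisely $\langle\nabla_\bx\phi\toN_{(k,s)},f\toN(\bx)\rangle$ and cancels $\partial_t$, leaving only the second-order remainder:
\[
  G'(t)=\tfrac12\,\esp{\sum_\ell \beta_\ell(\bX\toN(t))\,\Delta_\ell^\top\, \nabla^2_\bx\phi\toN_{(k,s)}(\xi_\ell,T-t)\,\Delta_\ell},
\]
with each $\xi_\ell$ on the segment between $\bX\toN(t)$ and $\bX\toN(t)+\Delta_\ell$, hence in $\conv{\calX\toN}$ where $\phi\toN$ is smooth.

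The heart of the proof is to show this is $O(1/n)$ uniformly, which rests on a sensitivity lemma bounding the second derivatives of the flow in its initial condition, with a hierarchy governed by whether the tracked object $k$ is among the differentiation indices:
\[
  \max_{b,b'}\abs{\frac{\partial^2\phi\toN_{(k,s)}}{\partial x_{(a,b)}\partial x_{(c,b')}}}=
  \begin{cases}
    O(1/n) & \text{if } a=k \text{ or } c=k,\\
    O(1/n^2) & \text{if } a\ne k \text{ and } c\ne k,
  \end{cases}
\]
uniformly on $\conv{\calX\toN}$ and over $T-t\in[0,T]$, together with the first-derivative bounds $O(1)$ for $a=k$ and $O(1/n)$ for $a\ne k$. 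I would prove these by differentiating the ODE $\tfrac{d}{du}\bphi\toN=f\toN(\bphi\toN)$ once and twice in $\bx$: the Jacobian of $f\toN$ is block-diagonal by object up to the $O(1/n)$ pairwise coupling in \eqref{eq:drift}, so cross-object influence must traverse the weak coupling and each differentiation index distinct from $k$ costs an extra factor $1/n$; a Gr\"onwall argument on suitable aggregated norms of the variational/second-variational equations, finite since $T<\infty$, closes the estimates. This delicate bookkeeping, distinguishing the coincidence patterns of indices, is the main obstacle and the reason the authors relegate it to Appendix~\ref{apx:proofs}.

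Granting the lemma, a direct count finishes the proof. A unilateral transition of object $k_1$ (rate $O(1)$, jump on row $k_1$) weights $\partial^2_{(k_1,\cdot),(k_1,\cdot)}\phi\toN_{(k,s)}$: this is $O(1/n)$ for the single term $k_1=k$ and $O(1/n^2)$ for the $O(n)$ terms $k_1\ne k$, each summing to $O(1/n)$. A pairwise transition of $\{k_1,k_2\}$ (rate $O(1/n)$) weights the four blocks $\partial^2_{(k_i,\cdot),(k_j,\cdot)}\phi\toN_{(k,s)}$, $i,j\in\{1,2\}$: for the $O(n)$ pairs meeting $k$ the dominant block is $O(1/n)$, giving $O(n)\cdot O(1/n)\cdot O(1/n)=O(1/n)$, while for the $O(n^2)$ pairs avoiding $k$ every block is $O(1/n^2)$, giving $O(n^2)\cdot O(1/n)\cdot O(1/n^2)=O(1/n)$. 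Summing over transition types bounds $G'(t)$ by $O(1/n)$ uniformly in $t\in[0,T]$, and integrating yields $\esp{X\toN_{(k,s)}(T)}-\phi\toN_{(k,s)}(\bx,T)=O(1/n)$, which is \eqref{eq:MF_theorem} once we note $\esp{X\toN_{(k,s)}(T)}=\Proba{S_k(T)=s}$.
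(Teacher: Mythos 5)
Your proposal is correct and follows essentially the same route as the paper: the interpolation $G(t)=\esp{\phi\toN_{(k,s)}(\bX\toN(t),T-t)}$ is exactly the paper's $\bpsi$ from Lemma~\ref{lemma:generator_comparison}, the cancellation of the first-order Taylor term against $D_\bx\bphi\, f$ via the flow identity reproduces the argument of Section~\ref{sec:proof_mf}, and your sensitivity hierarchy ($O(1)$/$O(1/n)$ for first derivatives, $O(1/n)$/$O(1/n^2)$ for second derivatives according to whether an index hits object $k$, proved by Gr\"onwall on the variational equations) together with the final transition-by-transition count is precisely the content of Lemmas~\ref{lemma:properties_phi} and \ref{lemma:taylor_remainder}. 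The only cosmetic difference is your use of the Lagrange-form remainder where the paper uses the integral form, which changes nothing.
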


% \begin{theorem}
%   \label{th:MF}
%   Assume that the model, defined in Section \ref{ssec:state_representation}, satisfies \eqref{eq:bound:a} and \eqref{eq:bound:b}. Let $\bphi\toN(\bx, t)$ be the solution of the ODE introduced in Section \ref{ssec:drift_mean_field} with initial condition $\bX\toN(0)= \bx \in \calX\toN $ and drift $f\toN$. Then, for $(k,s)\in \{1,\ldots,n\}\times \calS$ and $t < \infty$,
%   \begin{align}
%     \label{eq:MF_theorem}
%     \Proba{S_k(t)=s} &= \esp{X\toN_{(k,s)}(t) } = \phi\toN_{(k,s)}(\bx,t) + O(1/n).
%   % \E[X\toN_{(k,s)}(t)\mid X\toN(0)=x] &  = \phi\toN_{(k,s)}(\bx,t) + O(1/n).
%   \end{align}
% \end{theorem}

\color{black}

\begin{proof}[Key ideas of the proof]
  A complete proof is provided in Section~\ref{sec:proof_mf}. Here we only give a brief overview. Note that by definition of the initial value problem and the drift being a polynomial, the differentiability of the solution $\bphi$ w.r.t. the initial condition and time is given. In the first part of the proof, we borrow ideas from \cite{kolokoltsovMeanFieldGames2012,gast2017expected} and use a Taylor's expansion of $\bphi\toN$ to show that $\Proba{S_k(t)=s} - \phi\toN_{(k,s)}(\bx,t)$ can be bounded by a weighted sums of $\frac{\partial^2 \phi_{(k,s)}(\bx,t)}{\partial x_{(k',s')} \partial x_{(k'',s'')}}$.  The key technical difficulty of the proof is then to show that these terms are small when the number of objects is large. To do so, we distinguish the cases where $k'$ and $k''$ refer to the same object as $k$ or not. 
\end{proof}

The statement of the theorem can be interpreted as saying that the probability of an object~$k$ in the Markov chain $\bX$ to be in state $s$ is approximated by $\bphi_{(k,s)}$ with an accuracy of $O(1/n)$. Indeed, with this result we can obtain similar statements as for the homogeneous case where in many cases asymptotic results are  proven for $\Zn_s(t) = \frac1n\sum_{k=1}^n X_{(k,s)}(t)$, the stochastic process describing the fraction of objects which are in state $s$. It should be noted that the solution of the ODE $\bphi_{(k,s)}$, taking values in $[0,1]$, is not close to the value of $\bX_{(k,s)}$, which indicates if object~$k$ is in state $s$ and takes the values zero or one. Hence, single trajectories of the stochastic process are not comparable to the approximation. 

To illustrate this result, let us consider a cache model with a total of $n=4$ objects and a cache  that can store $2$ objects. We assume that the popularities follow a Zipf distribution of parameter $0.8$, meaning that object $k$ is requested at rate $\lambda_k=1/k^{0.8}$ and use the RANDOM replacement policy. The policy exchanges objects the following way: When an object is requested and inserted in the cache, we evict another object picked uniformly at random among the two objects in the cache. Initially, the cache contains the objects $3$ and $4$.

In Figure~\ref{fig:cache_illustration}, we plot the behavior of the cache as a function of time. Each plot corresponds to a different object and contains three curves: In grey we plot one stochastic trajectory of the cache, $X_{(k,\text{in})}(t)$, where $X_{(k,\text{in})}=1$ means that the object~$k$ is in the cache and $0$ that it is not. In blue, we plot the probability for object $k$ to be in the cache at time $t$, $\Proba{\text{object $k$ in cache at time $t$}} = \esp{X_{(k,\text{in})}(t)}$, which is computed by averaging over $1000$ trajectories.  In green, we plot the solution of the mean field approximation, $\phi\toN_{(k,\text{in})}(\bx,t)$. We emphasize that $X_{(k,\text{in})}(t)$ is never close to $\phi\toN_{(k,\text{in})}(\bx,t)$, because the former can only take the values $0$ and $1$ whereas the latter takes values between $0$ and $1$. Moreover, the latter, which is the mean field approximation, seems to provide a very good approximation for the object to be in the cache. 
\begin{figure}[ht]
  \centering
  \includegraphics[width=\linewidth]{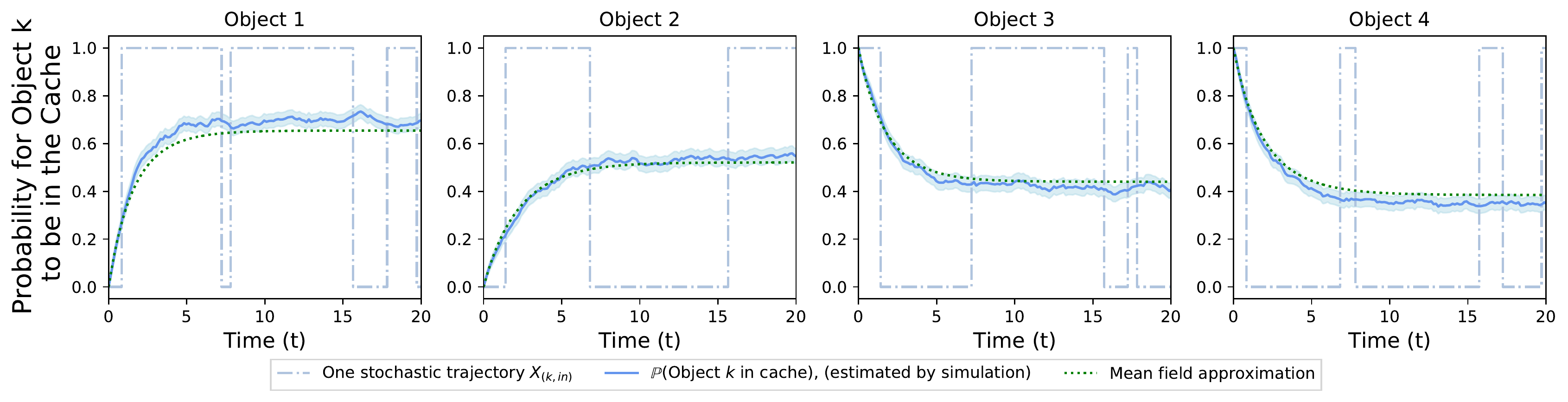}
  \caption{Behavior of the RANDOM policy for a cache of size two and a total of four objects. For each of the four objects, we compare the stochastic system with the mean field approximations.}
  \label{fig:cache_illustration}
\end{figure}

\subsection{Accuracy of the Refined Mean Field Approximation}
\label{ssec:rmf:definition}
\color{myorange}
In \cite{gastSizeExpansionsMean2019,gastRefinedMeanField2017}, Gast et al. introduced a refined mean field approximation which provides significantly more accurate approximations in the homogeneous case. The idea is to study higher moments of $X_{(k,s)} - \phi_{(k,s)}$ and to derive refinement terms. Taking the refinements into account is especially important for small to moderate system sizes, i.e. $n\approx10-50$, where the mean field approximation does not capture the dynamics of the stochastic system well. In this section, we show how to derive the refinement for our heterogeneous model.

%In this work, we restrict our analysis to the refinement $\bv$ derived from the first moment. Since in the original paper a different state representation is used, we show how to adapt the idea of the refined approximation to our heterogeneous population model and derive analytical bounds justifying its usage. In fact, our approximation can be constructed by using the one of \cite{gastSizeExpansionsMean2019,gastRefinedMeanField2017} as follows. 
To construct the refinement term, we consider an (imaginary) system in which there are $C$ replicas of the same object. Let us denote $X\ncopies_{(k,s)}(t)$ the fraction of replicas of type $k$ that are in state $s$ at time $t$. The process $\mathbf{X}\ncopies$ is a density-dependent population process and our original process is given by $\bX=\bX\onecopy$. The mean field approximation of $\bX\ncopies$ is also $\bphi\toN$.  The idea of \cite{gastSizeExpansionsMean2019} is to study the stochastic fluctuation of $X\ncopies_{(k,s)}(t)$ around its mean field approximation. The authors show that there exists a set of deterministic values $v\toN_{(k,s)}(t)$ such that
\begin{align}
  \label{eq:rmf_Ncopies}
  \esp{\mathbf{X}\ncopies_{(k,s)}(t)} = \phi\toN_{(k,s)}(\bx,t) + \frac1C v\toN_{(k,s)}(\bx,t) + O(1/C^2).
\end{align}
The values $v\toN_{(k,s)}(t)$ are shown in \cite{gastSizeExpansionsMean2019} to satisfy a system of linear ordinary differential equations whose solution can be expressed in integral form as:
\begin{align*}
v_{(k,s)}(\bx, t) = \frac{1}{2}\int_0^t\sum_{(k_1,s_1),(k_2,s_2)\atop \in \{1,\hdots,n\}\times \calS} Q_{(k_1,s_1),(k_2,s_2)}(\bphi(\bx,\tau))\dd{\phi_{(k,s)}}{(k_1,s_1)}{(k_2,s_2)}(\bphi(\bx,\tau),t-\tau)d\tau
\end{align*}
where $Q_{(k_1,s_1),(k_2,s_2)}(\bx)$ corresponds to the expected change of the covariance between the values of $X_{(k_1,s_1)}$ and $X_{(k_2,s_2)}$ of the stochastic system at some given point $\bX = \bx$. We formally introduce and elaborate more on the refinement terms in Appendix~\ref{apx:rmf_def}.

In our heterogeneous population model, we have no replica which corresponds to setting $C=1$. Hence, the above bound does not guarantee that the $O(1/C^2)$ should be small for $C=1$. The next theorem shows that, surprisingly, using the refined approximation \eqref{eq:rmf_Ncopies} with $C=1$ copy leads to an approximation that is an order of magnitude more accurate than the mean field approximation provided before. When comparing Equation~\eqref{eq:MF_theorem} and \eqref{eq:RMF_theorem}, what this theorem shows is that the correction $v\toN_{(k,s)}(x,t)$ is of order $O(1/n)$ and is the leading term of the $O(1/n)$-term of Equation~\eqref{eq:MF_theorem}. 

Note that to obtain the accuracy bound, no further assumption is needed compared to the case of the mean field approximation. That is, we assume that the parameters $r$ are uniformly bounded.
\begin{theorem}
  \label{th:RMF}
  Assume that the model, defined in Section \ref{ssec:state_representation}, satisfies \eqref{eq:bound:r}. Let $\bphi\toN(\bx, t)$ be the solution of the ODE introduced in Section \ref{ssec:drift_mean_field} with initial condition $\bX\toN(0) = \bx \in \calX\toN$ and drift $f\toN$. Let $\bv\toN(\bx, t)$ be the solution of the refinement term explicitly defined in Appendix \ref{apx:rmf_def}. Then, for $(k,s)\in \{1,\ldots,n\}\times \calS$ and $t < \infty$,
  \begin{align}
    \label{eq:RMF_theorem}
    \Proba{S_k(t)=s } & = \phi\toN_{(k,s)}(\bx,t) + v\toN_{(k,s)}(\bx,t) + O(1/n^2).
  % \E[X\toN_{(k,s)}(t)\mid X(0)=x] &  = \phi\toN_{(k,s)}(x,t) + v\toN_{(k,s)}(x,t) + O(1/n^2).
  \end{align}
\end{theorem}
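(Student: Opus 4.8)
The plan is to prove Theorem~\ref{th:RMF} by a generator/semigroup comparison that pushes the argument behind Theorem~\ref{th:MF} to second order. Write $\calL\toN$ for the generator of the CTMC $\bX\toN$, let $P_\tau h(\bx)=\esp{h(\bX\toN(\tau))\mid \bX\toN(0)=\bx}$ be its semigroup, and let $\Pi_\tau h(\bx)=h(\bphi\toN(\bx,\tau))$ be the deterministic semigroup of the mean field flow, whose generator is $Ah(\bx)=\sum_{(k',s')} f\toN_{(k',s')}(\bx)\,\partial h/\partial x_{(k',s')}$. For the linear test function $h(\bx)=x_{(k,s)}$ one has $\Pi_t h(\bx)=\phi\toN_{(k,s)}(\bx,t)$ and the telescoping identity
\begin{align*}
  \esp{X\toN_{(k,s)}(t)}-\phi\toN_{(k,s)}(\bx,t)
  =\int_0^t P_\tau\,(\calL\toN-A)\,\Pi_{t-\tau}h\,(\bx)\,d\tau .
\end{align*}
First I would Taylor-expand each jump in $\calL\toN$ to second order: since $A$ captures exactly the first-order (drift) part, $(\calL\toN-A)g(\bx)$ equals $\tfrac12\sum_{i,j}Q_{i,j}(\bx)\,\partial_i\partial_j g(\bx)$ plus a third-order Taylor remainder $R_3(\bx)$, where $Q_{i,j}(\bx)=\sum_{\text{transitions}}(\text{rate})\,\ell_i\ell_j$ is precisely the covariance-rate matrix defining $\bv\toN$ in Appendix~\ref{apx:rmf_def}.

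Substituting $g=\Pi_{t-\tau}h$ and comparing with the integral form of $v\toN_{(k,s)}$ in Section~\ref{ssec:rmf:definition}, the refinement is exactly this second-order integrand but with the stochastic semigroup $P_\tau$ replaced by the deterministic one $\Pi_\tau$, i.e.\ $v\toN_{(k,s)}(\bx,t)=\int_0^t\Pi_\tau G_\tau(\bx)\,d\tau$ with $G_\tau(\by)=\tfrac12\sum_{(k_1,s_1),(k_2,s_2)}Q_{(k_1,s_1),(k_2,s_2)}(\by)\dd{\phi\toN_{(k,s)}}{(k_1,s_1)}{(k_2,s_2)}(\by,t-\tau)$. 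Hence the total error splits as
\begin{align*}
  \esp{X\toN_{(k,s)}(t)}-\phi\toN_{(k,s)}(\bx,t)-v\toN_{(k,s)}(\bx,t)
  =\underbrace{\int_0^t (P_\tau-\Pi_\tau)\,G_\tau\,(\bx)\,d\tau}_{(\mathrm I)}
  +\underbrace{\int_0^t P_\tau R_3\,(\bx)\,d\tau}_{(\mathrm{II})},
\end{align*}
and it remains to show that $(\mathrm I)$ and $(\mathrm{II})$ are both $O(1/n^2)$.

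For $(\mathrm I)$ I would observe that $(P_\tau-\Pi_\tau)G_\tau(\bx)=\esp{G_\tau(\bX\toN(\tau))}-G_\tau(\bphi\toN(\bx,\tau))$ is itself a mean field error, now for the smooth test function $G_\tau$; re-running the estimate of Theorem~\ref{th:MF} for $G_\tau$ bounds it by $O(1/n)$ times a norm of the first and second derivatives of $G_\tau$. Since $G_\tau$ is built from $Q$ (whose cross-object entries arise only from the $O(1/n)$ pairwise rates) and from the Hessian of $\phi\toN_{(k,s)}$, one checks that $G_\tau$ and its derivatives already carry an extra factor $1/n$, whence $(\mathrm I)=O(1/n^2)$. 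For $(\mathrm{II})$ the integrand is the third-order remainder $\sum_{\text{transitions}}(\text{rate})\cdot O\!\big(\sum_{i,j,l}|\ell_i\ell_j\ell_l|\,|\ddd{\phi\toN_{(k,s)}}{i}{j}{l}|\big)$, which I bound by summing over the $O(n^{d})$ interacting $d$-tuples, each of rate $O(1/n^{d-1})$ with a jump $\ell$ supported on $O(1)$ coordinates, against bounds on the third derivatives of $\phi\toN_{(k,s)}$.

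The crux — and the main obstacle — is establishing sharp bounds on the partial derivatives of $\phi\toN_{(k,s)}$ with respect to the initial condition, organized by how many differentiation indices refer to object $k$ versus to distinct other objects. Concretely I expect $\partial\phi\toN_{(k,s)}/\partial x_{(k',\cdot)}=O(1)$ when $k'=k$ and $O(1/n)$ when $k'\ne k$, each additional distinct foreign object in a higher derivative contributing a further $1/n$, together with an extra $1/n$ coming from the fact that the self-dynamics of a single object are linear, so curvature is created only through the $O(1/n)$ couplings. These bounds follow by differentiating $\tfrac{d}{dt}\bphi\toN=f\toN(\bphi\toN)$ to obtain the variational (sensitivity) equations and applying Gr\"onwall's inequality, but the bookkeeping is delicate: the sums in $(\mathrm I)$ and $(\mathrm{II})$ mix $O(n^2)$ terms of size $O(1/n^2)$ with $O(n^3)$ terms of size $O(1/n^3)$, and each case (whether $k',k''$ equal $k$ or not, coincide or are distinct) must be counted against the matching derivative bound to reach the uniform $O(1/n^2)$ conclusion. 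This case analysis is exactly what I would carry out in Appendix~\ref{apx:proofs}.
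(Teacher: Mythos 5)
Your proposal is correct and follows essentially the same route as the paper: the semigroup identity $\int_0^t P_\tau(\calL\toN-A)\Pi_{t-\tau}h\,d\tau$ is the paper's Lemma~\ref{lemma:generator_comparison}, your split into $(\mathrm I)$ and $(\mathrm{II})$ matches Equation~\eqref{eq:proof_R3}, your treatment of $(\mathrm I)$ as a second mean field comparison for the test function $G_\tau$ is exactly Lemma~\ref{lemma:refinement_bound} (together with the integral form of $\bv$ in Lemma~\ref{lemma:connection_ode_integral_form}), $(\mathrm{II})$ is Lemma~\ref{lemma:taylor_remainder}, and your object-indexed derivative bounds obtained from the variational equations and Gr\"onwall are precisely Lemma~\ref{lemma:properties_phi}.
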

\color{black}
\begin{proof}[Key ideas of the proof]
  A complete proof is provided in Section~\ref{sec:proof_rmf}. The proof of Theorem~\ref{th:RMF} uses the same methodology as the proof of Theorem~\ref{th:MF} but refines the analysis to extract the term $v$. First, we use a second-order Taylor expansion instead of the first order expansion used in the proof of Theorem~\ref{th:RMF}, this allows us to derive an expansion term in integral form. Second, we show that this expansion term is essentially equal to the refinement term $v$. Last, we show that the remainder terms are of order $O(1/n^2)$ by carefully studying how small the third and fourth derivatives of $\bphi$ with respect to its initial condition are.
\end{proof}

The theorem shows that, by adding the refinement term, the error of the refined approximation is of order $O(1/n^2)$, which is an order of magnitude better than the $O(1/n)$ of the classical mean field approximation. This implies that both equations are asymptotically exact as the number of interacting objects goes to infinity. Hence, the refinement is especially interesting to approximate systems with few interacting objects. Note that in theory, it is possible to obtain a refined-refined approximation that has an accuracy of $O(1/n^3)$. For that, one can adapt the $1/N^2$-expansion of \cite{gastSizeExpansionsMean2019} to compute a second expansion term. This expansion depends on up to the fourth derivative of $\bphi$. Yet, proving carefully that this expansion is $O(1/n^3)$-accurate seems difficult as it requires obtaining precise estimates of up to the sixth derivative of $\bphi$. Also, from a practical point of view, computing such an expansion involves solving an ODE with $O((nS)^4)$ variables which seems difficult as soon as $n$ grows. Hence, in this paper, we restrict our attention to the first expansion term.

To illustrate how this refinement improves the accuracy compared to the classical mean field, we consider the same cache replacement policy as the one studied in Figure~\ref{fig:cache_illustration}, with four objects and a cache of size $2$. Compared to the previous figure, we now added an orange curve that corresponds to the refined mean field approximation. We observe that, if the mean field approximation was good, the refined mean field approximation seems almost exact. 

\color{myorange}
In fact, the refined mean field approximation lies within the confidence interval of the sample mean which is calculated from $1000$ sample trajectories of the underlying system. It is noticeable that computing the mean field and refinement term takes about 150ms whereas simulating $1000$ sample paths and calculating the sample mean takes several seconds. \color{black} 
% In fact, the refined mean field approximation is well within the confidence intervals of our $1000$ simulations, while being much faster to compute (here, computing the refined mean field takes {\color{myorange} about 150ms} whereas obtaining $1000$ stochastic trajectories takes several seconds). 
This suggests that for the same computational budget, the refined mean field approximation is more accurate than the simulation. We will elaborate more on that in Section~\ref{ssec:cache}. 

\begin{figure}[ht]
  \centering
  \includegraphics[width=\linewidth]{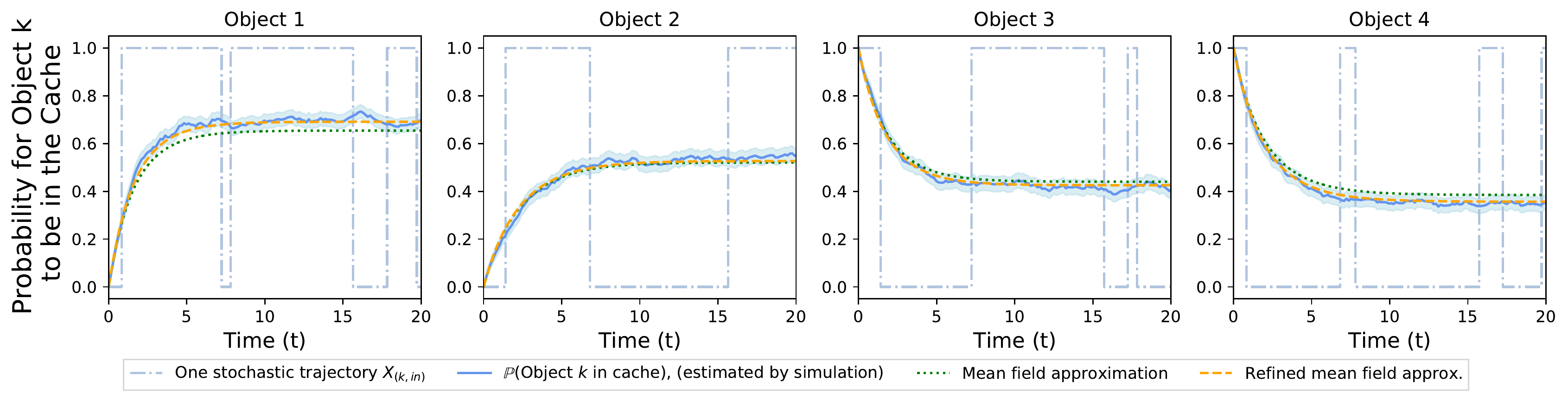}
  \caption{Behavior of the RANDOM policy for a cache of size $2$ and a total of four objects. For each of the four objects, we compare the stochastic system with the mean field and refined mean field approximations. }
  \Description{Illustration of the accuracy of the refined mean field approximation.}
  \label{fig:cache_illustration_rmf}
\end{figure}

\color{myorange}
\subsubsection*{On the applicability to partially heterogeneous systems}

Consider the case that the stochastic system is not fully heterogeneous, i.e. there exists a finite number of classes (say $n$), each class having $C$ objects that have the same behavior. Such a model corresponds to our model of $C$ copies. As this model is a density dependent process, the results from \cite{gast2017refined,gastSizeExpansionsMean2019} show that the mean field approximation is $O(1/C)$-accurate and that the refined mean field approximation is $O(1/C^2)$ accurate for the $C$-copy model.  We now show that Theorems~\ref{th:MF} and \ref{th:RMF} can be used to obtain the following much sharper bound:
\begin{align}
  \label{eq:rmf_Ncopies_refined}
  \esp{\mathbf{X}\ncopies_{(k,s)}(t)} = \phi\toN_{(k,s)}(\bx,t) + \frac1C v\toN_{(k,s)}(\bx,t) + O\p{\frac{1}{(Cn)^2}}.
\end{align}
The (only) difference between this equation and \eqref{eq:rmf_Ncopies} is that the term $O(1/C^2)$ of \eqref{eq:rmf_Ncopies} is replaced by the much smaller term $O(1/(Cn)^2)$. This implies that the accuracy of mean field interaction model does not depend on the number of homogeneous copies but only on the total number of objects. 

To see why \eqref{eq:rmf_Ncopies_refined} works, we remark that our model with $C$ copies of $n$ classes can be represented by a fully heterogeneous model with $n'=Cn$ objects (one just have to use equal rates for objects that are similar). The result of Theorem~\ref{th:MF} and \ref{th:RMF} imply that the mean field $\phi^{(n')}$ and refined mean field $\phi^{(n')}+v^{(n')}$ approximations are $O(1/n')$ and $O(1/(n')^2)$ accurate.  By replacing $n'$ by $nC$ and summing over identical objects, one obtains \eqref{eq:rmf_Ncopies_refined}.

\color{black}

\subsection{Numerical Complexity}
\label{sec:numerical_complexity}

From a computational point of view, the mean field and refined mean field approximations greatly fasten the estimation of transient or steady-state values compared to a direct study of the original Markov process $\bSn$. Indeed, the continuous-time Markov chain $\bSn$ has up to $|\calS|^n$ states where the mean field approximation can be computed by solving a non-linear ODE with $n|\calS|$ variables. As shown in Appendix~\ref{apx:rmf_def}, the refinement term $\bv\toN$ is the solution of a linear ODE with $n|\calS|+(n|\calS|)^2$ variables. This means that both approximations can be solved by using standard numerical integrators. Moreover, their complexity grows linearly (for the mean field) or quadratically (for the refinement) in the number of objects whereas an exact analysis grows exponentially with the number of objects. 

\begin{table}[ht]
  \color{myorange}
  \caption{Computation time of the mean field and refined mean field approximation for the RANDOM model as a function of the number of objects $n$.}
  \label{table:computation_time_random}
  \begin{tabular}{|c|c|c|c|c|}
    \hline
    &  \multicolumn{2}{c|}{Transient up to $T=1000$}
    & \multicolumn{2}{c|}{Steady-state (=Fixed point)}\\\hline
    $n$& mean field & refined m.f. & mean field & refined m.f.\\\hline
    10 & 30ms &180ms& 2ms &2ms\\\hline
    30 & 30ms &370ms& 3ms &5ms\\\hline
    50 & 35ms &1s& 2ms &6ms\\\hline
    100 & 60ms &14s& 3ms &30ms\\\hline
    %200 & 100ms &--$^*$& 6ms &90ms\\\hline
    300 & 170ms &--$^*$& 9ms &200ms\\\hline
    500 & 300ms &--$^*$& 10ms &700ms\\\hline
    1000 & 970ms &--$^*$& 30ms &9s\\\hline
  \end{tabular}\\
  ${}^*$ ``--'' means that the ODE solver did not finish before 30 seconds. 
\end{table}

{\color{myorange}
  To study the time taken to compute the mean field and refined mean field approximation in more detail, we consider the RANDOM model already presented in Figure~\ref{fig:cache_illustration_rmf} and we vary the number of objects $n$ from $10$ to $1000$. For each system size $n$, we measure the time to compute the four values described next and we report them in Table~\ref{table:computation_time_random}. 
  \begin{itemize}[wide, labelwidth=!, labelindent=0pt]
    \item (Transient) The first two columns correspond to the computation of the mean field approximation $x\toN_{(k,\cdot)}(t)$, and the refined approximation $(x+v)\toN_{(k,\cdot)}(t)$ for $t\in[0,1000]$. The computation is done by using a straightforward implementation of the ODEs given in Appendix~\ref{apx:rmf_def}, which is solved by using the function \texttt{solve\_ivp} of \texttt{scipy}. There is no particular optimization of the code to use that a large number of terms are $0$. We observe that computing the mean field approximation seems to scale linearly with $n$ and can be done for a system of more than a $n=1000$ objects. For the refined mean field approximation, the computation cost grows quickly when $n$ exceeds $100$ (it takes several minutes for $n=200$ objects).
    \item (Steady-state). The last two columns correspond to the computation of the limiting value as $t$ goes to infinity: $x\toN_{(k,\cdot)}(\infty) = \lim_{t\to\infty}x\toN_{(k,\cdot)}(t)$ and $v\toN_{(k,\cdot)}(\infty) = \lim_{t\to\infty}v\toN_{(k,\cdot)}(t)$. We observe that the computation of these values is much faster: the computation of the mean field approximation is essentially instantaneous whereas the computation of the refined mean field is doable for $n=1000$. This is since the computation of the steady-state values corresponds to finding the fixed point of a linear system of ODEs, which is done by solving a linear system.
  \end{itemize}
}

The choice between the mean field and refined mean field approximation can certainly depend on the system size as the complexity of the former grows linearly with $n$ and quadratically for the latter. Hence, for system sizes larger than $n\ge 100$, the computation time of the refined mean field increases rapidly. As shown in Theorem~\ref{th:MF}, for large $n$ the mean field approximation already gives a good estimate of the true values. This makes the refined approximation more interesting for reasonable system sizes (say $n\le100$). Note that for a homogeneous system, the complexity of computing the refined approximation does not depend on the number $C$ of replicas. For a fully heterogeneous system, this is no longer the case because the $v\toN_{(k,\cdot)}$ depends on the object's identity.

%%%%%%%%%%%%%%%%%%%%%%%%%%%%%%%%%%%%%%%%%%%%%%%%%%%%%%%%%%
%%%%%%%%%%%%%%%%%%%%%%%%%%%%% Section: Numerical
%%%%%%%%%%%%%%%%%%%%%%%%%%%%%%%%%%%%%%%%%%%%%%%%%%%%%%%%%%

\section{Numerical Experiments}
\label{sec:numerical}

In this section, we illustrate our main results with two examples, a cache replacement model and a two-choice load balancing model. We will see that both models fulfill the requirements for Theorems~\ref{th:MF} and \ref{th:RMF} and that the hidden constant given in the theorems is small. The two examples are chosen to illustrate models for which the classical, homogeneous mean field approximation cannot be used but our heterogeneous framework applies. 
%In both cases we assess the accuracy of the approximations and show, that the mean field and refined mean field approximation errors are indeed of order $1/n$ and $1/n^2$ compared to the sample mean.

\subsection{Application to a Cache Replacement Algorithm: the RANDOM($\bm$) Model}
\label{ssec:cache}

% This example can be used to give a simple web access model as stated in \cite{breslauWebCachingZipflike1999}.

As a first example we consider the list-based RANDOM($\bm$) cache replacement policy which was introduced in \cite{gastTransientSteadystateRegime2015,hazewinkelStochasticAnalysisComputer1987}. List-based cache replacement policies are used to order content in a cache separated into lists.  The cache is separated into $\ell:=|\calS|-1$ lists with sizes $m_1\toN,\ldots,m_\ell\toN$. When an object is requested, if it is not in the cache, it is inserted in the first list and replaces a randomly chosen object from it. If the object is in a list $s$, it is promoted to list $s+1$ and a randomly chosen object from list $s+1$ is moved to list $s$. If an object that is not in the cache is requested we call it a `miss' otherwise a `hit'. It is shown in \cite{gastTransientSteadystateRegime2015} that list-based cache replacement policies can greatly improve the hit rate compared to the classical RANDOM or LRU policies, at the price of being less responsive. The authors of \cite{gastTransientSteadystateRegime2015} used a mean field approximation for which some theoretical support was given (essentially by showing that the error of the mean field approximation is $O(1/\sqrt{n})$). In this section, we push this analysis further in two directions. First, we show that our framework improves on the bound of \cite{gastTransientSteadystateRegime2015} by showing that the error of the mean field approximation is $O(1/n)$ and not $O(1/\sqrt{n})$. Second, we show that our refined approximation provides an extremely accurate approximation (essentially more accurate than simulation). While this last fact was empirically observed in \cite{casale2020performance}, Theorem~\ref{th:RMF} provides theoretical support by showing that the error of the refined approximation is $O(1/n^2)$. 

\subsubsection{Model and Approximations}

We consider that there are $n$ objects with identical sizes. Requests for an object $k$ arrive according to a Poisson process of intensity $\lambda_k$. In our framework, the state of object~$k$ at time $t$ is $\Sn_k(t)\in\{0,1\dots \ell\}$, that represents the list in which  object~$k$ is ("0" means that the object is not stored in the cache). Following our framework, we denote by $X_{(k,s)}(t)$ the random variable that equals $1$ if object $k$ is in list $s$ and $0$ otherwise.  According to the RANDOM($\bm$) policy, if object~$k$ is in list $s\in \{0,1, \ldots, \ell-1\}$ and gets requested, then it is moved into list $s+1$ and a randomly selected object (say $k_1$) from list $s+1$ moves into list $s$.  The corresponding transitions for the Markov chain $\bX$ are: 
\begin{align}
  \bX \mapsto \bX + \be_{(k,s+1)} - \be_{(k,s)} + \be_{(k_1,s)} - \be_{(k_1,s+1)} & \qquad \text{ at rate } \frac{\lambda_k}{m_{s+1}\toN} X_{(k,s)}X_{(k_1,s+1)}. \label{eq:rand_m:transition}
\end{align}
Here, $\lambda_kX_{(k,s)}$ is the rate at which object $k$ is requested while being in list $s$, and $X_{(k_1,s+1)}/m_s\toN$ is the probability that object $k_1$ is in list $s+1$ and is chosen to be exchanged.

The drift of the stochastic system is the vector $(f_{(k,s)}(\bx))_{(k,s)}$. By definition, for object~$k$ in list $s$, the $(k,s)$ component of the drift is 
\begin{align*}
f_{(k,s)}(\bx) = & \lambda_k x_{(k,s-1)} - \sum_{k_1=1}^n\frac{\lambda_{k_1}}{m_s} x_{(k_1,s-1)}x_{(k,s)} + \bigl( \sum_{k_1=1}^n \frac{\lambda_{k_1}}{m_{s+1}} x_{(k_1,s)}x_{(k,s+1)} - \lambda_k x_{(k,s)} \bigr)\mathbf{1}_{\{s < \ell\}}.
\end{align*}
The mean field approximation is the solution of the ODE $\dot{\bx}=f(\bx)$. Note that this ODE is the same as the one given in \cite{gastTransientSteadystateRegime2015}.

The transitions of this model are pairwise interactions with $b\toN_{k,k_1,s,s+1,s+1,s} = \frac{\lambda_k}{m\toN_{s+1}/n}$ as defined in \eqref{eq:interactions}. There is no unilateral transition. To apply Theorems~\ref{th:MF} and \ref{th:RMF}, we assume that the list sizes $m_i\toN$ grow linearly with $n$ which guarantees that $b$ remains bounded by $\bar{b} = \frac{\max_k \lambda_k}{\min_s (m_s/n)}$. Therefore, the assumptions of the theorems are satisfied as soon as the values $\lambda_k$ are bounded and the list sizes grow linearly with the number of objects.  This guarantees that the mean field approximation is $O(1/n)$ accurate whereas the refined mean field approximation is $O(1/n^2)$ accurate.

\subsubsection{Transient Analysis}

We calculate the solution of the ODEs for the mean field and refined mean field approximation and compute the simulations by adapting the toolbox \cite{gastRefinedMeanField2018}. We implement the Markov chain, the drift $f$, the drift derivatives and $\bQ$ for the approximations based on the transitions \eqref{eq:rand_m:transition}. For our numerical example, we consider a cache with $n=20$ objects for which the request rates follows a Zipf distribution with parameter $\alpha=0.8$, that is, $\lambda_k = A / k^{\alpha}$ with $A$ being a normalizing constant. We consider a cache with three lists of sizes $m_1 = 5$, $m_2 = 3$, and $m_3 = 2$. 

In Figure~\ref{fig:transient_cache_popularities} we compare the mean field and refined mean field approximations of the cache popularities, \emph{i.e.}, $\sum_{k=1}^n \lambda_k x_{(k,s)}$ and $\sum_{k=1}^n \lambda_k (x_{(k,s)} + v_{(k,s)})$ for $s=0,1,\ldots, 3$, against the ``true'' value $\sum_{k=1}^n \lambda_k \E[X_{(k,s)}], \ s=0,1,\ldots, 3$ that is estimated by simulation. We compute the sample mean and the 95-percent confidence interval of the cache popularities by running 2000 Markov chain simulations. This figure shows that the mean field approximation captures the qualitative behavior of the stochastic process very well. Quantitatively, the mean field provides a good approximation but does not accurately capture the behavior of the system, especially for the third list. The values of the refined mean field approximation give a considerably better approximation. It lies within the 95-percent confidence interval of the sample mean and seems to be almost exact. 

% This figure shows that the refined mean field approximation is very accurate when estimating the quantitative behavior of the hit rate. 
Yet, evaluating how precise the refined approximation is difficult since it lies within the confidence interval of the simulation. To study this error in more detail, next we study the steady-state behavior of the cache, for which an exact analysis is doable when $\bm$ is small enough.

\begin{figure}[ht]
  \includegraphics[width=\textwidth]{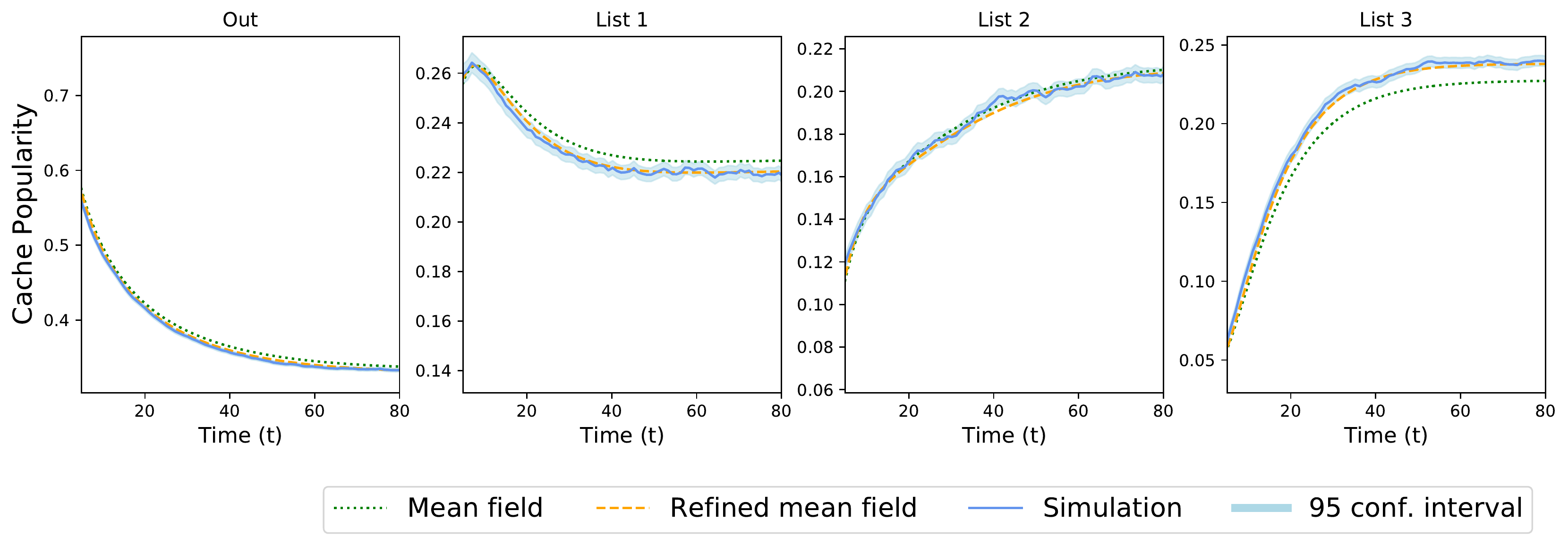}
  \caption{Transient state comparison of cache popularities.}
  \label{fig:transient_cache_popularities}
  \Description{Comparison of the mean field and refined mean field approximated values against a simulated mean estimate in transient regime.}
\end{figure}

%%%%%%%%%%%%%%%%%%%%%%%%%%%%%%%%%%%%%
\subsubsection{Steady-state Analysis}
%%%%%%%%%%%%%%%%%%%%%%%%%%%%%%%%%%%%%

The previous results show that the mean field and refined mean field can accurately approximate the transient behavior of the RANDOM($\bm$) policy. In Figure~\ref{fig:steady_state_cache}, we compare the steady-state values of the simulation, mean field and refined mean field approximation against an exact solution. To make the figure visible, we consider a case with $8$ objects having a Zipf popularity with parameter $0.5$ and three lists of sizes $2$. This figure shows that even for the steady-state, the mean field and refined mean field approximation are very good estimates of the true mean. As for the transient regime, this figure shows that the refined mean field approximation captures the cache popularities more closely than the mean field approximation: the curve provided by simulation and by refined mean field approximation are almost indistinguishable. Note that the bound obtained in Theorem~\ref{th:MF} and \ref{th:RMF} are only for the transient regime. We believe that obtaining a similar bound for the steady-state is possible but requires to precisely control how fast the mean field approximation converges to its fixed point. We leave this for future work.

\begin{figure}[ht]
%   \begin{tabular}{cccc}
%     \begin{subfigure}{0.23\linewidth}
%       \centering
%       \includegraphics[width=\linewidth]{miss_rate}
%       \caption{$\pi^{\text{method}}_{(k,0)}-\pi^{\text{exact}}_{(k,0)}$}
%     \end{subfigure}
%  & 
%     \begin{subfigure}{0.23\linewidth}
%       \centering
%       \includegraphics[width=\linewidth]{hit_rate_0}
%       \caption{$\pi^{\text{method}}_{(k,1)}-\pi^{\text{exact}}_{(k,1)}$}
%     \end{subfigure}
%   &
%     \begin{subfigure}{0.23\linewidth}
%       \centering
%       \includegraphics[width=\linewidth]{hit_rate_1}
%       \caption{$\pi^{\text{method}}_{(k,2)}-\pi^{\text{exact}}_{(k,2)}$}
%     \end{subfigure}
%   &
%     \begin{subfigure}{0.23\linewidth}
%       \centering
%       \includegraphics[width=\linewidth]{hit_rate_2}
%       \caption{$\pi^{\text{method}}_{(k,3)}-\pi^{\text{exact}}_{(k,3)}$}
%     \end{subfigure}
%     \end{tabular}
\includegraphics[width=\linewidth]{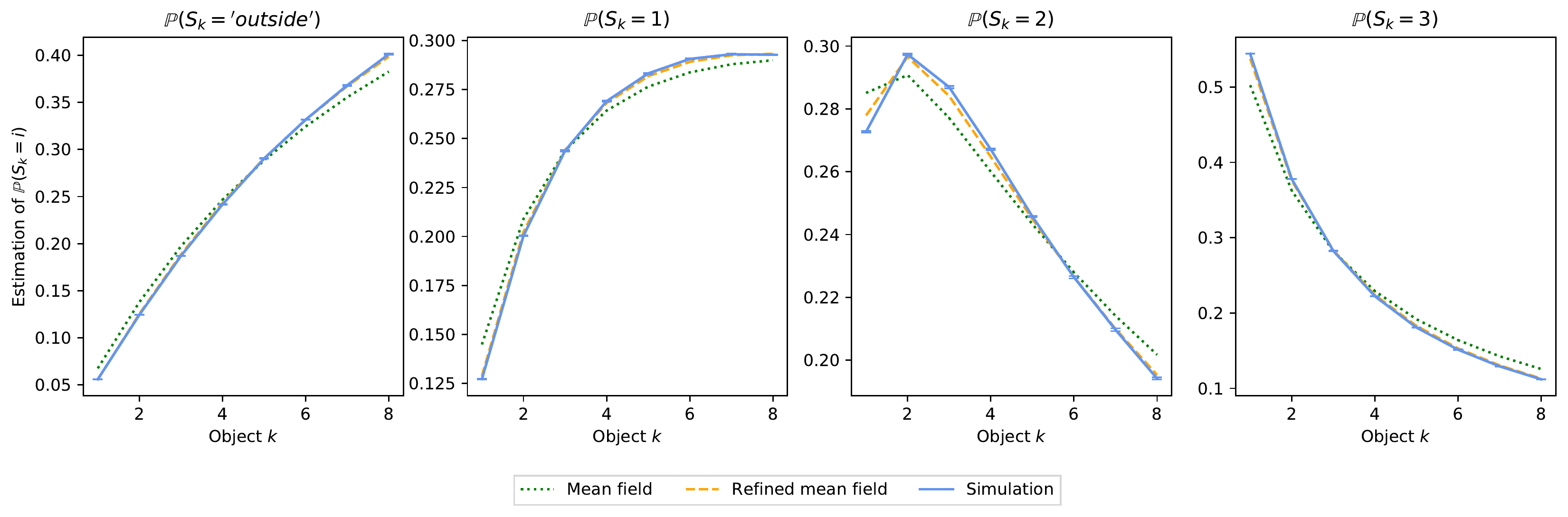}
  \caption{Steady-state probabilities estimated by simulation, mean field and refined mean field approximation.}
  \label{fig:steady_state_cache}
\end{figure}

While the previous figure suggests that the refined mean field is extremely accurate, it does not give a precise idea of how accurate the approximation is. To go one step further, we consider a cache model with $n$ different objects following a Zipf popularity with parameter $\alpha=0.5$, and a cache with two lists of size $m_1=m_2=\floor{0.3n}$. We study the accuracy of the mean field and refined mean field approximation as $n$ grows. One difficulty to do so is that when the number of objects $n$ is large, obtaining an accurate simulated estimation of $\Proba{S_k=s}$ for all $(k, s) \in\{1\dots n\}\times \calS$ is difficult. As we show below, the refined mean field seems more accurate than the simulation as soon as $n$ is more than $20$. 

We show in Appendix~\ref{apx:cache_adaptation} that one can use the product form of the steady-state distribution to obtain a recurrence equation for the steady-state probability of $\Proba{S_k=s}$. While the complexity of computing this is quite large for large caches (our implementation does not allow us to compute it for more than $3$ lists of size $10$), it is possible to compute the exact steady-state distribution for relatively small values of $\bm$. We call this value $\pi_{(k,s)}^{\mathrm{exact}}$. We also compute an estimation $\pi_{(k,s)}^{\text{method}}$ for each $\text{method}\in\{\text{mean field}, \text{refined mean field}, \text{simulation}\}$. For the estimate computed by using simulations, we simulated $10^{8}$ requests and estimate the steady-state probability after a warp-up period of $10^7$ requests. The average error of a method is defined as
\begin{align}
  \label{eq:cache_error}
  \mathrm{Error}(\text{method}) &= \frac1{n}\sum_{k,s} \abs{\pi_{(k,s)}^{\text{method}} - \pi_{(k,s)}^{\text{exact}}}.
\end{align}

We report in Table~\ref{tab:cache_steady-state} the error of the three estimation methods (mean field, refined mean field and simulation). By Theorem~\ref{th:MF} and \ref{th:RMF}, we expect the average error of the mean field to be of order $O(1/n)$ and the error of the refined mean field to be of order $O(1/n^2)$. This is what we observe in Table~\ref{tab:cache_steady-state}, in which we also show the error multiplied by $n$ or $n^2$ (depending on the method), to emphasize the convergence rate. We also observe that when $n$ is larger than $20$, the simulation makes more errors than the refined mean field. Note that, the value obtained by simulation is an unbiased estimator of the true value, and the error that we report arises because we can only simulate a finite number of requests. For our simulation, we choose $10^8$ requests to have a reasonably fast method (it takes between $5$ and $10$ seconds to simulate the $10^8$ requests by using an optimized C++ simulator). {\color{myorange} As a matter of comparison, we also indicate in parenthesis the time taken by our implementation to compute the mean field and refined mean field approximation. Since we consider a relatively small system (at most $n=50$ heterogeneous objects), the computation of the refined mean field approximation is fast (less than $30$ms).  This shows for $n\ge30$, the refined mean field is much more accurate than the simulation, while being much faster to compute.

\begin{table}[tb]
  \caption{Average per-object error of three estimation methods: mean field, refined mean field and simulation. {\color{myorange} We also indicate in parenthesis the time taken to compute these numbers. }}
  \label{tab:cache_steady-state}
  \begin{tabular}{|c|cco|cco|co|}
    \hline
    & \multicolumn{3}{c|}{Mean field}
    & \multicolumn{3}{c|}{Refined mean field}
    & Simulation&\\\hline
    $n$ & Error & $n\ \times$ Error & (time) & Error & $n^2\ \times$ Error & (time) & Error & (time)\\\hline
    10 & 0.0142 &0.142 &(10ms) &0.00197 &0.197 &(10ms)& 0.00026&(4.3s)\\\hline
    20 & 0.0074 &0.149 &(11ms) &0.00049 &0.197 &(13ms)& 0.00043&(4.6s)\\\hline
    30 & 0.0050 &0.151 &(14ms) &0.00022 &0.196 &(17ms)& 0.00047&(4.9s)\\\hline
    40 & 0.0038 &0.153 &(13ms) &0.00012 &0.195 &(22ms)& 0.00055&(6.1s)\\\hline
    50 & 0.0031 &0.154 &(17ms) &0.00008 &0.193 &(30ms)& 0.00055&(5.7s)\\\hline
  \end{tabular}
\end{table}

The time taken to simulate a system of $n$ objects grows with the number of heterogeneous objects $n$ because the complexity of sampling from a Zipf distribution with $n$ object grows with $n$. Yet, this additional computation cost is low (sampling from $n$ objects can be done in $O(\log n)$).  For the refined mean field, the situation is different and the computation time might be large for high values of $n$. Our experiment suggests that it is possible to compute a refined mean field approximation for a few hundred objects in a relatively fast time (less than $5$ seconds). Note that for $n=500$ objects, the error of the refined mean field approximation is in theory $100$ times smaller than the error reported in Table~\ref{tab:cache_steady-state} for $n=50$ (\emph{i.e.}, extremely small). 
} 

% Note that for computational reason, our implementation does not allow us to compute the exact value of $\pi_{(k,s)}^{\text{exact}}$ for a cache with more than three lists of size $10$. This means that for such values, it is impossible to compute the exact values. 

% \begin{figure}[ht]
% \includegraphics[width=\textwidth]{}
% \caption{Steady state error of simulation, mean field and refined mean field approximation compared to exact solution.}
% \label{fig:steady_state_cache}
% \Description[...]{---}
% \end{figure}

%\newpage
\subsection{Application to a Load Balancing Algorithm: The two-choice Model}
\label{ssec:lbm}

\subsubsection{Model and Approximations}

In our second example, we consider a variation of the well studied two-choice model \cite{mitzenmacherPowerTwoChoices2001}. In contrast to the homogeneous case, where all servers have equal service rate parameters, we consider a heterogeneous setup in which processors can have different speeds. We study the impact of the heterogeneity of servers on their performance. Note that a similar analysis was done in \cite{mukhopadhyayAnalysisLoadBalancing2015} by using two classes of servers. The purpose of this example is twofold. It illustrates that our framework can incorporate load balancing models with heterogeneous servers. It also shows that taking heterogeneity into account in such systems is important if one wants to characterize the performance precisely.

\color{myorange}
The model consists of $n$ servers with heterogeneous service rate parameters $\mu_k, k=1,\ldots,n$ and a finite buffer of size $b$, including the job in service. Jobs arrive according to a Poisson process of rate $\lambda n$, and we call $\lambda$ the arrival rate. For each incoming job, we randomly pick two servers. The job is then assigned to the server which has the least number of unfinished jobs. If both servers have the same queue length and a full buffer, the job is discarded. Otherwise, the assignment between the two servers is done at random. The service time of a job in the queue of server $i$ is exponentially distributed with mean $\mu_i$. The state of a server $k$ at time $t$ is its queue length $S_k(t) \in \calS= \{0,1,\ldots, b\}$, state $0$ is referring to the idle state. 

We denote by $X_{(k,s)}$ the random variable that equals $1$ if server $k$ has $s$ jobs. The process $\bX=(X_{(k,s)})_{(k,s)}$ is a Markov chain whose transitions are (for all $k,k_1\in\{1\dots n\}$) \color{black}:
\begin{subequations}
  \begin{align}
    &\bX \mapsto X - e_{(k,s)} + e_{(k,s-1)} && \text{ at rate } \mu_k X_{(k,s)}, \label{eq:lbm:removal_transition}\\
    &\bX \mapsto X + e_{(k,s+1)} - e_{(k,s)} && \text{ at rate } (2\lambda n \mathbf{1}_{\{s_1 \geq s+1\}} +  \lambda n \mathbf{1}_{\{s_1 = s\}}) \frac{X_{(k,s)}}{n} \frac{X_{(k_1,s_1)}}{n}. \label{eq:lbm:arrival_transition}
  \end{align}
\end{subequations}
\color{myorange} In the above equation, the first type of transition \eqref{eq:lbm:removal_transition} corresponds to the completion of a job by server~$k$ when the job queue is of size $1 \leq s \leq b$. It reduces the queue length from $s$ to $s-1$ which sets $X_{(k,s)}$ to $0$ and $X_{(k,s-1)}$ to $1$. The second type of transitions, in equation \eqref{eq:lbm:arrival_transition}, corresponds to adding a job to server $k$ having $0 \leq s \leq b-1$ jobs in the buffer. In this case, the queue size is increased by one from $s$ to $s+1$. To explain the transition rate we see that the servers~$k, k_1$ can be selected in two ways, by selecting $k$ or $k_1$ first and the other second. In the case that both queues have equal length, the chance to add the job to server~$k$ is $1/2$. If both buffers are full, the job is discarded. 

This model has both unilateral transitions with $r_{k,(s) \rightarrow (s-1)}\toN = \mu_k$ and pairwise interactions with $r_{k,k_1,(s,s_1)\rightarrow (s+1,s_1)}\toN = (2\lambda \mathbf{1}_{\{s_1 \geq s+1\}} +  \lambda \mathbf{1}_{\{s_1 = s\}})/n$. The bound \eqref{eq:bound:r}, required to apply Theorems \ref{th:MF} and \ref{th:RMF}, is verified when the values of $\lambda$ and $\mu_k$ are bounded independently of $n$. Moreover, we assume a possibly very buffer for the servers. \color{black}

To simplify notations, let $g_{s}(t)= \sum_{k=1}^{n} \sum_{\tilde{s} \geq s} X_{(k,\tilde{s})}(t)/n$ be the fraction of servers at time $t$ with queue length at least $s$. By summing over all possible values of $k_1$ and $s_1$, the transition \eqref{eq:lbm:arrival_transition} can then be rewritten as 
\begin{align*}
  &X \mapsto X + e_{(k,s+1)} - e_{(k,s)} && \text{ at rate } \lambda \ X_{(k,s)} (g_{s} + g_{s+1}).
\end{align*}
By using this notation, the drift for index $(k,s)$ is
\begin{align*}
  f_{(k,s)}(\bx) = & \mu_k x_{(k,s+1)} - \lambda x_{(k,s)}(g_s+g_{s+1}) - \bigl( \mu_k x_{(k,s)}  - \lambda x_{(k,s-1)} (g_{s-1} + g_{s}) \bigr)\mathbf{1}_{\{s \geq1\}}.
\end{align*}

\subsubsection{Numerical Comparison}

As for the caching example, we adapt methods of the toolbox \cite{gastRefinedMeanField2018} to perform a numerical comparison of mean field and refined mean field approximation against an estimation of the expected value of the system. Following the equations  \eqref{eq:lbm:removal_transition} and \eqref{eq:lbm:arrival_transition} we implement the Markov chain and define the drift $f$, the derivatives of $f$ and the tensor $\bQ$. To obtain the plots we consider models with systems sizes of $n = 10, 20, 30, 40$ and an arrival rate $\lambda = 1$. The heterogeneity is introduced by the consideration of differing server rates. For every system size, we consider a model having service rates as follows. One fifth of the server rates is equal to $2.0$, one fifth is equal to $0.5$ and the remaining rates are sampled uniformly between $1.0$ and $1.4$. In transient state we calculate the sample mean for the system sizes by averaging over $2000$ simulations for $n=10,20$, and over $3000$ simulations for $n=30, 40$. For the steady-state, the estimations are computed by calculating the independent time-average of $19\times10^6$ events of the Markov chain after a warp-up of $5 \times10^5$ events. To compute the mean field and refined mean field approximation faster, we restricted the queue size of the system to a maximum of $12$ (for simulation, we assume unbounded queue lengths). This is justified by two facts: First, the refined mean field seems to be very accurate even with this bounded queue size. Second, we also show in Figure \ref{fig:queue_len_distr} that the queue length distribution vanishes very quickly for high queue sizes. We collect all simulation results in Figures \ref{fig:average_queue_size_fig}, \ref{fig:mean_error_comp_het_vs_hom} and \ref{fig:queue_len_distr}.

\begin{figure}[ht]
  \includegraphics[width=\textwidth]{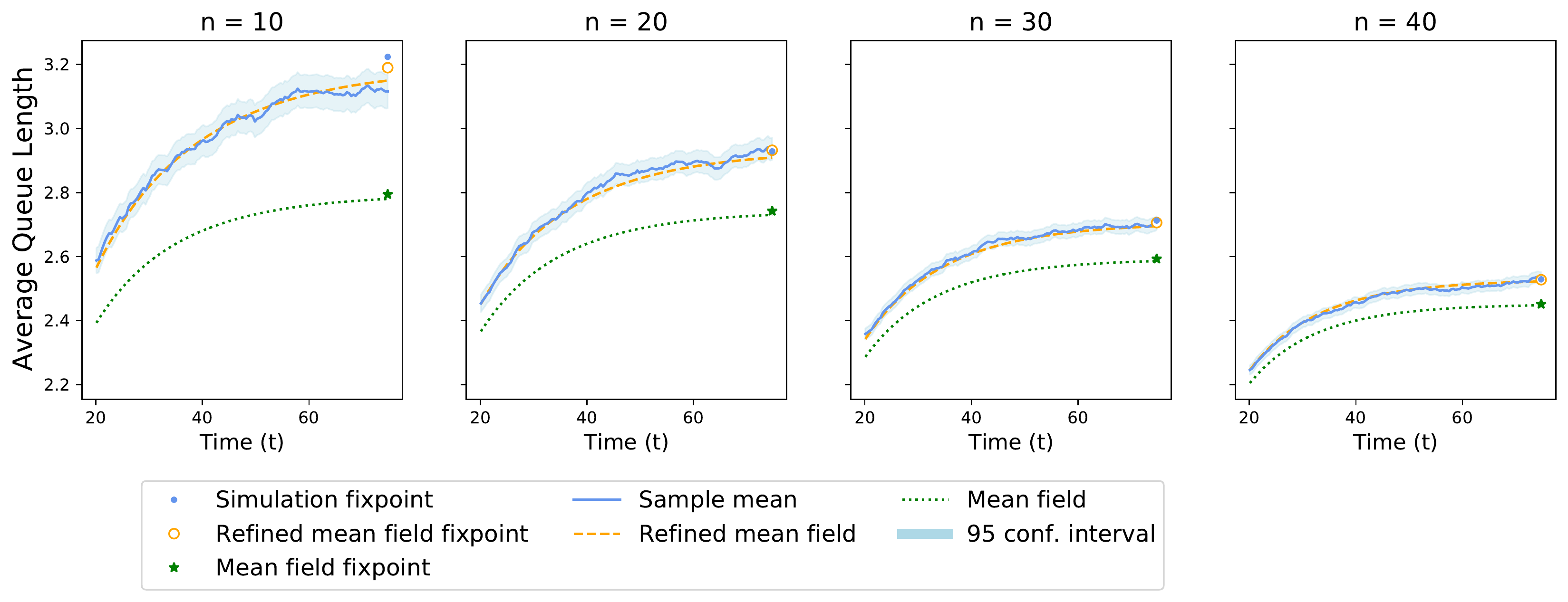}
  \caption{Average Queue Size of simulation mean vs. mean field vs. refined mean field approximation.}
  \label{fig:average_queue_size_fig}
  \Description{Comparison of the average queue size of the system approximated by mean and refined mean field against a simulated estimation of the true value.}
\end{figure}

% The service rate $\mu_i$ of the servers $i= 
% 1,6,11,16,21,26,31,36$ equals $2.0$, service rate for the server $i= 2,7,12,17,22,27,32,37$ equals $0.5$ and the service rates of the remaining servers are sampled uniformly between $1.0$ and $1.4$. 

Figure~\ref{fig:average_queue_size_fig} shows the average queue size of the system. We plot the sample mean of the average queue size with a 95-percent confidence interval against the average queue size calculated from the mean field and refined mean field approximation. We observe that as $n$ grows, both the mean field and the refined mean field approximations seem to be asymptotically exact. Note that the mean field approximation depends on $n$ because it depends on the exact server speeds.  Also, in all cases, the mean field approximation underestimates the average queue length, whereas the refined mean field approximation lies within the confidence interval. On each plot, we also show the steady-states estimates (as a single point on the right of each panel). The observation is the same as for the transient regime: the refined mean field approximation is extremely accurate also for the steady-state regime.

To demonstrate the impact of heterogeneity, we also consider an approximation (that we call the "homogeneous" approximation) in which there are $n$ servers with speed $\bar{\mu}=(\sum_k\mu_k)/n$. We consider the corresponding mean field and refined mean field approximation. For these four approximation methods, we denote by $\mathrm{Error}(\mathrm{method})=\frac1n \sum_{(k,s)}\abs{\esp{X_{(k,s)}(\infty)} - \pi^{\mathrm{method}}_{(k,s)}}$ the mean error, where $\esp{X_{(k,s)}(\infty)}$ is the steady-state of the stochastic system, approximated by simulation, and $\pi^{\mathrm{method}}_{(k,s)}$ is the estimation of the steady-state probability for the given method. We plot these four errors as a function of $n$ in the Figures \ref{fig:mean_error_comp_het_vs_hom_large_server_var} and \ref{fig:mean_error_comp_het_vs_hom_small_server_var}. The setup of the first figure is as described before, one fifth of the servers are of speed $2.0$, one fifth are of speed $0.5$ and the remaining are uniformly chosen between $1.0$ and $1.4$. For the second figure all server speeds are uniformly chosen between $1.0$ and $1.4$. 
We observe that, as expected, the error of the heterogeneous mean field and refined mean field approximation decrease with $n$ (at rate $O(1/n)$ and $O(1/n^2)$) while the error of the homogeneous mean field or refined mean field does not improve much with $n$. This indicates that taking heterogeneity into account is necessary to obtain accurate performance metrics in any case. We also see that for larger variance in the server rates, i.e. stronger heterogeneity, the error of the homogeneous approximation increases whereas our heterogeneous approach gives good estimates.

\begin{figure}[ht]
\begin{subfigure}[b]{0.47\linewidth}
  \includegraphics[width=\linewidth]{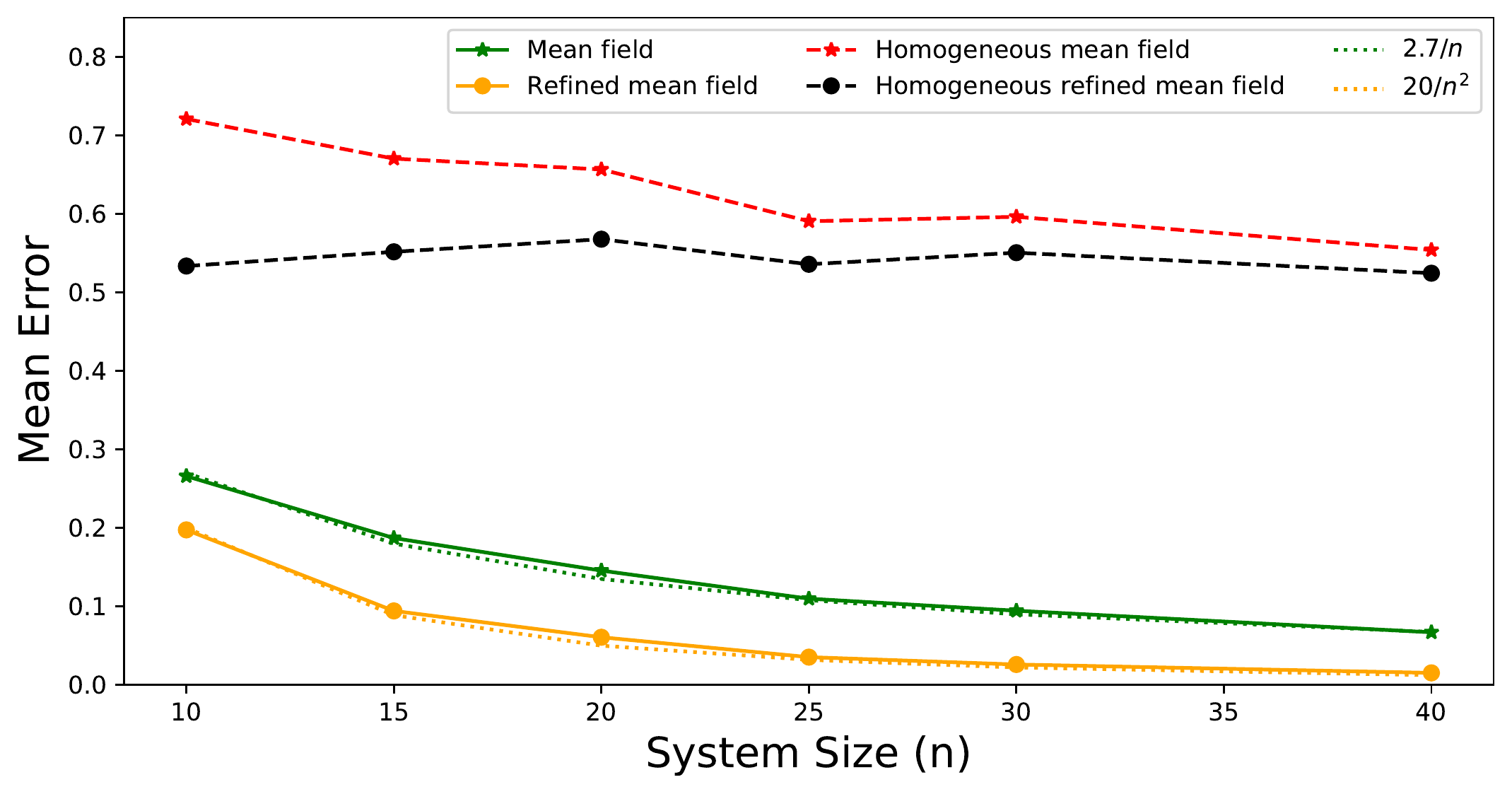}
  \caption{Mean error for strongly varying server rates.}
  \label{fig:mean_error_comp_het_vs_hom_large_server_var}
\end{subfigure}
\hfill
\begin{subfigure}[b]{0.47\linewidth}
  \includegraphics[width=\linewidth]{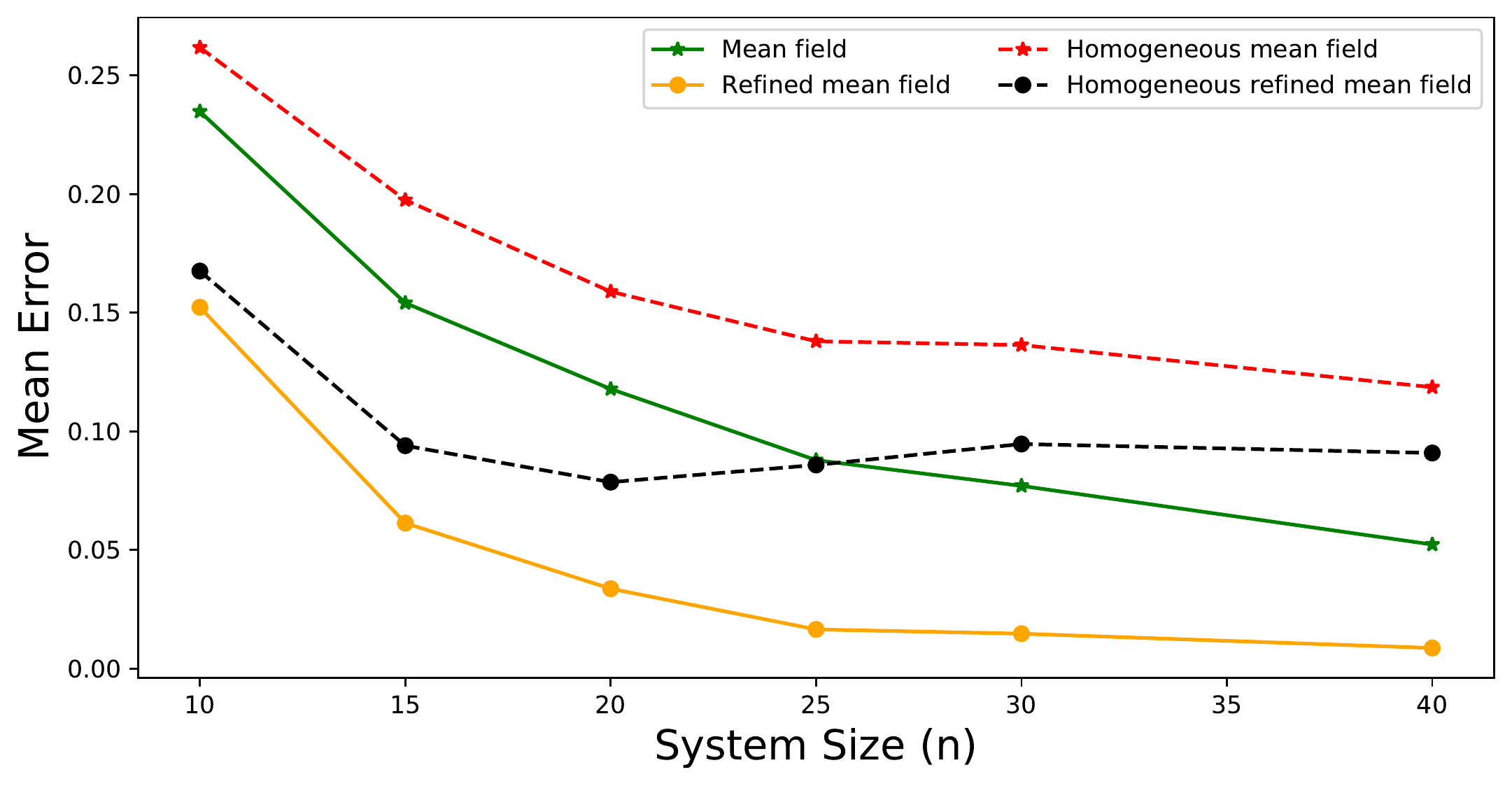}
  \caption{Mean error for lightly varying server rates.}
  \label{fig:mean_error_comp_het_vs_hom_small_server_var}
  \Description[Fly 1 and Fly 2 look identical]{}
\end{subfigure}
\caption{Steady-state mean error comparison of heterogeneous and homogeneous models.}
\label{fig:mean_error_comp_het_vs_hom}
\Description{The plots shows the mean error of the of steady-state approximation error of the heterogeneous and homogeneous mean and refined mean field approximation for different system sizes. The first plot shows the error comparison for systems with strong variation of server rates. The second show the same error for less varying server rates.}
\end{figure}

Last, in Figure \ref{fig:queue_len_distr} we plot the queue length distribution tail. We plot $\frac1n\sum_{k}\Proba{S_k\ge s}$, the probability that a server picked at random has a queue length larger than $s$ as a function of $s$.  The top panel is in normal scale whereas the bottom figure is in log-scale, to zoom on the tail. We observe that for all system sizes the mean field and the refined mean field predict the shape of the distribution well. Yet, they both underestimate the actual tail distribution. The refined approximation improves notably upon the mean field method for ``small'' $s$. It does not fully correct the tail distribution for large $s$. Note that a similar observation was made in \cite{gastRefinedMeanField2017} for the refined mean field for homogeneous systems. 

\begin{figure}[ht]
  \begin{subfigure}{\linewidth}
    \includegraphics[width=0.95\textwidth]{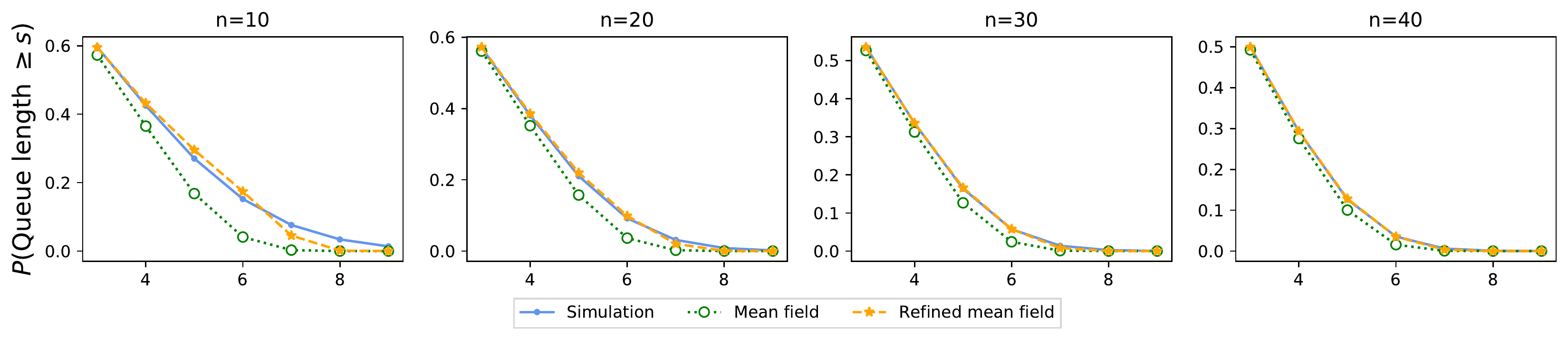}
    \caption{Tail-distribution of queue length (normal-scale)}
  \end{subfigure}
  \begin{subfigure}{\linewidth}
    \includegraphics[width=0.95\textwidth]{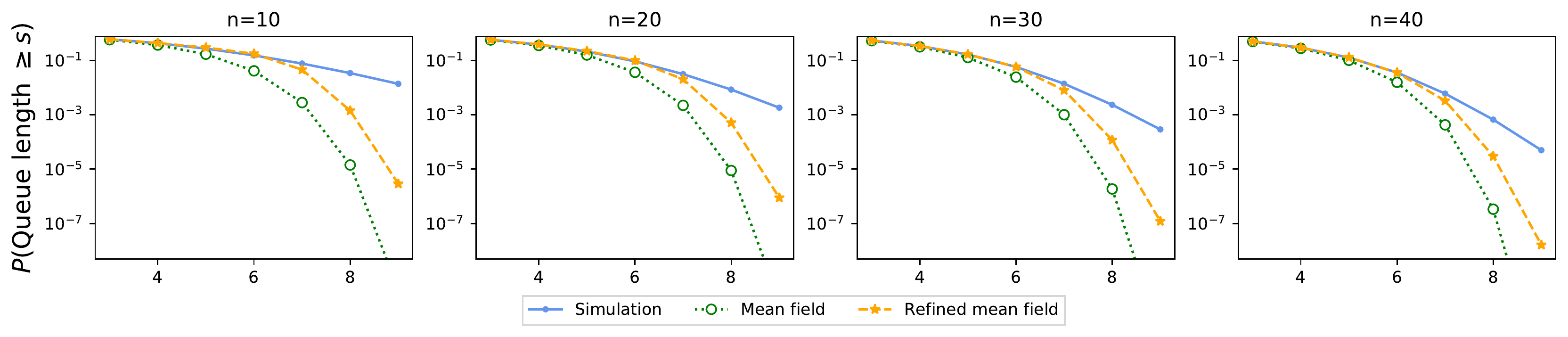}
    \caption{Same figure with log-scale}
  \end{subfigure}
  \caption{Queue length distribution of simulated expectation, mean field approximation and refined mean field approximation in steady-state for system sizes $n=10,20,30,40$. }
  \label{fig:queue_len_distr}
  \Description{The two figures show the queue length tail distribution given by the approximation methods and simulation in normal- and log-scale. Both figures show the probability of the system having servers with queue length larger than three, four, ... . We plot the probabilities for increasing system sizes.}
\end{figure}

%%%%%%%%%%%%%%%%%%%%%%%%%%%%%%%%%%%%%%%%%%%%%%%%%%%%%%%%%%
%%%%%%%%%%%%%%%%%%%%%%%%%%%%% Section: Proofs for Mean and Refined Mean field
%%%%%%%%%%%%%%%%%%%%%%%%%%%%%%%%%%%%%%%%%%%%%%%%%%%%%%%%%%

\section{Proofs}
\label{sec:proofs}

This section contains the proofs of the main theorems. After recalling some notations in Section~\ref{sec:proof_notation}, we start with a first technical lemma in Section~\ref{sec:proof_lemma} in which we show that the difference between the stochastic and deterministic systems depends on the difference between the generator of the stochastic systems and the one of the ODE. Then, we prove Theorem~\ref{th:MF} in Section~\ref{sec:proof_mf} and Theorem~\ref{th:RMF} in Section~\ref{sec:proof_rmf}. To ease the reading, some technical lemmas -- whose proof are not complicated but long and technical -- are postponed to the appendix.

\subsection{Notation}
\label{sec:proof_notation}

In all the proofs, to ease the reading, we drop the superscript $n$. It should be kept in mind that all quantities $\bX$, $f$,... depend on $n$. Also, instead of indexing the vectors by a pair $(k,s)$, we will use an index $i\in\calI$, where $\calI=\{1\dots n\}\times\calS$ is the set of object-state pairs. For a function $h:\calX\times \R^+\to\R$, we denote by $D_\bx h$ the derivative of $h$ with respect to the first coordinate $\bx$ and by $D_th$ the derivative with respect to the second coordinate. This means that for a given pair $(\by,s)\in\calX \times \R^+$, the quantity $D_\bx h(\by,s)$ and $D_t h(\by,s)$ are the derivatives of $h$ with respect to $\bx$ and $t$ evaluated at the point $(\by,s)$.

For convenience, we will denote by $K_{\bx,\bx'}^{(n)}$ the rate at which the Markov chain $\bX$ jumps from $\bx$ to $\bx'$ for $\bx,\bx'\in\calX$. With this notation, for intuition, the transitions \eqref{eq:unilateral} and \eqref{eq:interactions} correspond to (for $\bx\in\calX$ and $k,k_1\in\{1,\dots, n\}$, $s,s',s_1,s'_1\in\calS$ with $(s,s_1) \ne (s',s_1')$)
\begin{align*}
  K_{\bx, \bx + \be_{(k,s')}-\be_{(k,s)}}\toN &= r_{k,(s)\rightarrow(s')} X_{(k,s)}\toN, \\
  K_{\bx, \bx + \be_{(k,s')}-\be_{(k,s)} + \be_{(k_1,s_1')}-\be_{(k_1,s_1)}}\toN &= \frac1n r_{k,k_1,(s,s_1)\rightarrow(s',s'_1)} X_{(k,s)}X_{(k_1,s_1)}.
\end{align*}

\subsection{Comparison of the Generators}
\label{sec:proof_lemma}

\begin{lemma}
  \label{lemma:generator_comparison}
  Let $\bX(t)$ be the continuous time Markov chain defined in Section~\ref{ssec:state_representation}. Let $\bphi(\bx, t)$ be the value at time $t$ of the solution of the ODE $\frac{d}{dt}\bphi(\bx, t) = f(\bphi(\bx, t))$ with initial condition $\bx\in\calX$. We have
%  For $h\in C^1(\conv{\calX},\R)$ it holds that
  \begin{align*}
  \esp {\bX(t) - \bphi(\bX(0),t))} & = \int_0^t \E\Big[ \sum_{x'\in \calX} K_{\bX(\tau),x'} \bigl( \bphi(\bx',t-\tau) -  \bphi(\bX(\tau),t-\tau) \bigr) \\
  & \qquad\qquad - D_\bx \bphi(\bX(\tau),t-\tau)f(\bX(\tau)) \Big] d\tau.
  \end{align*}
  \end{lemma}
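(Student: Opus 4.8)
The plan is to fix the time horizon $t$ and introduce the time-dependent function $\psi(\bx,\tau):=\bphi(\bx,t-\tau)$, then track the expectation $G(\tau):=\esp{\psi(\bX(\tau),\tau)}=\esp{\bphi(\bX(\tau),t-\tau)}$ as $\tau$ runs from $0$ to $t$. The two endpoints are exactly the quantities of interest: since $\bphi(\bx,0)=\bx$, one has $G(t)=\esp{\bphi(\bX(t),0)}=\esp{\bX(t)}$, whereas $G(0)=\esp{\bphi(\bX(0),t)}$; hence $G(t)-G(0)$ is precisely the left-hand side $\esp{\bX(t)-\bphi(\bX(0),t)}$. Everything then reduces to computing $G(t)-G(0)=\int_0^t G'(\tau)\,d\tau$.

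To obtain $G'(\tau)$ I would apply Dynkin's formula to the time-inhomogeneous function $\psi$. Because $\calX$ is finite and all jump rates $K_{\bx,\bx'}$ are bounded (they are polynomials evaluated on the bounded set $\calX\subset\{0,1\}^{\ns}$), and because $\bphi$ is smooth in both arguments (the drift $f$ is a polynomial, so the flow is $C^\infty$ and extends smoothly to a neighborhood of $\calX$ even though the ODE is stated on $\conv{\calX}$), the process $M(\tau)=\psi(\bX(\tau),\tau)-\psi(\bX(0),0)-\int_0^\tau(\partial_u\psi+\calL\psi)(\bX(u),u)\,du$ is a mean-zero martingale, where $\calL$ is the generator of $\bX$ acting through $(\calL\psi(\cdot,u))(\bx)=\sum_{\bx'\in\calX}K_{\bx,\bx'}(\psi(\bx',u)-\psi(\bx,u))$. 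Taking expectations kills the martingale and yields $G'(\tau)=\esp{\partial_\tau\psi(\bX(\tau),\tau)+(\calL\psi(\cdot,\tau))(\bX(\tau))}$, and the generator term is already in the desired form $\sum_{\bx'}K_{\bX(\tau),\bx'}\bigl(\bphi(\bx',t-\tau)-\bphi(\bX(\tau),t-\tau)\bigr)$.

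The key step, and the one to get right, is the explicit time-derivative term $\partial_\tau\psi(\bx,\tau)=-D_t\bphi(\bx,t-\tau)$, which a priori equals $-f(\bphi(\bx,t-\tau))$ by the defining ODE but which the lemma writes as $-D_\bx\bphi(\bx,t-\tau)\,f(\bx)$. These agree because of the flow (semigroup) identity $D_t\bphi(\bx,u)=D_\bx\bphi(\bx,u)\,f(\bx)$. I would prove this by differentiating the autonomous flow property $\bphi(\bx,u+h)=\bphi(\bphi(\bx,h),u)$ with respect to $h$ at $h=0$: the left side gives $D_t\bphi(\bx,u)$, while the chain rule on the right gives $D_\bx\bphi(\bphi(\bx,0),u)\cdot\partial_h\bphi(\bx,h)\big|_{h=0}=D_\bx\bphi(\bx,u)\,f(\bx)$, using $\bphi(\bx,0)=\bx$ and $\partial_h\bphi(\bx,0)=f(\bx)$. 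Substituting $\partial_\tau\psi(\bX(\tau),\tau)=-D_\bx\bphi(\bX(\tau),t-\tau)\,f(\bX(\tau))$ into $G'(\tau)$ and integrating over $\tau\in[0,t]$ produces exactly the claimed identity.

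The expected obstacle is essentially bookkeeping rather than depth: confirming the flow identity that converts $D_t$ into $D_\bx\cdot f$, and carefully justifying the (standard but worth stating) time-dependent Dynkin formula, keeping in mind that $\bphi$ and its derivatives are evaluated at the lattice points of $\calX$ while the ODE lives on $\conv{\calX}$. The latter is harmless because $f$ is polynomial, so $\bphi(\bx,\cdot)$ is well-defined and smooth in a neighborhood of each $\bx\in\calX$, and the finiteness of $\calX$ removes any integrability or interchange-of-limit concerns.
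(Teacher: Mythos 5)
Your proposal is correct and follows essentially the same route as the paper's proof: both interpolate via $\bpsi(\tau)=\esp{\bphi(\bX(\tau),t-\tau)}$ (your $G(\tau)$), differentiate in $\tau$ to produce the generator term plus a time-derivative term, and convert $D_t\bphi$ into $D_\bx\bphi\cdot f$ by differentiating the semigroup identity $\bphi(\bx,u+h)=\bphi(\bphi(\bx,h),u)$ at $h=0$. The only cosmetic difference is that you justify the derivative of $G$ via the time-inhomogeneous Dynkin/martingale formula, whereas the paper computes the same limit of difference quotients directly and asserts the interchange of expectation and differentiation.
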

  
  \begin{proof} \color{myorange}
The following calculations are based on the ideas used in the proofs of \cite[Theorem 1]{kolokoltsovMeanFieldGames2012} and \cite[Theorem 3.1]{gast2017expected}. By defining $\bpsi(\tau) = \esp{\bphi(\bX(\tau),t-\tau)}$ we can rewrite $\E[ \bX(t) - \bphi(\bX(0),t)] = \bpsi(t) - \bpsi(0)$. At first, we derive the time derivative of $\bpsi$. We start by looking at the expected change at a given time $\tau$, $\frac{d}{ds} \bigl. \esp{ \bphi(\bX(\tau + s),t - (\tau + s)) \mid X(\tau)}\bigr|_{s=0}$, which can be written as
\begin{align*}
&  \lim_{ds \downarrow 0} \frac{1}{ds} \bigl( \esp{ \bphi(\bX(\tau + ds),t - (\tau + ds)) \mid X(\tau)} - \bphi(\bX(\tau),t - (\tau + ds))\\
& \phantom{++++++++} + \bphi(\bX(\tau),t - (\tau + ds)) - \bphi(\bX(\tau),t - \tau) \bigr).
\end{align*}
In the limit, the first difference corresponds to the generator of $\bX$ at $\tau$ and the second difference to the change of $\bphi$ due to the decrease of $t-\tau$. By taking the expectation and explicitly writing the limit terms, the derivative of $\bpsi$ is 
\begin{align*}
  \frac{d}{d\tau}\bpsi(\tau) = \E\Big[\sum_{x'\in\calX} K_{\bX(t),\bx'} \left( \bphi(\bx', t-\tau) - \bphi(\bX(\tau),t-\tau) \right) - D_t \bphi(\bX(\tau), t-\tau)\big].
\end{align*}
Note that by definition of the stochastic process the derivative with respect to time and the expectation are interchangeable. i.e. $\frac{d}{ds}\esp{\esp{\bphi(\bX(\tau + s),t-(\tau + s))\mid \bX(\tau)}\bigr|_{s=0}}$ is equal to \\ $\esp{\frac{d}{ds} \esp{\bphi(\bX(\tau + s),t-(\tau + s))\mid \bX(\tau)}\bigr|_{s=0}}$. As $\bphi(\bx,\cdot)$ is the solution of the ODE starting in $\bx$ at time $0$, we use\footnote{To see why, for $t,s\ge0$, the solution of the ODE satisfies $\frac{d}{ds}\bphi(\bx, t+s) = \frac{d}{ds}\bphi(\bphi(\bx,s), t)$. This shows that $\frac{d}{ds}\bphi(\bx, t+s) = \frac{d}{ds}\bphi(\bphi(\bx,s), t) = D_\bx\bphi(\bphi(\bx,s),t)f(\bphi(\bx,s),t)$. Evaluating this expression at time $s=0$ gives the result. Note that by definition, one also has $D_t\bphi(\bx,t)=f(\bphi(x,t))$ but the latter is hard to use in the analysis.} that $D_t\bphi(\bx,t)=D_\bx\bphi(\bx,t)f(\bx)$. The proof is concluded by rewriting\\
%\begin{align*}
  $\esp{\bX(t) - \bphi(\bX(0),t)} = \bpsi(t) - \bpsi(0) = \int_0^t \frac{d}{d\tau}\bpsi(\tau) d\tau$.
%\end{align*}
\color{black}
\end{proof}

\subsection{Proof of Theorem~\ref{th:MF} (Mean Field Approximation)}
\label{sec:proof_mf}

By Lemma \ref{lemma:generator_comparison}, we have
\begin{align}
  \label{eq:MF:generators}
  \begin{split}
  \E[ \bX(t) - \bphi(\bX(0),t) ]
  = \int_0^t \E [ \sum_{\bx' \in \calX}&  K_{\bX(\tau),\bx'} \left(\bphi(\bx',t-\tau) - \bphi(\bX(\tau), t-\tau)\right) \\
  & - D_x\bphi(\bX(\tau), t-\tau)f(\bX(\tau)) ] d\tau.
  \end{split}
\end{align}
The above expression involves terms of the form $\bphi(\bx',\tau) - \bphi(\bx, \tau)$. By using a first order Taylor's expansion, we have:
\begin{align}
  \label{eq:expansion2}
  \bphi(\bx',\tau) - \bphi(\bx, \tau) = D_\bx \bphi(\bx, \tau)(\bx'-\bx) + R_1(\bx, \bx', \tau),
\end{align}
where $R_1(\bx,\bx',\tau)$ is a remainder term that can be expressed in integral form as 
\begin{align*}
  R_1(\bx,\bx',\tau) = \int_0^1 (1-\nu)\sum_{i,j\in\calI} \frac{\partial^2 \bphi}{\partial x_{i} \partial x_{j}}(\bx + \nu(\bx'-\bx) , \tau) (\bx'_i-\bx_i)(\bx'_{j}-\bx_{j}) d\nu.
\end{align*}
Moreover, by definition of the drift, one has $\sum_{\bx'\in\calX}K_{\bx,\bx'}(\bx'-\bx) = f(\bx)$. Combining this with \eqref{eq:expansion2} and plugging this into equation \eqref{eq:MF:generators} shows that
\begin{align}
  \label{eq:proof_R2}
& \E[\bX(t) - \bphi(\bX(0),t)] = \int_0^t \E[\sum_{\bx'\in\calX} K_{\bX(\tau), x'} R_1(\bX(\tau), \bx',t-\tau)]d\tau.
\end{align}
To conclude the proof, we show in Lemma~\ref{lemma:taylor_remainder} that $\sum_{\bx'\in\calX}K_{\bx,\bx'}R_1(\bx,\bx',\tau)$ is of order $O(1/n)$. Note that obtaining this bound is the most technical step of the proof as it requires to bound the second derivative of $\bphi$ as a function of the initial condition. This is where we use the assumptions on the rates $r$.

% Note that the $O(1/n)$ hides the dependence on the time $t$ and the bounds on the derivative of the drift $f$.

% use that $\sum_{\bx'\in\calX} K_{\bX(\tau), x'} R_1(\bx, \bx',\tau)$ can be bounded by
% \begin{align}
% \frac{1}{2}\sup_{\bz,\by \in \conv{\calX}} \sum_{i,j\in \calI} \abs{\sum_{\bx'\in\calX} K_{\bz, \bx'}(\bx' - \bz)_{i}(\bx' - \bz)_{j}}\abs{\frac{\partial^2 \phi_{(k,s)}}{\partial x_{i} \partial x_{j}}(\by, t-\tau)}. \\
% \label{eq:MF:bound}
% \end{align}
% The proof is completed by Lemma \ref{lemma:taylor_remainder} which states that the above term term is of order $O(1/n)$ for any $\tau < \infty$. As a direct consequence it follows that 
% \begin{align*}
% \E[X_{(k,s)}(t)] - \phi_{(k,s)}(\bx,t) = O(1/n)
% \end{align*}
% where $O(1/n)$ hides the dependence on the time $t$ and the bounds on the derivative of the drift $f$.

% Last, we analyze that for all $\tau < \infty$ 
% \begin{align}
% \sup_{\bx,\by \in \conv{\calX}} \sum_{i,j\in \calI} \abs{Q_{i,j}(\bx)}\abs{\frac{\partial^2 \phi_{(k,s)}}{\partial x_{i} \partial x_{j}}(\by, \tau)} = O(1/n), \label{eq:MF:bound_asypt}
% \end{align}
% whose proof is given in Lemma \ref{lemma:taylor_remainder}.
% By applying the bound it follows that
% \begin{align*}
% \E[X_{(k,s)}(t)] - \phi_{(k,s)}(\bx,t) = O(1/n).
% \end{align*}

\subsection{Proof of Theorem~\ref{th:RMF} (Refined mean field approximation)}
\label{sec:proof_rmf}

The proof of Theorem~\ref{th:RMF} uses the same methodology as the proof of Theorem~\ref{th:MF} with two additional ideas: 
The first is to use a second-order Taylor expansion instead of the first order expansion used in \eqref{eq:expansion2}. The second is to express the refinement term $\bv$ as an integral of quantities that depend on the second derivative of $\bphi$.

By using a second order Taylor expansion of $\bphi$, it holds that
\begin{align}
  \label{eq:expansion3}
  \bphi(\bx',\tau) - \bphi(\bx, \tau) = D_\bx \bphi(\bx, \tau)(\bx'-\bx) + \sum_{i,j \in \calI} Q_{i,j}(\bx) \frac{\partial^2 \bphi}{\partial x_{i} \partial x_{j}} (\bx, \tau) +  R_2(\bx, \bx', \tau),
\end{align}
where the remainder term $R_2$ is equal to
\begin{align*}
  R_2(\bx,\bx',\tau)=\frac{1}{2} \int_0^1 (1{-}\nu)^2 \sum_{i,j,u\in\calI} \frac{\partial^3 \bphi}{\partial x_{i} \partial x_{j} \partial x_{u}} (\bx + \nu(\bx'{-}\bx) , \tau) (\bx'_i{-}\bx_i)(\bx'_j{-}\bx_j)(\bx'_u{-}\bx_u)d\nu
\end{align*} \color{myorange}
and $Q_{i,j}(\bx)$ is given by $\sum_{\bx' \in \calX} K_{\bx, \bx'}(x_i' - x_i)(x_j' - x_j) $ which we formally introduce in Appendix \ref{apx:rmf_def}. \color{black}

To simplify notations, let $\bg(\bx, \tau) := \sum_{i,j \in \calI} Q_{i,j}(\bx) \frac{\partial^2 \bphi}{\partial x_{i} \partial x_{j}} (\bx, \tau)$. Similarly to \eqref{eq:proof_R2}, we have
\begin{align}
  \label{eq:proof_R3}
& \E[\bX(t)] {-} \bphi(\bX(0),t)= \underbrace{\frac{1}{2}\int_0^t\!\! \E [ \bg(\bX(\tau),t{-}\tau)]d\tau}_{\text{$\approx \bv(\bx,t){+}O(\frac{1}{n^2})$ by \ref{lemma:connection_ode_integral_form} and \ref{lemma:refinement_bound}.}} + \int_0^t \!\!\E [ \underbrace{\sum_{\bx' \in \calX}K_{\bX(\tau),\bx'}R_2(\bX(\tau),\bx',t{-}\tau)}_{\text{$O(1/n^2)$ by Lemma~\ref{lemma:taylor_remainder}.}} ] d\tau.
\end{align}

By using the approach we used for $R_1$, we prove in Lemma~\ref{lemma:taylor_remainder} that the last term of the above equation (that involves a sum of $R_2$) is of order $O(1/n^2)$. This is quite technical and done by carefully bounding the first, second and third derivatives of $\bphi(\bx,t)$ with respect to its initial condition. We are then left with the first term of Equation~\eqref{eq:proof_R3}. By Lemma \ref{lemma:connection_ode_integral_form}, the refinement term $\bv$ can be expressed in integral form as $v_{(k,s)}(\bx, t) = \frac{1}{2}\int_0^t g_{(k,s)}(\bphi(\bx, \tau), \tau) d\tau$. This shows that
\begin{align*}
  \frac{1}{2} \int_0^t \E [g_{(k,s)}(\bX(\tau),\tau)]d\tau- v_{(k,s)}(\bX(0), t)&=\frac{1}{2} \int_0^t \E [g_{(k,s)}(\bX(\tau),\tau)- g_{(k,s)}(\bphi(\bX(0), \tau),\tau)]d\tau.
\end{align*}

We show in Lemma~\ref{lemma:refinement_bound} that the above term is of order $O(1/n^2)$. This requires to bound up to the fourth derivative of $\bphi$ with respect to its initial condition. Plugging everything into Equation~\eqref{eq:proof_R3} shows that $\E[\bX(t)] - \bphi(\bX(0),t) - \bv(\bX(0),t) = O(1/n^2)$ and concludes the proof.

\section{Conclusion}
\label{sec:conclusion}

In this paper, we show how to derive mean field and a refined mean field approximation for systems composed of $n$ heterogeneous objects. Most of the results which guarantee that mean field approximation is asymptotically correct assume that the system is composed of a population of $n$ homogeneous objects, or at least can be clustered into a finite number of classes of objects and let the number of objects in each class goes to infinity. A possible approach to derive a (refined) mean field approximation for a heterogeneous is to consider a scaled model with $C$ copies of each of the $n$ objects. Classical methods show that the (refined) mean field approximations are asymptotically exact as $C$ grows.

Our paper is the first to show that applying this method for the original system (with $C=1$ object of each of the $n$ class) is indeed valid. The main results of our paper, namely Theorem~\ref{th:MF} and \ref{th:RMF}, show that the accuracy of the mean field and refined mean field approximation is $O(1/n)$ and $O(1/n^2)$. We illustrate our results by considering two examples: a model of cache replacement policies, and a load balancing model. These examples show that the proposed approximations can be computed efficiently and are very accurate. They also show that taking heterogeneity into account is important to characterize precisely the quantitative behavior of such systems. 

When studying the performance of large computer systems, heterogeneity is often neglected because it increases the complexity of the model and because there are few tools to analyze such systems. We believe that our work has potential application in many models and will foster the development of the analysis of heterogeneous systems (such as load balancing or epidemic models).

\begin{acks}
  This work is supported by the French National Research Agency (ANR) through REFINO Project under Grant ANR-19-CE23-0015.  
\end{acks}

\bibliographystyle{ACM-Reference-Format}
\bibliography{references,biblio}

\appendix 

\section{Notation List}
\label{apx:notations}

\begin{center}
\begin{tabular}{l l}
$n$ & system size\\
$a_{\{\ldots\}}, b_{\{\ldots\}}$ & unilateral  and pairwise transition parameters; Equations \eqref{eq:unilateral}, \eqref{eq:interactions} \\
$\bS(t)$ & Markov chain describing the heterogeneous population model; Section \ref{ssec:interaction_model_and_definition} \\
$\bX(t)$ & binary based representation of the model described by $\bS(t)$; Section \ref{ssec:state_representation} \\
$\calS $ & finite state space of objects in the population model described by $\bS$ and $\bX$ \\
$s, \hat{s}, s', s_1$ & notation for states \\
$k, \hat{k}, k', k_1$ & notation for objects \\
$X_{(k,s)}(t)$ & entry of the Markov chain $\bX$ indicating if object~$k$ is in state~$s$ at time $t$ \\
$\bdrift, \bQ, \bR $ &  drift of the stochastic system and related tensors; Section \ref{ssec:drift_mean_field}, Appendix \ref{apx:rmf_def}, Lemma \ref{lemma:taylor_remainder} \\
$\bphi(\bx, t)$ & solution to the ODE given by the drift $\bdrift$ of the system; Section \ref{ssec:drift_mean_field} \\
$\bv(\bx, t)$ & refinement term; Section \ref{ssec:rmf:definition}, Appendix \ref{apx:rmf_def}\\
$\bw(\bx, t)$ & solution to the second set of differential equations of the refinement; Appendix \ref{apx:rmf_def} \\
$t, \tau, \nu$ & time and integration variables \\ 
$\calI, \calI_k$ & sets of object-state pairs defined by $\{1,\ldots,n\}\times \calS$ and $\{k\}\times \calS$ respectively\\
$ i,j,w,u,l $ & indices used for elements of the sets $\calI$ and $\calI_k$ \\
$\bx, \by, \bz$ & initial conditions for Markov chain $\bX$ and corresponding mean field approximation \\
$K_{\bx,\bx'}$ & transition rate for Markov chain $\bX$ from state $\bx$ to $\bx'$, Lemma \ref{lemma:taylor_remainder} \\
$L_{1, 2}$ & bounds for first or second partial derivatives of the drift $\bdrift$ \\
$ \bx \otimes\by $ & Kronecker product of $\bx$ and $\by$ \\
$\bx^{\otimes2}, \bx^{\otimes3}$ & Kronecker product of $\bx$ with itself ($\bx^{\otimes2} = \bx \otimes\bx; \  \bx^{\otimes3} = \bx \otimes\bx\otimes\bx$); Lemma \ref{lemma:taylor_remainder} \\
$D, D^2, D_{\bx}$ & 1st and 2nd order derivative, derivative with respect to $\bx$ \\
$ \frac{\partial }{\partial x_i} $ & partial derivative with respect to $x_i$ \\
$ \frac{d }{dt}, \frac{\partial }{\partial t}, \dot{x}(t)$ & derivative with respect to time $t$ \\
$ \mathbf{1}_{\{ a>b \}} $ & indicator function, i.e. indicating if $a>b$ \\
\end{tabular}
\end{center}

\section{Equation for the Mean Field and Refined Mean Field Approximations} 
\color{myorange}

\subsection{General drift definition}
\label{apx:drift_def}

For completeness, we give the general form of the drift $f\toN$ for a heterogeneous interaction model having up to $d_{\max}$ interacting objects. The drift in $(k,s)$ is derived from interactions that imply either object~$k$ transitions into state~$s$ ($s' \rightarrow s$) or leaves state~$s$ ($s \rightarrow s'$). By considering all these interactions, the $(k,s)$ component of the drift $f\toN(\bx)$ is
\begin{align}
  & \sum_{s' \ne s} (r\toN_{k,(s') \rightarrow (s)} x_{(k,s')} - r\toN_{k,(s) \rightarrow (s')} x_{(k,s)})  \label{eq:drift_apx}\\
  & \phantom{+} + \frac1n \sum_{s',k_1, s_1,s_1'} (r\toN_{k,k_1,(s',s_1')\rightarrow (s,s_1)} x_{(k,s')}x_{(k_1,s_1)} - r\toN_{k,k_1,(s,s_1) \rightarrow (s',s'_1)} x_{(k,s)}x_{(k_1,s_1')}) \nonumber \\
  & \phantom{++} + \sum_{d=3,\hdots,d_{\max}} \frac{1}{n^{d-1}}\sum_{{k_1,...,k_{d-1} \atop s_1,...,s_{d-1}} \atop s', s_1', \hdots, s_{d-1}'} r_{k,\hdots,k_{d-1},(s',\hdots,s_{d-1}') \rightarrow (s,\hdots,s_{d-1})}\toN x_{(k,s')}^{(n)}\hdots x_{(k_{d-1},s_{d-1}')}^{(n)} \nonumber \\
  & \phantom{++ + \sum_{d=3,\hdots,d_{\max}}^{n} \frac{1}{n^{d-1}}\sum_{{k_1,...,k_{d-1} \atop s_1,...,s_{d-1}} \atop s', s_1', \hdots, s_{d-1}'}++++++} - r\toN_{k,\hdots,k_{d-1},(s,\hdots,s_{d-1}) \rightarrow (s',\hdots,s_{d-1}')} x_{(k,s)}^{(n)}\hdots x_{(k_{d-1},s_{d-1})}^{(n)} \nonumber
\end{align} 
In the above equation, we sum over all permutations such that the first objects is fixed to $k$. This counters the factor $1/d$ of the definition of the rates. Therefore the sum can be written as $\frac{1}{n^{d-1}}\sum_{{k_1,...,k_{d-1} \atop s_1,...,s_{d-1}} \atop s',s_1, \hdots, s_{d-1}'} r_{k,\hdots,k_{d-1},(s',\hdots,s_{d-1}') \rightarrow (s,\hdots,s_{d-1})}\toN x_{(k,s')}^{(n)}\hdots x_{(k_{d-1},s_{d-1}')}^{(n)}$. Without loss of generality, we fix the index order $k,k_1,\hdots$ to simplify the mathematical notations. This simplification comes from the fact that we assumed for any permutation $\sigma$ of the set $\{1\dots d\}$, the rates satisfy $r_{k_1, \dots, k_{d}, (s_1, \dots, s_{d}) \rightarrow (s_1', \dots, s_{d}')}\toN = r_{k_{\sigma(1)}, \dots, k_{\sigma(d)}, (s_{\sigma(1)}, \dots, s_{\sigma(d)}) \rightarrow (s_{\sigma(1)}', \dots, s_{\sigma(d)}')}\toN$.

\subsection{Definition of the Refined Mean Field Approximation}
\label{apx:rmf_def}

% As said earlier, the expansion term $\bv$ is the same as the one defined in \cite{gastSizeExpansionsMean2019} for a system with $C$ copies of each of the $n$ objects. For completeness, in this section we recall the definition of $\bv$, and show how $\bv$ can be computed from the rates $a$ and $b$. For a given $\bx\in\calX$, we first define the tensor $\bQ(\bx)$ whose $((k,s),(k',s'))$ entry is
% \begin{align*}
%   \bQ_{(k,s),(k',s')}(\bx) = \sum_{\bx'\in\calX} K_{\bx,\bx'}(\bx'_{(k,s)}-\bx_{(k,s)})(\bx'_{(k',s')}-\bx_{(k',s')}).
% \end{align*}

% In this section we recall the definition of $\bv$ and show how it can be computed from the rates of the model introduced in section \ref{sec:model}. At first, however, we point out when $\bv$ coincides with the one defined in \cite{gastSizeExpansionsMean2019} and what are the main differences for our model. For a system where each of the $n$ heterogeneous objects has $C$ copies, as introduced in \textcolor{red}{...}, the adapted refinement and the one of \cite{gastSizeExpansionsMean2019} coincide. However, to the best of our knowledge, accuracy bound for such a partially heterogeneous setup were only dependent on $C$ which had ... using the heterogeneous model we show in \textcolor{red}{...} that $\Proba{S_k(t)=s} = x_{(k,s)}(t) + (1/C) v_{(k,s)}(t) + O(1/(Cn)^2)$ which increases the accuracy bound on this partially heterogeneous setup compared to previous results. 

In this section we show how the definition of the refinement term $\bv$ from \cite{gastSizeExpansionsMean2019} can be adapted and how it can be computed using the rates of the model introduced in section \ref{sec:model}. In \cite{gastSizeExpansionsMean2019}, the refinement term is based on a density representation of the stochastic system and therefore independent of the state of individual objects. Since our model representation takes the state of each object into account, we extend the definition of their refinement term $\bv$ to object-state pairs with the following set of ODEs (for better readability we suppress the dependence on~$n$ in the definitions)
\begin{align*}
\dot{v}_{(k_1,s_1)}(\bx,t) &= \sum_{u \in \calI} \frac{\partial \bdrift_{(k_1,s_1)}}{\partial x_u}(\bphi(\bx,t)) v_{u}(\bx,t) + \frac{1}{2}\sum_{u,l \in \calI} \frac{\partial^2\bdrift_{(k_1,s_1)}}{\partial x_l \partial x_u}(\bphi(\bx,t))w_{u,l}(\bx,t), \\
\dot{w}_{(k_1,s_1),(k_2,s_2)}(\bx,t) &= \sum_{u \in \calI}w_{u,(k_2,s_2)}(\bx,t) \frac{\partial \bdrift_{(k_1,s_1)}}{\partial x_{u}}(\bphi(\bx,t))  \\
& \qquad + \sum_{u \in \calI}w_{u,(k_1,s_1)}(\bx,t) \frac{\partial \bdrift_{(k_2,s_2)}}{\partial x_u}(\bphi(\bx,t)) + Q_{(k_1,s_1),(k_2,s_2)}(\bphi(\bx,t)),
\end{align*}
with initial conditions $\bv(\bx, 0) = 0, \bw(\bx, 0) = 0$. The values of $\bv$ and $\bw$ should be interpreted as the leading correction terms for the first moment and covariance of $\bX(t) - \bphi(\bx,t)$. The value of $\bQ$ is given by the expected change of the covariance of the stochastic system which, for a given state $\bx$, is
\begin{align*}
  \bQ_{(k,s),(k',s')}(\bx) = \sum_{\bx'\in\calX} K_{\bx,\bx'}(\bx'_{(k,s)}-\bx_{(k,s)})(\bx'_{(k',s')}-\bx_{(k',s')}).
\end{align*}
To make the definitions less abstract, we give explicit formulas of $\bQ$ and the derivative of the drift $f$ when considering a heterogeneous model having at most pairwise interactions. We start by characterizing the elements of $\bQ$ evaluated at $\bx$ (here $s\ne s'$ and $k\ne k'$) 
\begin{align*}
Q_{(k,s),(k,s)}(\bx) &{=} \sum_{s'}r_{k,(s)\rightarrow (s')}x_{(k,s)} {+} r_{k,(s')\rightarrow (s)}x_{(k,s')} \\ 
& \phantom{+++} {+} \frac1n \!\!\sum_{k_1, s_1,s_1',s'}\!\!\!\! r_{k,k_1,(s,s_1)\rightarrow (s',s_1')}x_{(k,s)}x_{(k_1,s_1)} {+} r_{k,k_1,(s',s_1')\rightarrow (s,s_1)}x_{(k,s')}x_{(k_1,s_1')}, \\
Q_{(k,s),(k,s')}(\bx) &{=} {-} r_{k,(s)\rightarrow (s')}x_{(k,s)} {-} r_{k,(s') \rightarrow (s)}x_{(k,s')} \\
& \phantom{+++} {+} \frac1n \!\sum_{k_1,s_1,s_1'}\!\!\!\!- r_{k,k_1,(s,s_1)\rightarrow (s',s_1')}x_{(k,s)}x_{(k_1,s_1)} {-} r_{k,k_1,(s',s_1) \rightarrow (s,s_1')}x_{(k,s')}x_{(k_1,s_1)}, \\
Q_{(k,s),(k',s_1)}(\bx) &{=} \frac1n \sum_{s',s_1'} r_{k,k',(s,s_1) \rightarrow (s',s_1')}x_{(k,s)}x_{(k',s_1)} - r_{k,k',(s',s_1') \rightarrow (s,s_1)}x_{(k,s')}x_{(k',s_1')}, \ k \ne k'.
\end{align*} 
The first and second partial derivatives of the drift $f$ are given by

\begin{align*}
& \frac{\partial f_{(k,s)}}{\partial x_{(k,s)}}(\bx) = {-} \sum_{s'} r_{k,(s)\rightarrow(s')} {-} \frac1n \sum_{k_1\ne k, s_1, s_1', s'\neq s} r_{k,k_1, (s, s_1)\rightarrow (s', s_1')}x_{(k_1,s_1)} \ , \\
& \frac{\partial f_{(k,s)}}{\partial x_{(k,\tilde{s})}}(\bx) = r_{k,(\tilde{s})\rightarrow (s)} +\frac1n \sum_{k_1 \ne k,s_1,s_1'} r_{k,k_1,(\tilde{s},s_1)\rightarrow (s ,s_1')}x_{(k_1,s_1)} && \tilde{s} \ne s, \\
& \frac{\partial f_{(k,s)}}{\partial x_{(\tilde{k},\tilde{s})}}(\bx) = \frac1n\sum_{s',s_1'} r_{k,\tilde{k},(s',\tilde{s})\rightarrow(s,s_1')}x_{(k,s')} {-} r_{k,\tilde{k},(s,\tilde{s})\rightarrow (s',s_1')}x_{(k,s)} && \tilde{k} \ne k,
\end{align*}

\begin{align*}
& \frac{\partial^2 f_{(k,s)}}{\partial x_{(\hat{k},\hat{s})} \partial x_{(k,s)}}(\bx) = {-} \frac1n \sum_{s_1', s'} r_{k,\hat{k}, (s, \hat{s}) \rightarrow (s', s_1')} && \hat{k} \ne k,\\
& \frac{\partial^2 f_{(k,s)}}{\partial x_{(\hat{k},\hat{s})} \partial x_{(k,\tilde{s})}}(\bx) = \phantom{-}\frac1n \sum_{s_1'} r_{k,\hat{k}, (\tilde{s}, \hat{s})\rightarrow (s', s_1')} && \hat{k} \ne k,\tilde{s} \ne s, \\
& \frac{\partial^2 f_{(k,s)}}{\partial x_{(k,\hat{s})} \partial x_{(\tilde{k},\tilde{s})}}(\bx) = \phantom{-}\frac1n \sum_{s_1'} r_{k,\tilde{k}, (\hat{s}, \tilde{s})\rightarrow (s, s_1')} && \tilde{k} \ne k,\\ 
& \frac{\partial^2 f_{(k,s)}}{\partial x_{(k,\hat{s})} \partial x_{(\tilde{k},\tilde{s})}}(\bx) = {-} \frac1n \sum_{s', s_1'} r_{k,\tilde{k}, (s, s') \rightarrow (\tilde{s}, s_1')} && \tilde{k} \ne k,\\
& \frac{\partial^2 f_{(k,s)}}{\partial x_{(\hat{k},\hat{s})} \partial x_{(\tilde{k},\tilde{s})}}(\bx) = 0 && \tilde{k}, \hat{k} \ne k.
\end{align*}

Note that if interactions of more than two objects occur, the above formulations include additional rates and higher order derivatives of the drift are non-zero.

\color{black}

\section{Cache Replacement Policies}
\subsection{Computation of the exact steady-state probabilities}
\label{apx:cache_adaptation}

It is shown in \cite{gastTransientSteadystateRegime2015} that the steady-state distribution of the RANDOM($\bm$) cache replacement policy has a product-form, which the authors use to derive the per-object miss probability. Here, we show how to adapt the same methodology to compute the steady-state probability for an object to be in list $s$. Our approach is very similar to the one developed in \cite{gastTransientSteadystateRegime2015} but leads to a slightly different recurrence equation.

Recall that $S_k$ denotes the list in which object $k$ is (where $0$ means that the object is not in the cache). We say that a state $\bS$ is admissible for $\bm$ if the number of objects in list $s$ is exactly $m_s$ for all $s\in\{1,\dots ,S\}$. Theorem~6 of \cite{gastTransientSteadystateRegime2015} can be rephrased as follows: For any admissible state $\bS$, the steady-state probability of $\bS$ is equal to
\begin{align*}
  \pi(\bS) = \frac1{C(\bm,n)}\prod_{k=1}^n (\lambda_k)^{S_k},
\end{align*}
where $C(\bm,n)=\sum_{\text{$\bS$ admissible for $\bm$}}\prod_{k=1}^n (\lambda_k)^{S_k}$ is a constant such that the probabilities $\pi(\bS)$ sum to one. Note that the constant $C(\bm,n)$ is not the same as the constant $E(\bm,n)$ defined in \cite{gastTransientSteadystateRegime2015} because our configuration $\bS$ does not take into account the position in a list in which an object is but only takes into account the list in which an object is: there is a $\prod_s m_s!$ factor between the two.

By decomposing the set of admissible configurations depending on the list in which object $n$ is (either outside the cache or in list $s$), we get that:
\begin{align*}
  C(\bm,n)=\sum_{\text{$\bS$ admissible for $\bm$}}\prod_{k=1}^n (\lambda_k)^{S_k}
  &=\sum_{s=0}^S \sum_{\text{$\bS$ admissible for $\bm$ and object $n$ is in list $s$}}\prod_{k=1}^n (\lambda_k)^{S_k}\\
  &= C(\bm,n-1)+\sum_{s=1}^S (\lambda_n)^{s}C(\bm-\be_s,n-1),
\end{align*}
with the convention that $C(\bm,n)=0$ if $\bm=\mathbf{0}$ or if $\sum_s m_s>n$. Indeed, there is a bijection between the admissible configurations for $\bm$ with $n$ objects in which object $n$ is in list $s$ and the configurations for $\bm-\be_s$ with $n-1$ objects.

Similarly, the probability for object $n$ to be in list $s$ is the sum over all admissible configurations such that object $n$ is in list $s$ which corresponds to the set of admissible configurations for $\bm-\be_s$ with $n-1$ objects. Hence, we have
\begin{align}
  \label{eq:cache_exact}
  \pi_{n,s}^{\text{exact}} = \frac{(\lambda_n)^s C(\bm-\be_s, n-1)}{C(\bm,n)}.
\end{align}
The above recurrence equations can be used to compute the exact value of $\Proba{S_k^{(n)}=s}$ for all $s$. By reordering the objects, it can be also used to compute the recurrence equation for all objects $k$. The naive complexity of such an equation grows in $O(n^2\prod_{s}m_s)$ and can be lowered to $O(n\log n\prod_{s}m_s)$ by carefully reordering the objects. This means that for relatively small values of $\bm$, it is possible to compute an exact value for $\Proba{S^{(n)}_k=s}$. Note that in practice, the complexity is quite large as soon as the list sizes grow. For instance, our implementation does not allow us to calculate the values for more than $3$ lists of size $10$.

{\color{myorange}
\subsection{Theorem~6 of \cite{gastTransientSteadystateRegime2015} is a consequence of our results (and can be refined)}
\label{apx:cache_adaptation2}

The cache replacement policy RAND($\bm$) that we study in Section~\ref{ssec:cache} is essentially the same\footnote{One difference between the two model is that we consider a continuous time model where object $k$ is requested at rate $\lambda_k$ and the authors of \cite{gastTransientSteadystateRegime2015} consider a discrete-time model where object $k$ is requested with probability $p_k=\lambda_k / \sum_{\ell}\lambda_\ell$. Up to re-normalizing the time by $\sum_{\ell}\lambda_\ell$, these two models are essentially equivalent.} as the one studied in \cite{gastTransientSteadystateRegime2015}. In \cite{gastTransientSteadystateRegime2015}, the authors denote by $H_s(t)=\sum_{k}p_k X_{k,s}(t)$ the sum of the items' popularity that are in list $s$ at time $t$, and by $\rho_s(t)=\sum_{k}p_k x_{k,s}(t)$ its mean field approximation, where $p_k=\lambda_k/(\sum_{k'}\lambda_{k'})$ is the request probability for object $k$. Theorem~6 of \cite{gastTransientSteadystateRegime2015} implies that for $t\le T$
\begin{align*}
  \esp{\norm{ H_s(t) - \rho_s(t)}^2} = O( \max_k p_k + \max_s \frac1{m_s}),
\end{align*}
which implies that if the popularities of items are such that $\lambda_k=O(1)$ and the list size are such that $m_s=O(1/n)$, then
\begin{align*}
  \esp{\norm{ H_s(t) - \rho_s(t)}^2} = O(1/n)
\end{align*}
This is a $O(1/\sqrt{n})$ convergence result because it implies that
\begin{align*}
  \esp{\norm{ H_s(t) - \rho_s(t)}} = O(1/\sqrt{n}).
\end{align*}
We do not think that the proof of the main result of \cite{gastTransientSteadystateRegime2015} is correct. Lemma 1 implies is that $M(t)$ is a Martingale such that $\esp{ \norm{M(t+1)-M(t)}^2} \le c$. Later in the proof, the authors argue that this implies that $\esp{\norm{M(t)}^2} \le ct$. This would hold if the norm could be written as a scalar product $\norm{M}_2^2 =\langle M,M\rangle$. Indeed, in such a case one would have:
\begin{align}
  \label{eq:L2_inequality}
  \esp{\norm{M(t+1)}_2^2} &= \esp{\norm{M(t)}_2^2 + 2\underbrace{\cro{M(t+1)-M(t), M(t)}}_{=0}+ \norm{M(t+1)-M(t)}_2^2}\\
  &\le \esp{\norm{M(t)}_2^2} + c,\nonumber
\end{align}
where the second term equals $0$ because $\esp{M(t+1)-M(t)\mid M(t)}=0$. A direct recurrence would imply that $\esp{\norm{M(t+1)}_2^2} \le ct$. 

The problem is that the norm used in \cite{gastTransientSteadystateRegime2015} can be written as a supremum norm (it is a supremum norm) and we not think that it can be written as a scalar product. This implies that one cannot use the reasoning of Equation~\eqref{eq:L2_inequality}, which means that this inequality does not hold for their case.

Yet, we claim that the result of their Theorem~6 holds, and can in fact be refined by using our approach. To see that, we rewrite the difference between $H$ and $\rho$ as:
\begin{align}
  \label{eq:L2_equality}
  \esp{\norm{ H_s(t) - \rho_s(t)}^2} &= \sum_{k_1,k_2} p_{k_1} p_{k_2} \esp{(X_{k_1,s}(t)-x_{k_1,s}(t))(X_{k_2,s'}(t)-x_{k_2,s'}(t))}
\end{align}
We claim that the proof of Theorem~\ref{th:RMF} can be adapted to show that:
\begin{align*}
  \esp{(X_{k_1,s}(t)-x_{k_1,s}(t))(X_{k_2,s'}(t)-x_{k_2,s'}(t))} &= \left\{
    \begin{array}{ll}
      w_{(k_1,s),(k_2,s')} + O(1/n) & \text{if $k_1=k_2$}\\
      w_{(k_1,s),(k_2,s')} + O(1/(n^2)) & \text{if $k_1\ne k_2$},
    \end{array}
  \right.
\end{align*}
where $w$ is defined in Appendix~\ref{apx:rmf_def} and is such that:
\begin{align*}
  w_{(k_1,s),(k_2,s')} = \left\{
  \begin{array}{ll}
    O(1) & \text{if $k_1=k_2$}\\
    O(1/n) & \text{if $k_1\ne k_2$},
  \end{array}
  \right.
\end{align*}

This implies that
\begin{align*}
  \esp{\norm{ H_s(t) - \rho_s(t)}^2}  = \underbrace{\sum_{k_1,k_2}  p_{k_1} p_{k_2} W_{(k_1,s),(k_2,s)}(t)}_{=O(1/n)} + O(1/n^2).
\end{align*}
The above equation refined Theorem~6 of \cite{gastTransientSteadystateRegime2015} by not only proving that the term is of order $O(1/n)$ but also by providing the expansion term.
}
\subsection{Cache replacement policies: time to compute the fixed point.}
\label{apx:cache_computation}

{\color{myorange}
In Table~\ref{tab:cache_steady-state}, we show that computing the fixed point of the refined mean field approximation takes less than $50$ms for a cache replacement model with two lists and $n=50$ heterogeneous objects.  To explore further how this computation time scales with $n$ or the number of lists of the cache, we report in Table~\ref{table:computation_time} the time to compute the fixed point of the mean field and refined mean field approximation for up to $1000$ items and between $2$ to $4$ lists. The total number of values to be computed here is $n\times S$ where $S$ is the number of lists. We observe that the mean field approximation is relatively fast to compute for all considered values. The refined mean field takes more time but remains reasonable when we have at most $n=1000$ objects. For $n=1000$ and $2$ lists, the computation times is much larger (more than $10$ times larger). We believe that this huge increase of computation time might be due to memory contention when \texttt{scipy} tries to solve a very big linear system (with $|n\calS|^2 = 4\text{ millions}$ of variables). 

\begin{table}[ht]
  \caption{Time to compute the fixed point of the mean filed and refined mean field approximation for the RAND($\bm$) model for various values of $n$ and $\bm$.}
  \label{table:computation_time}
  \color{myorange}
  \begin{tabular}{|ccc|c|c|}
    \hline
    $n$ & $\mathbf{m}$ & $n|\calS|$ & Time (mean field) & Time (Refined mean field)\\\hline
    30 & [6, 6, 6] & 90 & 40ms &50ms\\\hline
    50 & [10, 10, 10] & 150 & 50ms &72ms\\\hline
    100 & [20, 20, 20, 20] & 400 & 263ms &458ms\\\hline
    200 & [40, 40, 40] & 600 & 137ms &881ms\\\hline
    200 & [40, 40, 40, 40] & 800 & 370ms &2s\\\hline
    300 & [60, 60, 60] & 900 & 186ms &4s\\\hline
    500 & [150, 150] & 1000 & 121ms &6s\\\hline
    1000 & [300, 300] & 2000 & 222ms &71s\\\hline
  \end{tabular}
\end{table}

}

%%%%%%%%%%%%%%%%%%%%%%%%%%%%%%%%%%%%%%%%%%%%%%%%%%%%%%%%%%
%%%%%%%%%%%%%%%%%%%%%%%%%%%%% Section: Proofs
%%%%%%%%%%%%%%%%%%%%%%%%%%%%%%%%%%%%%%%%%%%%%%%%%%%%%%%%%%

\section{Technical lemmas}
\label{apx:proofs}

\subsection{Bounds for Partial Derivatives of $\bphi$}
\label{apx:proof_bounds_partial_derivatives}

In Lemma \ref{lemma:properties_phi} we analyze the properties of the partial derivatives of $\phi_{(k,s)}(\bx,t)$ with respect to the initial condition $\bx$. We introduce the set $\calI := \{1, \dots,n \} \times \calS$ and $\calI_{k} = \{k \} \times \calS$ to simplify notations for frequently appearing sums in the lemma and proof. The set $\calI$ encompasses all object-state tuples $(k,s)$, the set $\calI_{k}$ includes tuples $(k,s)$ with fixed object $k$. We emphasize that  the bounds for the partial derivatives differ substantially depending on whether $\phi_{(k,s)}(\bx,t)$ is derived with respect to $(k,\hat{s}) \in \calI_k$ or $(\tilde{k},\tilde{s}) \in \calI \setminus \calI_k$. Our results show that if the sum over the states of the absolute values of the partial derivatives of $\phi_{(k,s)}(\bx,t)$, $\sum_{\hat{s}\in\calS} \abs{\frac{\phi_{(k,\hat{s})} }{x_{(k,\hat{s})}} (\bx,t)}$, is derived with respect to the same object $k$, i.e. in direction of a object-state pair $(k,\hat{s})$, it can be bounded independent of $n$. However, if the same sum is derived with respect to a pair $(\tilde{k},\tilde{s}) \in \calI \setminus \calI_k$ it is of order $O(1/n)$. Subsequently, we can show similar properties for sums of higher partial derivatives such as $\sum_{\hat{s}\in\calS} \abs{\frac{\partial \phi_{(k,\hat{s})}}{\partial x_{(\tilde{k},\tilde{s})} \partial x_{(\hat{k},\hat{s})} }(\bx,t)}$. For the second partial derivatives we see that if at least one of the derivative direction is in $\calI_k$ the sum is bounded by $O(1/n)$ and otherwise, if $(\tilde{k},\tilde{s}),(\hat{k},\hat{s}) \in \calI \setminus \calI_k$, the sum is of order $O(1/n^2)$. Our analysis considers partial derivatives up to the fourth order for which we establish bounds with likewise properties. A direct consequence we frequently use is that the absolute value of the partial derivative of $\phi_{(k,s)}(\bx,t)$ can be bounded by the previously mentioned sums, for example $\abs{\frac{\partial \phi_{(k,s)}}{\partial x_{(\tilde{k},\tilde{s})}}(\bx,t)} \leq \sum_{\hat{s}\in\calS} \abs{\frac{\phi_{(k,\hat{s})} }{\partial x_{(\hat{k},\hat{s})}} (\bx,t)}$. Thus, same bounds hold for $\abs{\frac{\partial \phi_{(k,s)}}{\partial x_{(\tilde{k},\tilde{s})}}(\bx,t)}$ and absolute values of higher order partial derivatives.

\begin{lemma}
\label{lemma:properties_phi}
Given the solution $\bphi$ of the ODE defined in section \ref{ssec:drift_mean_field}. For the partial derivatives of $\bphi_{(k,s)}$ with respect to the initial condition $\bx \in \calX$ and $(k,s)\in\calI = \{1,\ldots,n\} \times \calS$ the following properties hold:

\begin{enumerate}[label=(\alph*)]
  \item If $i,j,w$ and $l$ are in  $\calI \setminus \calI_k = \{1,\ldots,k-1,k+1,\ldots,n\}\times \calS$, i.e. none of the tuples $i,j,w$ or $l$ refer to object~$k$, then
\begin{align*}
(a.1)\quad& \sum_{s\in\calS} \abs{\frac{\partial \phi_{(k,s)}}{\partial x_{i}}(\bx,t)} = O(1/n),
& (a.2)\quad& \sum_{s\in\calS} \abs{\dd{\phi_{(k,s)}}{i}{j}(\bx,t)} = O(1/n^2), \\
(a.3)\quad& \sum_{s\in\calS} \abs{\ddd{\phi_{(k,s)}}{i}{j}{w}(\bx,t)} = O(1/n^3),
& (a.4)\quad& \sum_{s\in\calS} \abs{\dddd{\phi_{(k,s)}}{i}{j}{w}{l}(\bx,t)} = O(1/n^4).
\end{align*}
  \item Otherwise, if any tuple $i,j,w$ or $l$ is in  $\calI_k= \{k\}\times \calS$ then, for the same summations, 
\begin{align*}
(b.1)\quad& \sum_{s\in\calS} \abs{\frac{\partial \phi_{(k,s)}}{\partial x_{i}}(\bx,t)} = O(1),
& (b.2)\quad& \sum_{s\in\calS} \abs{\dd{\phi_{(k,s)}}{i}{j}(\bx,t)} = O(1/n), \\
(b.3) \quad & \sum_{s\in\calS} \abs{\ddd{\phi_{(k,s)}}{i}{j}{w}(\bx,t)} = O(1/n^2),
& (b.4) \quad& \sum_{s\in\calS} \abs{\dddd{\phi_{(k,s)}}{i}{j}{w}{l}(\bx,t)} = O(1/n^3).
\end{align*}

\end{enumerate}

\end{lemma}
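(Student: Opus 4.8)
The plan is to control the partial derivatives of $\bphi$ with respect to the initial condition through the \emph{variational (sensitivity) equations} that they satisfy, exploiting the fact that in a heterogeneous population model two distinct objects are coupled only through the $O(1/n)$ pairwise rates. Differentiating the ODE $\frac{d}{dt}\bphi = f(\bphi)$ with respect to a coordinate $x_i$ of the initial condition shows that $y^i_{(k,s)}(t):=\frac{\partial \phi_{(k,s)}}{\partial x_i}(\bx,t)$ solves the linear system $\dot y^i_{(k,s)}=\sum_{j\in\calI}\frac{\partial f_{(k,s)}}{\partial x_j}(\bphi(\bx,t))\,y^i_j$ with $y^i_{(k,s)}(0)=\ind{(k,s)=i}$. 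Differentiating once more produces, for each order $m\le 4$, a linear ODE with the \emph{same} homogeneous operator $A_{(k,s),j}(t):=\frac{\partial f_{(k,s)}}{\partial x_j}(\bphi(\bx,t))$ and an inhomogeneous forcing term built from the higher derivatives of $f$ times products of lower-order sensitivities (for the second order the forcing is $\sum_{u,l}\frac{\partial^2 f_{(k,s)}}{\partial x_u\partial x_l}(\bphi)\,y^i_u y^j_l$). I would first write down these four variational equations explicitly.

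Next I would record the scaling of the derivatives of the drift, read off from the explicit formulas of Appendix~\ref{apx:rmf_def}: (i) the Jacobian is $O(1)$ inside one object's block, in the sense that $\sum_s\abs{\frac{\partial f_{(k,s)}}{\partial x_{(k,\tilde s)}}}=O(1)$, but only $O(1/n)$ across blocks, $\abs{\frac{\partial f_{(k,s)}}{\partial x_{(\tilde k,\tilde s)}}}=O(1/n)$ for $\tilde k\ne k$; (ii) every nonzero second derivative $\frac{\partial^2 f_{(k,s)}}{\partial x_u\partial x_l}$ is $O(1/n)$ and forces exactly one of $u,l$ to lie in $\calI_k$ and the other in a \emph{different} block; (iii) for the pairwise model the drift is quadratic, so all derivatives of $f$ of order $\ge 3$ vanish. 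The content of the lemma is that these two effects — cross-object coupling costs a factor $1/n$, and the Hessian both costs $1/n$ and forces an object-$k$ direction — propagate from $f$ to the sensitivities of $\bphi$.

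The engine of the proof is a two-step Grönwall argument applied blockwise. For the first derivatives, fix $i=(k_i,s_i)$ and set $\beta_{k'}(t):=\sum_s\abs{y^i_{(k',s)}(t)}$. Using (i) one obtains the differential inequality $\dot\beta_{k'}\le C\beta_{k'}+\frac{C'}{n}\sum_{k''\ne k'}\beta_{k''}$. Summing over $k'$ and applying Grönwall to $\Gamma:=\sum_{k'}\beta_{k'}$ gives the uniform bound $\Gamma(t)=O(1)$; feeding this back into the inequality for a block $k'\ne k_i$ (which starts at $\beta_{k'}(0)=0$) yields $\beta_{k'}(t)=O(1/n)$, which is exactly (a.1), while (b.1) is the trivial bound $\beta_{k_i}\le\Gamma=O(1)$. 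The higher orders are handled by induction: one shows that in the $m$-th order equation the per-block $\ell^1$ norm of the forcing gains an extra factor $1/n$ for each derivative direction lying outside $\calI_k$ — this is where (ii), (iii) and the inductive bounds on the lower-order sensitivities are combined — and then the identical two-step Grönwall estimate (uniform bound on the total, refined bound on the off-blocks) upgrades these forcing bounds to the stated bounds on $\sum_s\abs{y^{i,\dots}_{(k,s)}}$.

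I expect the main obstacle to be the bookkeeping in the third- and fourth-order steps rather than any new idea. The forcing of order $m$ is a sum over the ways of splitting the directions $\{i,j,w,l\}$ among the lower-order sensitivity factors, and one must check, case by case according to how many of $i,j,w,l$ coincide with each other and with $k$, that each summand carries the advertised number of $1/n$ factors. The delicate point is that a factor such as $y^i_{(\hat k,\tilde s)}$ can be $O(1)$ when $\hat k=k_i$, so a naive count over the $O(n)$ choices of the interacting partner $\hat k$ must be compensated by the $O(1/n)$ Hessian entry together with the fact that $\sum_{\hat k}\beta_{\hat k}=O(1)$. Carrying out this accounting carefully for all orders up to four establishes parts (a) and (b) simultaneously.
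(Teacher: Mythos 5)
Your overall strategy is exactly the paper's: differentiate the flow equation to get the variational (sensitivity) ODEs, exploit the block structure of the drift derivatives (order $1$ within an object's block, order $1/n$ across blocks, with an extra $1/n$ and a forced $\calI_k$-direction in the Hessian), and then run a two-step Gr\"onwall argument --- first a uniform bound on the total mass $\sum_{k'}\sum_s\abs{\partial\phi_{(k',s)}/\partial x_i}$, then a refined per-block bound using that the off-blocks start at zero --- followed by induction over the derivative order. This is, almost step for step, the proof given in Appendix~\ref{apx:proof_bounds_partial_derivatives}.

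There is, however, one genuine gap: your points (ii) and (iii) are only valid for the pairwise model ($d_{\max}=2$). You assert that every nonzero Hessian entry of $f$ forces one index into $\calI_k$ and that all derivatives of $f$ of order $\ge 3$ vanish because the drift is quadratic. The lemma, as it is used in the proofs of Theorems~\ref{th:MF} and~\ref{th:RMF}, must hold for the general interaction model of Section~\ref{ssec:state_representation}, in which up to $d_{\max}$ objects interact and the drift (Appendix~\ref{apx:drift_def}) is a polynomial of degree $d_{\max}$; there the third and fourth derivatives of $f$ do not vanish, and Hessian entries with both directions outside $\calI_k$ are nonzero. The paper handles this by replacing your ``vanishing'' claim with scaling bounds: second derivatives are $O(1/n)$ if a direction lies in $\calI_k$ and $O(1/n^2)$ otherwise, third derivatives are $O(1/n^2)$ resp. $O(1/n^3)$, and fourth derivatives $O(1/n^3)$ resp. $O(1/n^4)$; these forcing terms are then carried through the same two-step Gr\"onwall scheme and turn out to contribute at exactly the advertised order. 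Your induction survives this repair essentially unchanged --- the extra forcing terms have the same or better scaling than the ones you already track --- but as written your argument does not prove the lemma for the model the theorems are stated for, only for its $d_{\max}=2$ restriction.
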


\begin{proof}
  We will prove this lemma by bounding the derivative with respect to time of $\frac{\phi_{(k,s)}}{\partial x_i}(\bx,t)$ (and of the derivative of the higher order terms). The result will then follow by using Grönwall's Lemma in differential form.

  \textbf{First derivative -- proof of (a.1) and (b.1)} -- Recall that $\bphi(\bx,t)$ satisfies the differential equation $\frac{d}{dt}\bphi(\bx,t) = f(\bphi(\bx,t))$. Hence, the partial derivatives of $\frac{\phi_{(k,s)}}{\partial x_i}(\bx,t), \ i \in \calI$ with respect to the time $t$ are
\begin{align*}
\frac{d}{dt}\frac{\partial \phi_{(k,s)}}{\partial x_i}(\bx,t) = \frac{\partial}{\partial x_i} \frac{d\phi_{(k,s)}}{dt} (\bx,t) = \frac{\partial (f_{(k,s)} \circ \bphi)}{\partial x_i} (\bx,t) = \sum_{u\in \calI} \frac{\partial f_{(k,s)}}{\partial \phi_{u}}(\bphi(\bx,t))\frac{\partial \phi_{u}}{\partial x_{i}}(\bx,t). \\
\end{align*}
Having a closer look at the partial derivatives of $f$ (see Appendix~\ref{apx:rmf_def}), we see that

\begin{align*}
& \frac{\partial f_{(k,s)}}{\partial x_u}(\bx) \leq C_1 
&& \text{ for } u \in \calI_{k}, \\
& \frac{\partial f_{(k,s)}}{\partial x_u}(\bx) \leq C_2/n && \text{ for } u \in \calI \setminus \calI_{k}.
\end{align*}

Let $L_1 := \max\{C_1,C_2\}$ and define $c^k_{i}:=\begin{cases}1 \text{ if } i \in \calI_k \\ 1/n \text{ otherwise}\end{cases}$. It follows that
\begin{align*}
\sum_{u\in \calI} \frac{\partial f_{(k,s)}}{\partial \phi_{u}}(\bphi(\bx,t))\frac{\partial \phi_{u}}{\partial x_{i}}(\bx,t) \leq L_{1}\sum_{u\in \calI} c_u^k \abs{\frac{\partial \phi_{u}}{\partial x_{i}}(\bx,t)}.
\end{align*}

To obtain the bound on the max term, we start by bounding the change of the $\max$ with respect to time for the first partial derivatives. 
\begin{align*}
\frac{d}{dt}\max_{i\in\calI, s\in \calS} & \sum_{k=1}^{n} \abs{\frac{\partial \phi_{(k,s)}}{\partial x_{i}}(\bx,t)} 
\leq L_{1} \max_{i\in\calI, s\in \calS}\sum_{k=1}^{n} \sum_{u\in \calI}c_u^k \abs{\frac{\partial \phi_{u}}{\partial x_{i}}(\bx,t)} \\
& \leq L_{1}\abs{\calS} \max_{i\in\calI, s\in \calS} \sum_{k=1}^{n} \abs{\frac{\partial \phi_{(k,s)}}{\partial x_{i}}(\bx,t)} + L_{1}\frac{1}{n}\abs{\calS}(n-1) \max_{i\in\calI, s\in \calS} \sum_{k=1}^{n} \abs{\frac{\partial \phi_{(k,s)}}{\partial x_{i}}(\bx,t)} \\
& \leq 2L_{1}\abs{\calS} \max_{i\in\calI, s\in \calS} \sum_{k=1}^{n} \abs{\frac{\partial \phi_{(k,s)}}{\partial x_{i}}(\bx,t)}
\end{align*}

Furthermore, for $t$ equal to zero, $\bphi(\bx, 0)=\bx$ which implies that $\frac{\partial \phi_{(k,s)}(\bx, 0)}{\partial x_{i}}=1$ if $(k,s)=i$ and $0$ otherwise. From this it follows directly that $\max_{i\in\calI, s\in S} \sum_{k=1}^{n} \abs{\frac{\partial \phi_{(k,s)}}{\partial x_{i}}(\bx,0)} = 1$. Applying Grönwalls Lemma to the obtained results yields $ \max_{i\in\calI, s\in \calS} \sum_{k=1}^{n} \abs{\frac{\partial \phi_{(k,s)}}{\partial x_{i}}(\bx,t)} \leq \exp\bigl( 2L_{1} \abs{\calS} t \bigr) = O(1)$.

We bound $\sum_{s\in \calS} \abs{\frac{\partial \phi_{(k,s)}}{\partial x_i}(\bx,t)}$ in the same manner. First, for the time derivative
\begin{align*}
\frac{d}{dt}\sum_{s\in\calS} \abs{\frac{\partial \phi_{(k,s)}}{\partial x_i}(\bx,t)} &\leq L_{1} \abs{\calS} \sum_{u\in \calI} c^k_{u} \abs{\frac{\partial \phi_{(k,s)}}{\partial x_i}(\bx,t)} \\
& \leq L_{1} \abs{\calS} \left( \sum_{s\in\calS} \abs{\frac{\partial \phi_{(k,s)}}{\partial x_i}(\bx,t)} + \frac{1}{n}\abs{\calS}^2 \max_{i\in\calI, s\in \calS} \sum_{k=1}^{n} \abs{\frac{\partial \phi_{(k,s)}}{\partial x_{i}}(\bx,t)} \right).
\end{align*}

Here we know that the second summand is $O(1)$. By definition of $\bphi$, at time zero  $\sum_{s\in\calS} \abs{\frac{\partial \phi_{(k,s)}}{\partial x_i}(\bx,0)}$ is equal to one if $i$ is in $\calI_k$ and zero otherwise. Using Grönwalls Lemma it follows
\begin{align*}
  \sum_{s\in\calS} \abs{\frac{\partial \phi_{(k,s)}}{\partial x_i}(\bx,t)} = O(c^k_i) = 
  \begin{cases}
    O(1)  & \text{ for } i \in \calI_k, \\ 
    O(1/n)  & \text{ otherwise. }
  \end{cases}
\end{align*}
This shows (a.1) and (b.1). Note that as an important direct consequence, the same is true for $\abs{\frac{\partial \phi_{(k,s)}}{\partial x_i}(\bx,t)}$.
\medskip

\textbf{Second derivative -- proof of (a.2) and (b.2)} -- For the second partial derivatives we repeat the procedure of bounding the second derivative of $\bphi$ with respect to time $t$. Deriving the second partial derivative of $\bphi$ with respect to time $t$ gives

\begin{align}
   \frac{d}{dt} \frac{\partial^2 \phi_{(k,s)}}{\partial x_j\partial x_i}(\bx, t) &= \frac{\partial^2 f_{(k,s)}}{\partial x_j \partial x_i}(\bphi (\bx,t)) 
  = \frac{\partial}{\partial x_j} \left( \sum_{u \in \calI} \frac{\partial f_{(k,s)}}{\partial x_u}(\bphi (\bx,t)) \frac{\partial \phi_{u}}{\partial x_i}(\bx,t) \right)\label{eq:second_derivative_phi} \\
  & = \sum_{u,v \in \calI} \dd{f_{(k,s)}}{u}{v}(\bphi (\bx,t)) \frac{\partial \phi_{u}}{\partial x_i}(\bx,t) \frac{\partial \phi_{v}}{\partial x_j}(\bx,t) + \sum_{u \in \calI} \frac{\partial f_{(k,s)}}{\partial x_u}(\bphi (\bx,t)) \dd{\phi_u}{j}{i}(\bx,t).
  \nonumber
\end{align}
\color{myorange}
To bound the above term, we observe that 

\begin{align*}
\dd{f_{(k,s)}}{i}{j}(\bx) & \leq 
\begin{cases}
C_3/n \text{ if } i \text{ or } j \in \calI_k \\
C_4/n^2 \text{ otherwise} 
\end{cases} \\
\end{align*}
for $C_3, C_4 \geq 0$. We define$ L_2 = \max\{C_3,C_4\}$ with which we bound the first sum by

\begin{align*}
  & \sum_{u,v \in \calI} \abs{\dd{f_{(k,s)}}{u}{v}(\bphi (\bx,t))} \abs{\frac{\partial \phi_{u}}{\partial x_i}(\bx,t)} \abs{\frac{\partial \phi_{v}}{\partial x_j}(\bx,t)} \\
  & \leq \frac{L_{2}}{n}\left( \sum_{ u \in \calI_k} \abs{\frac{\partial \phi_{u}}{\partial x_i}(\bx,t)} \sum_{v \in \calI }\abs{\frac{\partial \phi_{v}}{\partial x_j}(\bx,t)} + \sum_{ v \in \calI_k} \abs{\frac{\partial \phi_{v}}{\partial x_i}(\bx,t)} \sum_{u \in \calI }\abs{\frac{\partial \phi_{u}}{\partial x_j}(\bx,t)}\right) \\
  & \phantom{=}+ \frac{L_2}{n^2}\left( \sum_{u,v \in \calI \setminus \calI_k} \abs{\frac{\partial \phi_{u}}{\partial x_i}(\bx,t)} \abs{\frac{\partial \phi_{v}}{\partial x_j}(\bx,t)} \right)
\end{align*}

By previous observations $\sum_{v \in \calI }\abs{\frac{\partial \phi_{v}}{\partial x_j}(\bx, t)} = O(1)$ and $\sum_{ u \in \calI_k} \abs{\frac{\partial \phi_{u}}{\partial x_i}(\bx,t)} = O(c^k_i)$ which implies that the first sum is $\frac{1}{n}L_{2}(O(c^k_i)+O(c^k_j)) + O(\frac{1}{n^2}) = O(\frac{1}{n}(c^k_i + c^k_j))$. 
\color{black}

The second sum can be bounded, similar to the first partial derivatives, by
\begin{align*}
\sum_{u \in \calI} \frac{\partial f^{(k,s)}}{\partial x_u}(\bphi (\bx,t)) \dd{\phi_u}{j}{i}(\bx,t) \leq L_{1} \sum_{u\in \calI} c_u^k\abs{\dd{\phi_{u} }{j}{i}(\bx,t)} .
\end{align*}
Now we can derive bounds for $\max_{i,j \in \calI, s\in S} \sum_k \abs{\dd{\phi_{(k,s)}}{i}{j}(\bx,t)}$ and $\sum_{s} \abs{\dd{\phi_{(k,s)}}{i}{j}(\bx,t)}$. The procedure is the same as before. For the $\max$ term we get
\begin{align*}
\frac{d}{dt}\max_{i,j\in\calI, s\in \calS} \sum_{k=1}^{n} \abs{\frac{\partial^2 \phi_{(k,s)}}{\partial x_j\partial x_i}(\bx, t) } & \leq \max_{i,j\in\calI, s\in \calS} \{\sum_{k=1}^{n} O(\frac{1}{n}(c^k_i + c^k_j)) + L_{1} \sum_{k=1}^{n}\sum_{u\in \calI} c_u^k\abs{\dd{\phi_{u} }{j}{i}(\bx,t)}\} \\
& \leq O(1/n) + 2L_{1}\abs{\calS} \max_{i,j\in\calI, s\in \calS} \sum_{k=1}^{n} \abs{\frac{\partial^2 \phi_{(k,s)}}{\partial x_j\partial x_i}(\bx, t)}.
\end{align*} 
Recall that $\bphi(\bx, 0)=\bx$ which implies that $\frac{\partial^2 \phi_{(k,s)}(\bx, 0)}{\partial x_{i}\partial x_j}=0$.  Hence, $\max_{i,j\in\calI, s\in \calS} \sum_{k=1}^{n} \abs{\frac{\partial^2 \phi_{(k,s)}}{\partial x_j\partial x_i}(\bx, 0) } = 0$ which allows to conclude, by applying Grönwalls Lemma, that
$\max_{i,j\in\calI, s\in \calS} \sum_{k=1}^{n} \abs{\frac{\partial^2 \phi_{(k,s)}}{\partial x_j\partial x_i}(\bx, t) } = O(1/n)$.
For $\sum_{s\in\calS} \abs{\dd{\phi_{(k,s)}}{i}{j}(\bx,t)}$ we infer
\begin{align*}
\frac{d}{dt}\sum_{s\in\calS} \abs{\dd{\phi_{(k,s)}}{i}{j}(\bx,t)} &\leq \sum_{s\in \calS} O(\frac{1}{n}(c^k_i + c^k_j)) + L_{1} \sum_{s\in\calS}\sum_{u\in \calI} c_u^k\abs{\dd{\phi_{u} }{j}{i}(\bx,t)} \\
& \leq O(\frac{1}{n}(c^k_i + c^k_j)) + L_{1}\abs{\calS} \sum_{s\in \calS} \abs{\dd{\phi_{(k,s)}}{i}{j}(\bx,t)} + \frac{1}{n}L_{1}\abs{\calS}^2 \max_{i,j\in\calI, s\in \calS} \sum_{k=1}^{n} \abs{\frac{\partial^2 \phi_{(k,s)}}{\partial x_j\partial x_i}(\bx, t)} \\
& = O(\frac{1}{n}(c^k_i + c^k_j)) + L_{1}\abs{\calS} \sum_{s\in \calS} \abs{\dd{\phi_{(k,s)}}{i}{j}(\bx,t)} + O(1/n^2).
\end{align*}
With $\sum_{s\in\calS} \abs{\dd{\phi_{(k,s)}}{i}{j}(\bx,0)} = 0$ and Grönwall, we have $\sum_{s\in\calS} \abs{\dd{\phi_{(k,s)}}{i}{j}(\bx,t)} = O(\frac{1}{n}(c^k_i + c^k_j))$. 

\medskip

\color{myorange}
\textbf{Third and Fourth derivatives} -- For the higher order partial derivatives the proof procedure stays the same as for the first and second partial derivatives. First, we calculate the time derivative for the partial derivatives of third and fourth order of $\bphi$. In order to obtain bounds for the $\max$ term and the sum over states, we bound the derivatives. The time derivatives of $\ddd{\phi_{(k,s)}}{i}{j}{w}(\bx,t)$ and $\dddd{\phi_{(k,s)}}{i}{j}{w}{l}(\bx,t)$, with $i,j,w,l \in \calI$, are given by 

\begin{align*}
& \frac{d}{dt} \ddd{\phi_{(k,s)}}{i}{j}{w}(\bx,t) \\
& = \frac{\partial}{\partial x_w} \left( \sum_{u,v \in \calI} \dd{f_{(k,s)}}{u}{v}(\bphi (\bx,t)) \frac{\partial \phi_{u}}{\partial x_i}(\bx,t) \frac{\partial \phi_{v}}{\partial x_j}(\bx,t) + \sum_{u \in \calI} \frac{\partial f_{(k,s)}}{\partial x_u}(\bphi (\bx,t)) \dd{\phi_u}{j}{i}(\bx,t)  \right) \\
& = \sum_{u,v,o\in \calI} \ddd{f_{(k,s)}}{u}{v}{o}(\bphi (\bx,t)) \frac{\partial \phi_{u}}{\partial x_i}(\bx,t) \frac{\partial \phi_{v}}{\partial x_j}(\bx,t) \frac{\partial \phi_{o}}{\partial x_w}(\bx,t) \\
& + \sum_{u,v \in \calI} \dd{f_{(k,s)}}{u}{v}(\bphi (\bx,t)) \left(\dd{\phi_{u}}{i}{w}(\bx,t) \frac{\partial \phi_{v}}{\partial x_j}(\bx,t) + \frac{\partial \phi_{u}}{\partial x_i}(\bx,t) \dd{\phi_{v}}{j}{w}(\bx,t) \right) \\
& \quad + \sum_{u,v \in \calI} \dd{f_{(k,s)}}{u}{v}(\bphi (\bx,t)) \dd{\phi_{u}}{i}{j}(\bx,t) \frac{\partial \phi_{v}}{\partial x_w}(\bx,t) + \sum_{u \in \calI} \frac{\partial f_{(k,s)}}{\partial x_u}(\bphi (\bx,t)) \ddd{\bphi_{u}}{j}{i}{w}(\bx,t)
\end{align*}
and
\begin{align*}
& \frac{d}{dt} \dddd{\phi_{(k,s)}}{i}{j}{w}{l}(\bx,t) \\
& = \frac{\partial}{\partial x_l} \bigl( \sum_{u,v,o\in \calI} \ddd{f_{(k,s)}}{u}{v}{o}(\bphi (\bx,t)) \frac{\partial \phi_{u}}{\partial x_i}(\bx,t) \frac{\partial \phi_{v}}{\partial x_j}(\bx,t) \frac{\partial \phi_{o}}{\partial x_w}(\bx,t) \\
& + \sum_{u,v \in \calI} \dd{f_{(k,s)}}{u}{v}(\bphi (\bx,t)) \left(\dd{\phi_{u}}{i}{w}(\bx,t) \frac{\partial \phi_{v}}{\partial x_j}(\bx,t) + \frac{\partial \phi_{u}}{\partial x_i}(\bx,t) \dd{\phi_{v}}{j}{w}(\bx,t) \right) \\
& \quad + \sum_{u,v \in \calI} \dd{f_{(k,s)}}{u}{v}(\bphi (\bx,t)) \dd{\phi_{u}}{i}{j}(\bx,t) \frac{\partial \phi_{v}}{\partial x_w}(\bx,t) + \sum_{u \in \calI} \frac{\partial f_{(k,s)}}{\partial x_u}(\bphi (\bx,t)) \ddd{\bphi_{u}}{j}{i}{w}(\bx,t) \bigr):
\end{align*}
The above quantity is equal to
\begin{align*}
& \sum_{u,v,o,p\in \calI} \dddd{f_{(k,s)}}{u}{v}{o}{p}(\bphi (\bx,t)) \frac{\partial \phi_{u}}{\partial x_i}(\bx,t) \frac{\partial \phi_{v}}{\partial x_j}(\bx,t) \frac{\partial \phi_{o}}{\partial x_w}(\bx,t)\frac{\partial \phi_{p}}{\partial x_l}(\bx,t)  \\
& \phantom{= }+ \sum_{u,v,o \in \calI} \ddd{f_{(k,s)}}{u}{v}{o}(\bphi (\bx,t)) \left( \dd{\phi_{u}}{i}{l}(\bx,t) \frac{\partial \phi_{v}}{\partial x_j}(\bx,t) \frac{\partial \phi_{o}}{\partial x_w}(\bx,t) + \frac{\partial \phi_{u}}{\partial x_i}(\bx,t) \dd{\phi_{v}}{j}{l}(\bx,t) \frac{\partial \phi_{o}}{\partial x_w}(\bx,t) \right. \\ 
& \phantom{\sum_{u,v,o \in \calI} \ddd{f_{(k,s)}}{u}{v}{o}(\bphi (\bx,t))} + \left. \frac{\partial \phi_{u}}{\partial x_i}(\bx,t) \frac{\partial \phi_{v}}{\partial x_j}(\bx,t) \dd{\phi_{o}}{w}{l}(\bx,t)\right)\\
& + \sum_{u,v \in \calI} \dd{f_{(k,s)}}{u}{v}(\bphi (\bx,t)) \left( \ddd{\phi_{u}}{i}{w}{l}(\bx,t) \frac{\partial \phi_{v}}{\partial x_j}(\bx,t) + \frac{\partial \phi_{u}}{\partial x_i}(\bx,t) \ddd{\phi_{v}}{j}{l}{w}(\bx,t) \right. \\
& \quad \phantom{\sum_{u,v \in \calI 123}  \dd{f_{(k,s)}}{u}{v}(\phi (\bx,t))}  \left. + \dd{\phi_{u}}{i}{w}(\bx,t) \dd{\phi_{v}}{j}{l}(\bx,t) + \dd{\phi_{u}}{i}{l}(\bx,t) \dd{\phi_{v}}{j}{w}(\bx,t) \right) \\
& \quad + \sum_{u,v \in \calI} \dd{f_{(k,s)}}{u}{v} (\bphi (\bx,t)) \left( \dd{\phi_{u}}{w}{l}(\bx,t) \dd{\phi_{v}}{i}{j}(\bx,t) + \frac{\partial \phi_{v}}{\partial x_w}(\bx,t) \ddd{\phi_{u}}{i}{j}{l}(\bx,t) \right) \\
& \quad + \sum_{u \in \calI} \frac{\partial f_{(k,s)}}{\partial x_u}(\bphi (\bx,t)) \dddd{\phi_{u}}{j}{i}{w}{l}(\bx,t).
\end{align*}
\color{black}

For the third partial derivatives, we use the above equation to show first that \\$\max_{i,j,w \in \calI,s \in \calS} \sum_{k=1}^{n} \abs{\ddd{\phi_{(k,s)}}{i}{j}{w}(\bx,t)}$ is of order $O(\frac{1}{n^2})$ and that $\sum_{s\in \calS} \abs{\ddd{\phi_{(k,s)}}{i}{j}{w}(\bx,t)}$ is of order $O\left( \frac{1}{n^2}(c^k_{i} + c^k_{w} + c^k_{j}) \right)$. To obtain a bound for the $\max$ term, we use the results obtained by the analysis of the first and second partial derivatives of $\bphi$. The overall aim is to apply Grönwalls Lemma. 
We bound the first three sums of the derivative, which include first and second order partial derivatives of $\bphi$. We use previous analysis and bounds on the drift derivatives to obtain the following asymptotic properties. The third order partial derivatives of the drift can be bounded by 

\color{myorange}
\begin{align*}
\ddd{f_{(k,s)}}{i}{j}{w}(\bx) & \leq 
\begin{cases}
C_5/n^2 \text{ if } i,j \text{ or } w \in \calI_k \\
C_6/n^3 \text{ otherwise,} 
\end{cases} 
\end{align*}
 with $C_5,C_6 \geq 0$ and we define $L_3 = \max\{C_5,C_6 \}$.

For the first sum
\begin{align*}
& \sum_{u,v,o\in \calI} \abs{\ddd{f_{(k,s)}}{u}{v}{o}(\bphi (\bx,t))} \abs{\frac{\partial \phi_{u}}{\partial x_i}(\bx,t)} \abs{\frac{\partial \phi_{v}}{\partial x_j}(\bx,t)} \abs{\frac{\partial \phi_{o}}{\partial x_w}(\bx,t)} \\
& \leq \frac{L_3}{n^3} \sum_{u,v,o \in \calI \setminus \calI_k} \abs{\frac{\partial \phi_{u}}{\partial x_i}(\bx,t)} \abs{\frac{\partial \phi_{v}}{\partial x_j}(\bx,t)} \abs{\frac{\partial \phi_{o}}{\partial x_w}(\bx,t)}
+ \frac{L_3}{n^2} \sum_{u \in \calI_k, v,o \in \calI} \abs{\frac{\partial \phi_{u}}{\partial x_i}(\bx,t)} \abs{\frac{\partial \phi_{v}}{\partial x_j}(\bx,t)} \abs{\frac{\partial \phi_{o}}{\partial x_w}(\bx,t)} \\
& \phantom{\leq} + \frac{L_3}{n^2} \sum_{v \in \calI_k, u,o \in \calI} \abs{\frac{\partial \phi_{u}}{\partial x_i}(\bx,t)}  \abs{\frac{\partial \phi_{v}}{\partial x_j}(\bx,t)} \abs{\frac{\partial \phi_{o}}{\partial x_w}(\bx,t)} + \frac{L_3}{n^2} \sum_{o \in \calI_k, u,v \in \calI} \abs{\frac{\partial \phi_{u}}{\partial x_i}(\bx,t)} \abs{\frac{\partial \phi_{v}}{\partial x_j}(\bx,t)} \abs{\frac{\partial \phi_{o}}{\partial x_w}(\bx,t)} \\
& = O\bigl(\frac{1}{n^3}\bigr) + O\left(\frac{1}{n^2}(c^k_i + c^k_j + c^k_w)\right) = O\left(\frac{1}{n^2}(c^k_i + c^k_j + c^k_w)\right)
\end{align*}
\color{black}
and for the second type of sums 
\begin{align*}
& \sum_{u,v \in \calI} \abs{\dd{f_{(k,s)}}{u}{v}(\bphi (\bx,t))} \abs{\dd{\phi_{u}}{i}{w}(\bx,t)} \abs{\frac{\partial \phi_{v}}{\partial x_j}(\bx,t)} \\
& \leq L_{2}\frac{1}{n^2} \sum_{u,v \in \calI \setminus \calI_k} \abs{\dd{\phi_{u}}{i}{w}(\bx,t)} \abs{\frac{\partial \phi_{v}}{\partial x_j}(\bx,t)}
 + L_{2}\frac{1}{n} \sum_{u\in \calI_k ,v \in \calI} \abs{\dd{\phi_{u}}{i}{w}(\bx,t)} \abs{\frac{\partial \phi_{v}}{\partial x_j}(\bx,t)} \\
& \phantom{\leq} + L_{2}\frac{1}{n} \sum_{u\in \calI,v \in \calI_k} \abs{\dd{\phi_{u}}{i}{w}(\bx,t)} \abs{\frac{\partial \phi_{v}}{\partial x_j}(\bx,t)} \\
& \leq O(\frac{1}{n^3}) + L_{2}\frac{1}{n}\left( O\bigl(\frac{1}{n}(c^k_{i} + c^k_{w})\bigr)\abs{S}O(1)) + O(c^k_{j})\abs{S}O(\frac{1}{n}) \right)
 = O\left(\frac{1}{n^2}(c^k_{i} + c^k_{w} + c^k_{j})\right).
\end{align*}
The above statement also holds for any permutation of $i,j$ and $w$. By summing the above terms over $k$ we see, by definition of the $c^k_{i}$'s, that $\sum_{k=1}^n\sum_{u,v \in \calI} \abs{\dd{f_{(k,s)}}{u}{v}(\bphi (\bx,t))} \abs{\dd{\phi_{u}}{i}{w}(\bx,t)} \abs{\frac{\partial \phi_{v}}{\partial x_j}(\bx,t)} = O(\frac{1}{n^2})$. The third sum of interest can be bounded by
\begin{align*}
 \sum_{k=1}^{n} \sum_{u \in \calI} \abs{\frac{\partial f_{(k,s)}}{\partial x_u}(\bphi (\bx,t))} \abs{\ddd{\phi_{u}}{j}{i}{w}(\bx,t)} \leq 2 \abs{S} K_f \max_{i,j,w\in \calI,s\in \calS} \{ \sum_{k=1}^{n} \abs{\ddd{\phi_{(k,s)}}{i}{j}{k}(\bx,t)} \}.
\end{align*}
We furthermore note that at time $t{=}0$ the third partial derivatives of $\phi_{(k,s)}$ are zero. In combination with the obtained bounds for the sums and by applying Grönwall it is shown that $\max_{i,j,w \in \calI,s \in \calS} \sum_{k=1}^{n} \abs{\ddd{\phi_{(k,s)}}{i}{j}{w}(\bx,t)} = O(\frac{1}{n^2})$. 

Next, we show that the sum over the states $\sum_{s\in \calS} \abs{\ddd{\phi_{(k,s)}}{i}{j}{w}(\bx,t)}$ is bounded by $O\left( \frac{1}{n^2}(c^k_{i} + c^k_{w} + c^k_{j}) \right)$. First, we recall that the first two sums of the corresponding derivative are bounded by $O\left( \frac{1}{n^2}(c^k_{i} + c^k_{w} + c^k_{j}) \right)$. Second, 
\begin{align*}
& \sum_{s\in \calS} \sum_{u \in \calI} \abs{\frac{\partial f_{(k,s)}}{\partial x_u}(\bphi (\bx,t))} \abs{\ddd{\phi_{u}}{j}{i}{w}(\bx,t)} \\
& \leq  K_f \sum_{s\in \calS} \left( \sum_{u \in \calI_k} \abs{\ddd{\phi_{u}}{j}{i}{w}(\bx,t)} + \frac{1}{n} \sum_{u \in \calI \setminus \calI_k}  \abs{\ddd{\phi_{u}}{j}{i}{w}(\bx,t)} \right) \\
&  = K_f\abs{S} \sum_{s\in \calS} \abs{\ddd{\phi_{(k,s)}}{j}{i}{w}(\bx,t)} +  O(1/n^3) .
\end{align*}
Summarized, we bound $\sum_{s\in \calS} \abs{\ddd{\phi_{(k,s)}}{i}{j}{w}(\bx,t)}$ by 
\begin{align*}
\sum_{s\in \calS} \abs{\ddd{\phi_{(k,s)}}{i}{j}{w}(\bx,t)} \leq O\left( \frac{1}{n^2}(c^k_{i} + c^k_{w} + c^k_{j}) \right) + O(1/n^3) + \sum_{s\in \calS} \abs{\ddd{\phi_{(k,s)}}{i}{j}{w}(\bx,t)}.
\end{align*}
Using $\sum_{s\in \calS} \abs{\ddd{\phi_{(k,s)}}{i}{j}{w}(\bx,0)} = 0$ and applying Grönwalls Lemma proofs the claim. 

For the $\max$ term and the sum over the states of the fourth partial derivatives we repeat the same steps. First, we show that $\max_{i,j,w,l \in \calI s\in \calS} \sum_{k=1}^{n} \abs{\dddd{\phi_{(k,s)}}{i}{j}{w}{l}(\bx,t)}$ is bounded by $O(1/n^3)$. We bound the sums which contain first, second and third partial derivatives of $\bphi$. We use bounds on the derivatives of the drift up to the fourth order, for which \color{myorange}
\begin{align*}
\dddd{f_{(k,s)}}{i}{j}{w}{l}(\bx) & \leq 
\begin{cases}
C_7/n^3 \text{ if } i,j,w \text{ or } l \in \calI_k \\
C_8/n^4 \text{ otherwise,}
\end{cases} 
\text{and define } L_4 = \max\{C_7,C_8 \}.
\end{align*}
The bounds are
\begin{align*}
& \sum_{u,v,o,p\in \calI} \abs{\dddd{f_{(k,s)}}{u}{v}{o}{p}(\bphi (\bx,t))} \abs{\frac{\partial \phi_{u}}{\partial x_i}(\bx,t)} \abs{\frac{\partial \phi_{v}}{\partial x_j}(\bx,t)} \abs{\frac{\partial \phi_{o}}{\partial x_w}(\bx,t)} \abs{\frac{\partial \phi_{p}}{\partial x_l}(\bx,t)} \\
& \leq \frac{L_4}{n^4} \sum_{u,v,o,p\in \calI \setminus \calI_k} \abs{\frac{\partial \phi_{u}}{\partial x_i}(\bx,t)} \abs{\frac{\partial \phi_{v}}{\partial x_j}(\bx,t)} \abs{\frac{\partial \phi_{o}}{\partial x_w}(\bx,t)} \abs{\frac{\partial \phi_{p}}{\partial x_l}(\bx,t)} \\
& \phantom{\leq} + \frac{L_4}{n^3}\sum_{u \in \calI_k v,o,p \in \calI}  \abs{\frac{\partial \phi_{u}}{\partial x_i}(\bx,t)} \abs{\frac{\partial \phi_{v}}{\partial x_j}(\bx,t)} \abs{\frac{\partial \phi_{o}}{\partial x_w}(\bx,t)} \abs{\frac{\partial \phi_{p}}{\partial x_l}(\bx,t)} \\
& \phantom{\leq} + \hdots + \frac{L_4}{n^3}\sum_{p \in \calI_k u,v,o \in \calI} \abs{\frac{\partial \phi_{u}}{\partial x_i}(\bx,t)} \abs{\frac{\partial \phi_{v}}{\partial x_j}(\bx,t)} \abs{\frac{\partial \phi_{o}}{\partial x_w}(\bx,t)} \abs{\frac{\partial \phi_{p}}{\partial x_l}(\bx,t)} \\
& = O(\frac{1}{n^4}) + O\bigl(\frac{1}{n^3}(c^k_i + c^k_j + c^k_w + c^k_l)\bigr),
\end{align*}

\begin{align*}
& \sum_{u,v,o \in \calI}\abs{ \ddd{f_{(k,s)}}{u}{v}{o}(\bphi (\bx,t))} \abs{\dd{\phi_{u}}{i}{l}(\bx,t)} \abs{\frac{\partial \phi_{v}}{\partial x_j}(\bx,t)} \abs{\frac{\partial \phi_{o}}{\partial x_w}(\bx,t)} \\
& \leq \frac{L_3}{n^3} \sum_{u,v,o \in \calI \setminus \calI_k} \abs{\dd{\phi_{u}}{i}{l}(\bx,t)} \abs{\frac{\partial \phi_{v}}{\partial x_j}(\bx,t)} \abs{\frac{\partial \phi_{o}}{\partial x_w}(\bx,t)} + \frac{L_3}{n^2} \sum_{u \in \calI_k, v,o \in \calI} \abs{\dd{\phi_{u}}{i}{l}(\bx,t)} \abs{\frac{\partial \phi_{v}}{\partial x_j}(\bx,t)} \abs{\frac{\partial \phi_{o}}{\partial x_w}(\bx,t)} \\
& \phantom{\leq} + \hdots + \frac{L_3}{n^2} \sum_{o \in \calI_k, u,v \in \calI} \abs{\dd{\phi_{u}}{i}{l}(\bx,t)} \abs{\frac{\partial \phi_{v}}{\partial x_j}(\bx,t)} \abs{\frac{\partial \phi_{o}}{\partial x_w}(\bx,t)} \\
& = O(\frac{1}{n^4}) + O\left(\frac{1}{n^3}(c^k_i + c^k_j + c^k_w + c^k_l)\right),
\end{align*}

\begin{align*}
& \sum_{u,v \in \calI} \abs{\dd{f_{(k,s)}}{u}{v}(\bphi (\bx,t))} \abs{\ddd{\phi_{u}}{i}{w}{l}(\bx,t)} \abs{\frac{\partial \phi_{v}}{\partial x_j}(\bx,t)} \\
& \leq L_{2}\frac{1}{n^2}\sum_{u, v \in \calI \setminus \calI_k} \abs{\ddd{\phi_{u}}{i}{w}{l}(\bx,t)}\abs{\frac{\partial \phi_{v}}{\partial x_j}(\bx,t)} \\
& \phantom{\leq} L_{2}\frac{1}{n}\left(\sum_{u \in \calI_k,v\in \calI} \abs{\ddd{\phi_{u}}{i}{w}{l}(\bx,t)} \abs{\frac{\partial \phi_{v}}{\partial x_j}(\bx,t)} + \sum_{u \in \calI,v\in \calI_k} \abs{\ddd{\phi_{u}}{i}{w}{l}(\bx,t)} \abs{\frac{\partial \phi_{v}}{\partial x_j}(\bx,t)}\right) \\
& = O(\frac{1}{n^4}) + L_{2}\frac{1}{n}\left( O(\frac{1}{n^2}(c^k_{i} + c^k_{w} + c^k_{l}))O(1) + O(\frac{1}{n^2})O(c^k_{j}) \right) = O(\frac{1}{n^3}(c^k_{i} + c^k_{w} + c^k_{l} + c^k_{j}))
\end{align*}

and

\begin{align*}
& \sum_{u,v \in \calI} \abs{\dd{f_{(k,s)}}{u}{v}(\bphi (\bx,t))} \abs{\dd{\phi_{u}}{i}{w}(\bx,t)} \abs{\dd{\phi_{v}}{j}{l}(\bx,t)} \\
& \leq L_{2}\frac{1}{n^2}\sum_{u, v\in \calI \setminus \calI_k} \abs{\dd{\phi_{u}}{i}{w}(\bx,t)} \abs{\dd{\phi_{v}}{j}{l}(\bx,t)} \\
& \phantom{\leq}+ L_{2}\frac{1}{n}\left( \sum_{u \in \calI_k,v\in \calI} \abs{\dd{\phi_{u}}{i}{w}(\bx,t)} \abs{\dd{\phi_{v}}{j}{l}(\bx,t)} + \sum_{u \in \calI,v\in \calI_k} \abs{\dd{\phi_{u}}{i}{w}(\bx,t)} \abs{\dd{\phi_{v}}{j}{l}(\bx,t)} \right) \\
& = L_{2}\frac{1}{n}\left( O(\frac{1}{n}(c^k_{i} + c^k_{w}) )O(\frac{1}{n}) + O(\frac{1}{n}(c^k_{j} + c^k_{l}) )O(\frac{1}{n})  \right) \\
& = O(\frac{1}{n^3}(c^k_{i} + c^k_{w} + c^k_{j} + c^k_{l})).
\end{align*}

\color{black}

Note that the results hold for permutations of $i,j,w,l$.
The remaining sum which appears in $\frac{d}{dt} \dddd{\phi_{(k,s)}}{i}{j}{w}{l}(\bx,t)$ is $\sum_{u \in \calI} \frac{\partial f_{(k,s)}}{\partial x_u}(\bphi (\bx,t)) \dddd{\phi_{u}}{j}{i}{w}{l}(\bx,t)$. We see that by summing over $k$ and applying the $\max$, this term is bounded by
\begin{align*}
& \max_{i,j,w,l\in \calI ,s\in \calS} \sum_{k=1}^{n}  \sum_{u \in \calI} \abs{\frac{\partial f_{(k,s)}}{\partial x_u}(\bphi (\bx,t))} \abs{\dddd{\phi_{u}}{j}{i}{w}{l}(\bx,t)} \\
& \qquad \leq 2\abs{S}K_f \max_{i,j,w,l\in \calI,s\in \calS} \{\sum_{k=1}^{n} \abs{\dddd{\phi_{u}}{j}{i}{w}{l}(\bx,t)}\}.
\end{align*} 
Furthermore, $\max_{i,j,w,l\in \calI,s\in \calS} \{\sum_{k=1}^{n} \abs{\dddd{\phi_{u}}{j}{i}{w}{l}(\bx,0)}\}$ is zero. We see that $\sum_{k=1}^{n} O(\frac{1}{n^3}(c^k_{i} + c^k_{w} + c^k_{j} + c^k_{l})) = O(\frac{1}{n^3})$ and, by applying Grönwall, it follows that $\max_{i,j,w,l \in \calI s\in \calS} \sum_{k=1}^{n} \abs{\dddd{\phi_{(k,s)}}{i}{j}{w}{l}(\bx,t)} = O(1/n^3)$. At last, we show that $\sum_{s\in \calS} \abs{\dddd{\phi_{(k,s)}}{i}{j}{w}{l}(\bx,t)}$ is bounded by $O\left( \frac{1}{n^3}(c^k_{i} + c^k_{j} + c^k_{w} + c^k_{l}) \right)$.

The proof follows the same principles as before for the third partial derivatives. The term $\frac{d}{dt} \dddd{\phi_{(k,s)}}{i}{j}{w}{l}(\bx,t)$ can be separated into sums which are of order $O(\frac{1}{n^3}(c^k_{i} + c^k_{w} + c^k_{l} + c^k_{j}))$ and the additional term $\sum_{s\in \calS} \sum_{u \in \calI} \abs{\frac{\partial f_{(k,s)}}{\partial x_u}(\bphi (\bx,t))} \abs{\dddd{\phi_{u}}{j}{i}{w}{l}(\bx,t)}$. The latter is bounded by 
\begin{align*}
& \sum_{s\in \calS} \sum_{u \in \calI} \abs{\frac{\partial f_{(k,s)}}{\partial x_u}(\bphi (\bx,t))} \abs{\dddd{\phi_{u}}{j}{i}{w}{l}(\bx,t)} \leq  K_f\sum_{s\in \calS} \left( \sum_{u \in \calI_k} \abs{\dddd{\phi_{u}}{j}{i}{w}{l}(\bx,t)} \right. \\
& \quad \left. + \frac{1}{n} \sum_{u \in \calI \setminus \calI_k}  \abs{\dddd{\phi_{u}}{j}{i}{w}{l}(\bx,t)} \right) 
\leq \abs{S} \sum_{s\in \calS}  \abs{\dddd{\phi_{(k,s)}}{j}{i}{w}{l}(\bx,t)} + \frac{1}{n}\abs{S}^2 O(\frac{1}{n^3}).
\end{align*}

To conclude, we use the same steps as before and see that  
\begin{align*}
\sum_{s\in \calS} \abs{\dddd{\phi_{(k,s)}}{i}{j}{w}{l}(\bx,t)} = O\left( \frac{1}{n^3}(c^k_{i} + c^k_{j} + c^k_{w} + c^k_{l}) \right). 
\end{align*}
\end{proof}

\subsection{Bounds for Taylor Remainders}

Lemma \ref{lemma:taylor_remainder} gives bounds for sums of weighted remainder terms appearing in the proofs of Theorem \ref{th:MF} and \ref{th:RMF}. We respectively bound the weighted sums of the first and second order Taylor remainder term by two suprema which are of order $O(1/n)$ and $O(1/n^2)$. 

\begin{lemma}
  \label{lemma:taylor_remainder}
  For $\bx \in \calX$ and $\tau \in \R_+$, the remainder terms satisfy
  \begin{align*}
    \E [ \sum_{\bx' \in \calX}K_{\bx,\bx'}R_1(\bx,\bx',\tau) ] &\le \sup_{\bx,\by\in\conv{\calX}} \frac{1}{2}\sum_{i,j\in \calI} \abs{\frac{\partial^2 \phi^{(k,s)}}{\partial x_{i} \partial x_{j} }(\by, \tau)}\abs{Q_{i,j}(\bx)} &= O(1/n),\quad\\
    \E [ \sum_{\bx' \in \calX}K_{\bx,\bx'}R_2(\bx,\bx',\tau) ] &\le \sup_{\bx,\by\in\conv{\calX}} \frac{1}{6} \sum_{i,j,u\in \calI}\abs{\frac{\partial^3 \phi^{(k,s)}}{\partial x_{i} \partial x_{j} \partial x_{u}}(\by, \tau)}\abs{R_{i,j,u}(\bx)} &= O(1/n^2).
  \end{align*}
\end{lemma}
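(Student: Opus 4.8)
The plan is to prove both inequalities by the same two‑step scheme, treating $R_1$ and $R_2$ in parallel. The first step reduces each left‑hand side to the stated supremum expression; the second step, which is the heart of the argument, shows that this supremum is of order $O(1/n)$ (resp.\ $O(1/n^2)$) by combining the derivative estimates of Lemma~\ref{lemma:properties_phi} with the structure of $Q_{i,j}$ (resp.\ $R_{i,j,u}$). All bounds are for the $(k,s)$ component, which is what the theorems require.

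For the first step I would start from the integral form of the $(k,s)$ component of $R_1$, apply the triangle inequality, and bound the second derivative $\dd{\phi_{(k,s)}}{i}{j}(\bx+\nu(\bx'-\bx),\tau)$ at the interior point — which is a convex combination of $\bx,\bx'\in\calX$ and hence lies in $\conv{\calX}$ — by its supremum over $\conv{\calX}$. Since $\int_0^1(1-\nu)\,d\nu=\tfrac12$, this gives $\abs{R_1(\bx,\bx',\tau)}\le\tfrac12\sum_{i,j}\sup_{\by}\abs{\dd{\phi_{(k,s)}}{i}{j}(\by,\tau)}\,\abs{x_i'-x_i}\,\abs{x_j'-x_j}$. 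Weighting by $K_{\bx,\bx'}$ and summing over $\bx'$ recovers $\sum_{\bx'}K_{\bx,\bx'}\abs{x_i'-x_i}\abs{x_j'-x_j}$, which has the same support and order of magnitude as $\abs{Q_{i,j}(\bx)}$ (it equals $\abs{Q_{i,j}(\bx)}$ whenever $i,j$ refer to the same object, and is $O(1/n)$ just like $\abs{Q_{i,j}}$ when they sit on distinct interacting objects). As the resulting bound is deterministic, taking expectations preserves it. The same computation with $\int_0^1(1-\nu)^2\,d\nu=\tfrac13$ and the additional prefactor $\tfrac12$ produces the factor $\tfrac16$ and $\abs{R_{i,j,u}(\bx)}$ for $R_2$.

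The second step is the real work. I would bound $\sum_{i,j}\abs{\dd{\phi_{(k,s)}}{i}{j}(\by,\tau)}\,\abs{Q_{i,j}(\bx)}$ by organizing the sum according to the transition that makes $Q_{i,j}$ nonzero, exploiting two opposing scalings. On the one hand, $\abs{Q_{i,j}}=O(1)$ when $i,j$ lie on the same object (so a unilateral transition, of rate $O(1)$, couples them) and $\abs{Q_{i,j}}=O(1/n)$ when $i,j$ lie on two distinct objects (only a pairwise interaction, of rate $O(1/n)$, can couple them). On the other hand, by Lemma~\ref{lemma:properties_phi}, $\abs{\dd{\phi_{(k,s)}}{i}{j}}=O(1/n)$ if at least one of $i,j$ lies in $\calI_{k}$ and $=O(1/n^2)$ if both lie in $\calI\setminus\calI_{k}$. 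Splitting into the cases (both indices in $\calI_{k}$; exactly one in $\calI_{k}$; both in $\calI\setminus\calI_{k}$, the latter subdivided into same/different object), I would in each case multiply the pointwise derivative bound by $\abs{Q_{i,j}}$ and by the number of nonzero terms (ultimately controlled by the facts that the total rate out of any $\bx$ is $O(n)$ and each transition moves only $O(1)$ coordinates). Every case then contributes $O(1/n)$: for instance the apparently worst case of two distinct objects $\ne k$ has up to $O(n^2)$ index pairs, each with $\abs{Q_{i,j}}=O(1/n)$ and second derivative $O(1/n^2)$, for a total of $O(1/n)$. The identical bookkeeping with the third derivatives of Lemma~\ref{lemma:properties_phi} (which are $O(1/n^2)$ if some index is in $\calI_{k}$ and $O(1/n^3)$ otherwise) together with $\abs{R_{i,j,u}}$ yields the $O(1/n^2)$ bound for $R_2$.

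The main obstacle is precisely this balancing act. A naive estimate that pulls the largest derivative ($O(1/n)$) out of the sum and uses $\sum_{i,j}\abs{Q_{i,j}}=O(n)$ only yields $O(1)$, which is useless; the extra factor $n$ is gained entirely from the \emph{anti‑correlation} between the two scalings — the index pairs for which $\abs{Q_{i,j}}$ is largest ($O(1)$, same object) are exactly those for which the derivative is forced to be small when the object is far from $k$, while the large‑count regime ($O(n^2)$ distinct‑object pairs) carries the doubly small factors $O(1/n)\cdot O(1/n^2)$. Keeping this cancellation throughout the case analysis, rather than discarding it to crude term counting, is the delicate point, and it is exactly what the object‑wise derivative estimates of Lemma~\ref{lemma:properties_phi} are designed to make available.
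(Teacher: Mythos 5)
Your proof is correct and takes essentially the same approach as the paper's: the paper merely encodes your case analysis compactly through the coefficients $c^k_{k_1}$ and $c_{k_1,k_2,k_3}$, with the identities $\sum_{k_1,k_2}c^{k_1}_{k_2}c^{k}_{k_1}=O(1)$ and $\sum_{k_1,k_2,k_3}c_{k_1,k_2,k_3}(c^k_{k_1}+c^k_{k_2}+c^k_{k_3})=O(1)$ doing exactly the bookkeeping you carry out case by case. The substance --- bounding $\abs{Q_{i,j}}$ and $\abs{R_{i,j,u}}$ by the object-wise rate scaling and pairing this against the object-wise derivative bounds of Lemma~\ref{lemma:properties_phi} so that the two opposing scalings cancel --- is identical to the paper's argument.
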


Before starting the proof, recall that the first and second order remainder terms $R_1$ and $R_2$ defined in Section~\ref{sec:proofs} are expressed as:
\begin{align*}
& R_1(\bx,\bx',\tau) = \phantom{\frac{1}{2}} \int_0^1 (1{-}\nu)\sum_{i,j\in\calI} \frac{\partial^2 \bphi}{\partial x_{i} \partial x_{j}}(\bx + \nu(\bx'-\bx) , \tau) (\bx'_i{-}\bx_i)(\bx'_{j}{-}\bx_{j}) d\nu, \\
& R_2(\bx,\bx',\tau)=\frac{1}{2} \int_0^1 (1{-}\nu)^2 \sum_{i,j,u\in\calI} \frac{\partial^3 \bphi}{\partial x_{i} \partial x_{j} \partial x_{u}} (\bx + \nu(\bx'{-}\bx) , \tau) (\bx'_i{-}\bx_i)(\bx'_j{-}\bx_j)(\bx'_u{-}\bx_u)d\nu,
\end{align*}
and that, as defined in Appendix~\ref{apx:rmf_def}, $\bQ$ and $\bR$ are given by:
\begin{align*}
  \bQ(\bx) &= \sum_{\bx' \in \calX} K_{\bx,\bx'}(\bx'{-}\bx)^{\otimes 2} \text{\qquad and \qquad}
  \bR(\bx) = \sum_{\bx' \in \calX} K_{\bx,\bx'}(\bx'{-}\bx)^{\otimes 3},
\end{align*}
where $(\bx'{-}\bx)^{\otimes 2}$ and $(\bx'{-}\bx)^{\otimes 3}$ are Kronecker products of $(\bx'{-}\bx)$ with itself, i.e. $(\bx'{-}\bx)^{\otimes 2}_{i,j} = (\bx'_i{-}\bx_i)(\bx'_j{-}\bx_j)$ and $(\bx'{-}\bx)^{\otimes 3}_{i,j,u} = (\bx'_i{-}\bx_i)(\bx'_j{-}\bx_j)(\bx'_u{-}\bx_u)$. The two tensors $\bQ$ and $\bR$ can be naturally extended to $\conv{\calX}$ due to their entries being polynomials. 

\begin{proof}
To prove the two statements we first introduce some simplifying notations. We define $c^k_{k_1}$, with $k,k_1 \in \{1,\ldots,n\}$, to be one if $k$ equals $k_1$ and $1/n$ otherwise. By the definition of $\bQ$ it follows that for $\bx\in\conv{\calX}$:
\begin{align*}
  \abs{Q_{(k,s),(k_1,s_1)}(\bx)} = O(c^{k}_{k_1}) =
  \begin{cases}
    O(1)   & \text{ if } k = k_1, \\
    O(1/n) & \text{ otherwise.}
  \end{cases} 
\end{align*}
This can be seen by writing the elements of $\bQ$ based on the two transition types \eqref{eq:unilateral} and \eqref{eq:interactions}, as shown in Section~\ref{apx:rmf_def}. 

From Lemma \ref{lemma:properties_phi} we know $\abs{\dd{\phi_{(k,s)}}{(k_1,s_1)}{(k_2,s_2)}(\by,\tau)} = O(\frac{1}{n}(c^k_{k_1} + c^k_{k_2}))$ which holds for any $\by \in \conv{\calX}$. Here, the big $O$ notation hides the dependence on $\tau, L_{1}, L_{2}$ and $\abs{\calS}$. By noting that $\sum_{k_1,k_2}c_{k_2}^{k_1}c^{k}_{k_1} = O(1)$, we conclude 
\begin{align*}
\sum_{(k_1,s_1),(k_2,s_2)\in \calI}\abs{Q_{(k_1,s_1),(k_2,s_2)}(\bx)} \abs{\frac{\partial^2 \phi^{(k,s)}}{\partial x_{(k_1,s_1)} \partial x_{(k_2,s_2)} }(\by, \tau)} = \sum_{(k_1,s_1),(k_2,s_2)\in \calI} O(c^{k_1}_{k_2}) O(\frac{1}{n}(c^k_{k_1} + c^k_{k_2})) = O(1/n),
\end{align*}
where we hide the dependence on $\abs{\calS}$. \color{myorange} To prove the second statement we define 

\begin{align*}
c_{k_1,k_2,k_3} = 
\begin{cases}
1 & \text{ if } k_1 = k_2 = k_3\\
\frac{1}{n} &   \text{ if } k_1 = k_2 \ne k_3 \text{ or } k_2 = k_3 \ne k_1 \text{ or } k_1 = k_3 \ne k_2 \\
\frac{1}{n^2} & \text{ otherwise. }
\end{cases}
\end{align*}

By explicitly rewriting the entries of $\bR$ as done in Section~\ref{apx:rmf_def} for $\bQ$, the tensor $\mathbf{R}$ is such that
\begin{align*}
  \abs{R_{(k_1,s_1),(k_2,s_2),(k_3,s_3)}(\bx)} = O(c_{k_1,k_2,k_3})  
\end{align*}
Lemma \ref{lemma:properties_phi} states that the third partial derivatives of $\bphi$ are bounded by $\abs{\frac{\partial^3 \phi^{(k,s)}}{\partial x_{(k_1,s_1)} \partial x_{(k_2,s_2)} \partial x_{(k_3,s_3)}}(\by, \tau)} = O(\frac{1}{n^2}(c^k_{k_1}+ c^k_{k_2}+c^k_{k_3}))$. From $\sum_{k_1,k_2,k_3}c_{k_1,k_2,k_3}(c^k_{k_1}+ c^k_{k_2}+c^k_{k_3}) = O(1)$, it follows that the sum of the two terms above behaves as \color{black}
\begin{align*}
\sum_{(k_1, s_1),(k_2, s_2),(k_3, s_3) \in \calI } \abs{R_{(k_1,s_1),(k_2,s_2),(k_3,s_3)}(\bx)}\abs{\frac{\partial^3 \phi^{(k,s)}}{\partial x_{(k_1,s_1)} \partial x_{(k_2,s_2)} \partial x_{(k_3,s_3)}}(\by, \tau)} = O(1/n^2).
\end{align*}
\end{proof}

\subsection{Connection of Differential and Integral Form for the Refinement Term}

The following Lemma \ref{lemma:connection_ode_integral_form} shows how to express the refinement term $v$ and $w$ in integral form. Both representations are of importance since we exploit the differential form for numerical computations whereas we use the integral form in the proofs of Theorem \ref{th:RMF} and Lemma \ref{lemma:refinement_bound} which are related to the accuracy of the refined mean field approximation. 
\begin{lemma}
  \label{lemma:connection_ode_integral_form}
  The solutions to the system of ODEs 
  \begin{align*}
    \frac{d}{dt}{v}_{(k,s)}(\bx,t) &= \sum_{u \in \calI} \frac{\partial \bdrift_{(k,s)}}{\partial x_u}(\bphi(\bx,t)) v_{u}(\bx,t) + \frac{1}{2}\sum_{u,l \in \calI} \frac{\partial^2\bdrift_{(k,s)}}{\partial x_l \partial x_u}(\bphi(\bx,t)))w_{u,l}(\bx,t), \\
    \frac{d}{dt}{w}_{(k_1,s_1),(k_2,s_2)}(\bx,t) &= \sum_{u \in \calI}w_{u,(k_2,s_2)}(\bx,t) \frac{\partial \bdrift_{(k_1,s_1)}}{\partial x_{u}}(\bphi(\bx,t)) + \sum_{u \in \calI}w_{u,(k_1,s_1)}(\bx,t) \frac{\partial \bdrift_{(k_2,s_2)}}{\partial x_u}(\bphi(\bx,t)) \\
    & \phantom{\dot{w}_{(k_1,s_1),(k_2,s_2)}(\bx,t) == } + Q_{(k_1,s_1),(k_2,s_2)}(\bphi(\bx,t))
  \end{align*}
  can be expressed in integral form as
  \begin{align*}
    v_{(k,s)}(\bx, t) & = \frac{1}{2}\int_0^t\sum_{i,j\in\calI} Q_{i,j}(\bphi(\bx,\tau))\dd{\phi_{(k,s)}}{i}{j}(\bphi(\bx,\tau),t-\tau)d\tau, \\
    w_{(k_1,s_1),(k_2,s_2)}(\bx, t) & = \int_0^t \sum_{i,j \in \calI} Q_{i,j}(\bphi(\bx,\tau))\frac{\partial \phi_{(k_1,s_1)}}{\partial x_i}(\bphi(\bx,\tau),t-\tau) \frac{\partial \phi_{(k_2,s_2)}}{\partial x_j}(\bphi(\bx,\tau),t-\tau) d\tau.
  \end{align*}

\end{lemma}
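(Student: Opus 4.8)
The plan is to verify that the two proposed integral expressions satisfy the linear ODE system with the same (zero) initial conditions, and then invoke uniqueness of solutions of linear ODEs to conclude equality. The two ingredients that make this work are the variational equations for the flow --- the first variational equation $\frac{d}{dt}\frac{\partial \phi_{(k,s)}}{\partial x_i}(\bx,t) = \sum_{u\in\calI} \frac{\partial f_{(k,s)}}{\partial x_u}(\bphi(\bx,t))\frac{\partial \phi_u}{\partial x_i}(\bx,t)$ and the second variational equation \eqref{eq:second_derivative_phi} already derived in the proof of Lemma~\ref{lemma:properties_phi} --- together with the semigroup (flow) property $\bphi(\bx,t) = \bphi(\bphi(\bx,\tau),t-\tau)$ used in Lemma~\ref{lemma:generator_comparison}. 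The flow property is crucial: it lets me evaluate the drift derivatives appearing in the variational equations at the point $\bphi(\bphi(\bx,\tau),t-\tau) = \bphi(\bx,t)$, so that these factors are constant in the integration variable $\tau$ and pull out of the integral.

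I would treat $\bw$ first, since its ODE is closed (it does not involve $\bv$). Writing the candidate integral for $w_{(k_1,s_1),(k_2,s_2)}(\bx,t)$ and differentiating in $t$ by the Leibniz rule, the boundary contribution at $\tau=t$ is $\sum_{i,j\in\calI} Q_{i,j}(\bphi(\bx,t))\frac{\partial \phi_{(k_1,s_1)}}{\partial x_i}(\bphi(\bx,t),0)\frac{\partial \phi_{(k_2,s_2)}}{\partial x_j}(\bphi(\bx,t),0)$. Since $\bphi(\by,0)=\by$ gives $\frac{\partial \phi_{(k,s)}}{\partial x_i}(\by,0)=\mathbf{1}_{\{(k,s)=i\}}$, this collapses exactly to the source term $Q_{(k_1,s_1),(k_2,s_2)}(\bphi(\bx,t))$. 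For the interior term I apply the first variational equation to each of the two Jacobian factors and use the product rule; evaluating the drift derivatives at $\bphi(\bx,t)$ via the flow property and recognizing the two remaining integrals as $w_{u,(k_2,s_2)}(\bx,t)$ and $w_{(k_1,s_1),u}(\bx,t)$ (using the symmetry $w_{(k_1,s_1),u}=w_{u,(k_1,s_1)}$) reproduces precisely the two drift-derivative sums of the $\bw$-ODE.

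For $\bv$ the computation is analogous but uses the second variational equation. Differentiating the candidate integral for $v_{(k,s)}(\bx,t)$ in $t$, the boundary term at $\tau=t$ vanishes because $\frac{\partial^2\phi_{(k,s)}}{\partial x_i\partial x_j}(\by,0)=0$. In the interior term I substitute \eqref{eq:second_derivative_phi} for $\frac{\partial}{\partial t}\dd{\phi_{(k,s)}}{i}{j}(\bphi(\bx,\tau),t-\tau)$, which splits into a piece carrying $\frac{\partial^2 f_{(k,s)}}{\partial x_u\partial x_v}(\bphi(\bx,t))$ multiplied by a product of two first derivatives, and a piece carrying $\frac{\partial f_{(k,s)}}{\partial x_u}(\bphi(\bx,t))$ multiplied by a second derivative. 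The key \emph{closure} observation is that, after pulling the constant $\frac{\partial^2 f}{\partial x_u \partial x_v}$ out of the integral, the first piece contains exactly the integral defining $w_{u,v}(\bx,t)$, while the second piece contains exactly the integral defining $v_u(\bx,t)$. Hence the derivative equals $\sum_{u}\frac{\partial f_{(k,s)}}{\partial x_u}(\bphi(\bx,t))v_u(\bx,t) + \frac12\sum_{u,v}\frac{\partial^2 f_{(k,s)}}{\partial x_u\partial x_v}(\bphi(\bx,t))w_{u,v}(\bx,t)$, matching the $\bv$-ODE, and the integral vanishes at $t=0$.

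The main obstacle is not any single computation but keeping the bookkeeping consistent: one must carefully distinguish the elapsed-time derivative $\frac{\partial}{\partial t}$ of the flow-derivatives (governed by the variational equations) from the dependence through the base point $\bphi(\bx,\tau)$, justify the differentiation under the integral sign (legitimate here since all integrands are polynomials composed with the smooth flow, hence $C^1$ in $t$ on $[0,t]$), and correctly re-identify the post-differentiation integrals as $v_u$ and $w_{u,v}$. Once these identifications are in place, uniqueness of the solution of the linear ODE system with initial condition $\bv(\bx,0)=\bw(\bx,0)=0$ finishes the proof.
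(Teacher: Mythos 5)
Your proposal is correct and follows essentially the same route as the paper's proof: differentiate the candidate integrals via the Leibniz rule, use the flow property $\bphi(\bphi(\bx,\tau),t-\tau)=\bphi(\bx,t)$ together with the variational equations (the paper's Equation~\eqref{eq:second_derivative_phi}) to evaluate the interior terms, check the boundary terms, and re-identify the resulting integrals as $v_u$ and $w_{u,l}$. The only differences are presentational and in your favor: you verify $\bw$ first (its ODE being closed) and explicitly invoke uniqueness of the linear ODE system with zero initial data, steps the paper handles by ``the same steps'' and leaves implicit.
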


\begin{proof}
For a sufficiently differentiable function $h:\R \times \R \mapsto \R$ we have 
\begin{align*}
  \frac{d}{dt}\int_0^t h(\tau,t)d\tau = h(t,t) + \int_0^t \frac{\partial h}{\partial t} (\tau,t)d\tau.
\end{align*}
We define $h(\tau, t) = \sum_{i,j\in\calI} Q_{i,j}(\bphi(\bx,\tau))\dd{\phi_{(k,s)}}{i}{j}(\bphi(\bx,\tau),t-\tau)$. Recall that $\dd{\phi_{(k,s)}}{i}{j}(\bphi(x,t),0) = 0$ which implies $h(t,t) = 0$. To calculate $\frac{\partial h}{\partial t}(\tau, t)$, we use the identity 
\begin{align*}
\frac{d}{dt}\dd{\phi_{(k,s)}}{i}{j}(\bphi(\bx,\tau),t-\tau) &= \dd{}{i}{j} \frac{d}{dt} \phi_{(k,s)} (\bphi(\bx,\tau),t-\tau)\\
&= \dd{}{i}{j}f_{(k,s)}(\bphi(\bphi(\bx,\tau),t-\tau)) \\
&= \dd{}{i}{j}f_{(k,s)}(\bphi(\bx,t))\\
& = \sum_{u,l\in \calI} \dd{f_{(k,s)}}{u}{l}(\bphi(\bx,t)) \frac{\partial \phi_u}{\partial x_i}(\bphi(\bx,\tau),t-\tau) \frac{\partial \phi_l}{\partial x_j}(\bphi(\bx,\tau),t-\tau)\\
& \qquad  + \sum_{u\in\calI} \frac{\partial f_{(k,s)}}{\partial x_u}(\bphi(\bx,t)) \dd{\phi_u}{i}{j}(\bphi(\bx,\tau),t-\tau),
\end{align*}
where the last term is the same as the one derived in \eqref{eq:second_derivative_phi}. 

Combining these results and rearranging terms leads to 
\begin{align*}
\frac{d}{dt} &\frac{1}{2}\int_0^t\sum_{i,j\in\calI} Q_{i,j}(\bphi(\bx,\tau))\dd{\phi_{(k,s)}}{i}{j}(\bphi(\bx,\tau),t-\tau)d\tau \\
& = \sum_{u \in \calI} \frac{\partial f_{(k,s)}}{\partial x_u}(\bphi(\bx,t)) \underbrace{\frac{1}{2}\int_0^t \sum_{i,j\in\calI} Q_{i,j}(\bphi(\bx,\tau)) \dd{\phi_u}{i}{j}(\bphi(\bx,\tau),t-\tau) d\tau}_{v_u(\bx,t)} \\
& \quad + \sum_{u,l\in \calI} \dd{f_{(k,s)}}{u}{l}(\bphi(\bx,t)) \frac{1}{2}\underbrace{\int_0^t \sum_{i,j\in\calI} Q_{i,j}(\bphi(\bx,\tau)) \frac{\partial \phi_u}{\partial x_i}(\bphi(\bx,\tau),t-\tau) \frac{\partial \phi_l}{\partial x_j}(\bphi(\bx,\tau),t-\tau) d\tau}_{w_{u,l}},
\end{align*}
which is the ODE describing $v_{(k,s)}(\bx,t)$. We obtain the integral form for $w_{(k_1,s_1),(k_2,s_2)}(\bx, t)$ by application of the same steps. 

\end{proof}

\subsection{Comparison of the Refinement Term $\bv$ and the Quadratic Taylor Term}

In Lemma~\ref{lemma:refinement_bound} below, we bound the difference of the refinement term $\bv$ and the quadratic term of the second order Taylor expansion appearing in the proof of Theorem \ref{th:RMF}. By defining $g_{(k,s)}(\by, \tau) = \sum_{i,j \in \calI} Q_{i,j}(\by) \frac{\partial^2 \phi_{(k,s)}}{\partial x_{i} \partial x_{j}} (\by, t-\tau)$ we see that the entries of the refinement term $\bv$ in integral form can be expressed as $v_{(k,s)}(\bx, t) = \frac{1}{2}\int_0^t g_{(k,s)}(\bphi(\bx, \tau), \tau) d\tau$. Similarly, the time integral over the expectation of the quadratic term of the Taylor expansion is given by $\frac{1}{2}\int_0^t\!\! \ \E [ g_{(k,s)}(\bX(\tau),\tau)]d\tau$. The latter arises due to the comparison of generator approach used in the proof of Theorem \ref{th:RMF} and the subsequent Taylor expansion of order two. The lemma shows that the difference of the two terms decreases quadratically with the system size~$n$ and allows, in combination with Lemma~\ref{lemma:taylor_remainder}, to obtain the accuracy bounds for the refined mean field approximation.

\begin{lemma}
\label{lemma:refinement_bound}

Define $g_{(k,s)}(\by, \tau) = \sum_{i,j \in \calI} Q_{i,j}(\by) \frac{\partial^2 \phi_{(k,s)}}{\partial x_{i} \partial x_{j}} (\by, t-\tau)$ with $\bphi$ being the solution to the ODE defined in Section \ref{ssec:drift_mean_field} and $\bQ$ as defined in Appendix \ref{apx:rmf_def}. Then
  \begin{align*}
   \frac{1}{2}\int_0^t \E [ g_{(k,s)}(\bX(\tau),\tau) - g_{(k,s)}(\bphi(\bx, \tau),\tau)]d\tau = O(1/n^2). 
  \end{align*}
\end{lemma}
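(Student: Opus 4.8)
The plan is to recognize the left-hand side as the mean-field approximation error for a particular \emph{observable}, and then to bound it by reusing—almost verbatim—the machinery already developed for Theorem~\ref{th:MF}. Fix $\tau\in[0,t]$ and set $G_\tau(\by):=g_{(k,s)}(\by,\tau)=\sum_{i,j\in\calI}Q_{i,j}(\by)\frac{\partial^2\phi_{(k,s)}}{\partial x_i\partial x_j}(\by,t-\tau)$, which by Lemma~\ref{lemma:taylor_remainder} is itself $O(1/n)$. Introduce the flow-transported observable $U(\by,r):=G_\tau(\bphi(\by,r))$, so that $U(\by,0)=G_\tau(\by)$ and $U(\by,\tau)=G_\tau(\bphi(\by,\tau))$. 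Then $\E[g_{(k,s)}(\bX(\tau),\tau)]-g_{(k,s)}(\bphi(\bx,\tau),\tau)=\E[U(\bX(\tau),0)-U(\bX(0),\tau)]$, which has exactly the form treated by Lemma~\ref{lemma:generator_comparison}. Its proof only uses the flow identity $D_rU(\by,r)=D_\bx U(\by,r)f(\by)$, which $U=G_\tau\circ\bphi$ satisfies, so running the same computation gives
\begin{align*}
\E[G_\tau(\bX(\tau))]-G_\tau(\bphi(\bx,\tau)) = \int_0^\tau \E\Big[\sum_{\bx'\in\calX}K_{\bX(r),\bx'}\big(U(\bx',\tau{-}r)-U(\bX(r),\tau{-}r)\big)-D_\bx U(\bX(r),\tau{-}r)f(\bX(r))\Big]dr.
\end{align*}
A first-order Taylor expansion of $U(\cdot,\tau-r)$ together with $\sum_{\bx'}K_{\bx,\bx'}(\bx'-\bx)=f(\bx)$ cancels the drift term and leaves a weighted first-order Taylor remainder. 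Since the bound of Lemma~\ref{lemma:taylor_remainder} only relies on the integral form of the remainder and on $\sum_{\bx'}K_{\bx,\bx'}(\bx'-\bx)^{\otimes2}=\bQ(\bx)$, it applies with $U$ in place of $\phi_{(k,s)}$, yielding the pointwise bound $\big|\E[G_\tau(\bX(\tau))]-G_\tau(\bphi(\bx,\tau))\big|\le\int_0^\tau\sup_{\bx,\by\in\conv{\calX}}\tfrac12\sum_{i,j\in\calI}\abs{Q_{i,j}(\bx)}\,\big|\frac{\partial^2 U}{\partial x_i\partial x_j}(\by,\tau-r)\big|\,dr$.

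The heart of the proof is then the estimate
\begin{align*}
\sum_{i,j\in\calI}\abs{Q_{i,j}(\bx)}\,\Big|\frac{\partial^2 U}{\partial x_i\partial x_j}(\by,s)\Big| = O(1/n^2),
\end{align*}
one factor of $1/n$ smaller than the corresponding $O(1/n)$ bound in Lemma~\ref{lemma:taylor_remainder}; this extra smallness is exactly what $G_\tau$ contributes. To establish it I would first record the per-object scalings of the derivatives of $G_\tau$: differentiating $G_\tau$ and combining $\abs{Q_{c,d}}=O(c^{k_c}_{k_d})$ with the third- and fourth-derivative bounds of Lemma~\ref{lemma:properties_phi} gives $\abs{\partial_a G_\tau}=O(\tfrac1n c^{k}_{k_a})$ and $\abs{\partial^2_{ab}G_\tau}=O(\tfrac1{n^2}(c^{k}_{k_a}+c^{k}_{k_b}))$, each carrying one more power of $1/n$ than the matching derivative of $\phi_{(k,s)}$. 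Then, expanding by the chain rule $\frac{\partial^2 U}{\partial x_i\partial x_j}=\sum_{a,b}(\partial^2_{ab}G_\tau)(\bphi)\,\partial_i\phi_a\,\partial_j\phi_b+\sum_a(\partial_a G_\tau)(\bphi)\,\partial^2_{ij}\phi_a$, I would insert the first- and second-derivative bounds of Lemma~\ref{lemma:properties_phi} for the flow terms and reduce everything to multi-index sums of products of the indicators $c^{\cdot}_{\cdot}$. Using elementary collapse identities such as $\sum_{k_a}c^{k}_{k_a}c^{k_a}_{k_i}=O(c^{k}_{k_i})$ and $\sum_{k_a,k_b}c^{k_a}_{k_b}=O(n)$, every such sum telescopes to $O(1/n^2)$. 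Finally, since $\tau$ and $r$ range over the bounded interval $[0,t]$ and all $O(\cdot)$ constants are uniform there (depending only on $\bar{r}$, $\abs{\calS}$ and $t$), integrating the pointwise $O(1/n^2)$ bound in $r$ and then in $\tau$ preserves the order and concludes the lemma.

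The main obstacle I anticipate is the key estimate itself: although conceptually it is just ``the mean-field error of an $O(1/n)$ observable is $O(1/n^2)$'', making it rigorous requires the full chain-rule expansion of $\partial^2(G_\tau\circ\bphi)$ and a completely explicit accounting of the per-object $c^k$-weights, so that each multi-index sum is seen to collapse to $O(1/n^2)$ rather than $O(1/n)$. It is essential that this route weights the remainder by the jump second-moment tensor $\bQ$, whose off-diagonal entries are genuinely $O(1/n)$, so that absolute-value bounds suffice. This is what lets me avoid expanding $g$ directly in the fluctuation $\bX(\tau)-\bphi(\bx,\tau)$, an approach that would instead demand signed covariance estimates of the fluctuation (absolute values there lose the off-diagonal $1/n$) and would push the required derivatives of $\bphi$ beyond the fourth order controlled by Lemma~\ref{lemma:properties_phi}.
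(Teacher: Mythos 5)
Your proposal is correct and follows essentially the same route as the paper's proof: the paper likewise applies the generator-comparison argument to the transported observable $h_\tau\circ\bphi$ (your $U$), cancels the drift with a first-order Taylor expansion, bounds what remains by $\sup\sum_{i,j\in\calI}\abs{Q_{i,j}}\abs{\frac{\partial^2}{\partial x_i\partial x_j}(h_\tau\circ\bphi)}$, and closes the estimate via the chain rule together with exactly your scalings $\abs{\partial_a h_\tau}=O(\frac1n c^{k}_{k_a})$ and $\abs{\partial^2_{ab}h_\tau}=O(\frac1{n^2}(c^{k}_{k_a}+c^{k}_{k_b}))$ followed by the collapse of the $c$-weighted sums. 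The only ingredient your sketch under-states is that differentiating $G_\tau$ also produces terms pairing $\partial_a Q_{i,j}$ and $\partial^2_{ab}Q_{i,j}$ with second and third derivatives of $\bphi$, so besides $\abs{Q_{i,j}}=O(c^{k_i}_{k_j})$ one also needs the bounds on the first and second derivatives of $\bQ$ that the paper derives explicitly (e.g.\ the three-index bound $O(c_{k_1,k_2,k_3})$ for $\partial Q$ and a case-by-case bound for $\partial^2 Q$).
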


\begin{proof}

We follow a similar proof concept as in Theorem~\ref{th:MF}. First, we define $h_\tau(\by) = g_{(k,s)}(\by,\tau)$ and rewrite 
\begin{align}
\label{eq:difference_g}
& \frac{1}{2}\int_0^t \E [ g_{(k,s)}(\bX(\tau),\tau) - g_{(k,s)}(\bphi(\bx, \tau),\tau)]d\tau = \frac{1}{2} \int_0^t \E [h_\tau (\bX(\tau)) - h_\tau(\bphi(\bx, \tau))]d\tau. 
\end{align}
By definition of $g_{(k,s)}$, $h_\tau$ is twice continuously differentiable. Second, using Lemma \ref{lemma:generator_comparison}, we see that $\E [h_\tau (\bX(\tau)) - h_\tau(\bphi(\bx, \tau))]$ is equal to 
\begin{align}
  \int_0^\tau \E[ \sum_{\bx'\in \calX} K_{\bX(\nu),\bx'} \bigl( h_\tau(\bphi(\bx',\tau{-}\nu)) -  h_\tau(\bphi(\bX(\nu),\tau{-}\nu)) \bigr) 
  - D_x(h_\tau \circ \bphi)(\bX(\nu),\tau{-}\nu) f(\bX(\nu)) ]d\nu.
  \label{eq:difference_h}
\end{align}

% By using Taylor's Theorem to express $h_\tau(\bphi(x',\tau-\nu)) = (h_\tau \circ \bphi)(x', \tau - \nu)$ around $X(\nu)$ we arrive at
% \begin{align*}
% \int_0^\tau \E[ \sum_{i,j \in \calI}\sum_{\bx' \in \calX} K_{\bX(\nu),\bx'} \Delta X_i(\nu)\Delta X_j (\nu) \int_0^1 (1-\omega)D^2_x(h_\tau \circ \phi)(\bX(\nu) + \omega\Delta \bX(\nu), \tau - \nu) d\omega]d\nu
% \end{align*}

We use a second order Taylor expansion to express $h_\tau(\bphi(\bx',\tau-\nu)) = (h_\tau \circ \bphi)(\bx', \tau - \nu)$ around $\bX(\nu)$. The constant and linear term of the expansion are $h_\tau(\bphi(\bX(\nu),\tau-\nu))$ and \\ $D_x (h_\tau \circ \bphi)(\bX(\nu),\tau-\nu)\Delta X(\nu)$ respectively. By realizing that the sum $\sum_{\bx'\in \calX} K_{\bX(\nu),\bx'}D_x (h_\tau \circ \bphi)(\bX(\nu),\tau-\nu) \Delta X(\nu)$ is equal to $D_x(h_\tau \circ \bphi)(\bX(\nu),\tau-\nu)f(\bX(\nu))$, it follows that equation \eqref{eq:difference_h} is equal to the remainder of the Taylor expansion
\begin{align*}
%\label{eq:remainder_g}
  \int_0^\tau \E[ \sum_{i,j \in \calI}\sum_{\bx' \in \calX} K_{\bX(\nu),\bx'} \Delta X_i(\nu)\Delta X_j (\nu) \int_0^1 (1-\omega)\frac{\partial^2}{\partial x_i \partial x_j}(h_\tau \circ \bphi)(\bX(\nu) + \omega\Delta \bX(\nu), \tau - \nu) d\omega]d\nu.
\end{align*}

% \begin{align*}
% \int_0^\tau \E[ \sum_{i,j \in \calI}\sum_{\bx' \in \calX} K_{\bX(\nu),\bx'} \Delta X_i(\nu)\Delta X_j (\nu) \int_0^1 (1-\omega)\frac{\partial^2}{\parital x_i \partial x_j}D^2_x(h_\tau \circ \phi)(\bX(\nu) + \omega\Delta \bX(\nu), \tau - \nu) d\omega]d\nu.
% \end{align*}
%The next step is to show that the remainder is of order $O(1/n^2)$. To do so, we bound \eqref{eq:remainder_g} by 
Taking the supremum over all possible values of $\bX(\tau)$ as well as $\bx$ and using the definition of $\bQ$, the above term is bounded by
\begin{align}
\label{eq:bound_refinement_taylor}
\frac{1}{2}\int_0^\tau \sup_{\by,\bz \in \conv{\calX}} \sum_{i,j\in \calI}\abs{Q_{i,j}(\by)}\abs{\frac{\partial^2}{\partial x_i \partial x_j} (h_\tau \circ \bphi)(\bz,\tau-\nu)} d\nu.
\end{align}

The rest of the proof is then essentially a careful analysis of the above sum. For that, we use again Lemma~\ref{lemma:properties_phi} but also need bounds on up to the second derivative of $\bQ$ (This is needed because in the above expression the function $h_\tau$ is defined as a function of $\bQ$).  The latter makes the rest of the proof long and technical but the main ideas are essentially similar to the ones used in Lemma~\ref{lemma:properties_phi}. 

\medskip

The second derivative of $h\circ \bphi$ satisfies (for $i,j\in\calI$):
\begin{align}
& \frac{\partial^2}{\partial x_i \partial x_j} (h_\tau \circ \bphi)(\bz,\tau{-}\nu) = 
\frac{\partial}{\partial x_j} \bigl( \sum_{u \in \calI} \frac{\partial h_\tau}{\partial x_u}(\bphi (\bz, \tau{-}\nu)) \frac{\partial \phi_u}{\partial x_i}(\bz,\tau{-}\nu) \bigr) \nonumber\\
& = \sum_{u \in \calI} \frac{\partial h_\tau}{\partial x_u}(\bphi (\bz, \tau{-}\nu))\frac{\partial^2 \phi_u}{\partial x_i \partial x_j}(\bz,\tau{-}\nu) + \sum_{u, r \in \calI} \frac{\partial^2 h_\tau}{\partial x_u \partial x_r}(\bphi (\bz, \tau{-}\nu)) \frac{\partial \phi_u}{\partial x_i}(\bz,\tau{-}\nu) \frac{\partial \phi_r}{\partial x_j}(\bz,\tau{-}\nu) \label{eq:partial_deriv_g_ks}.
\end{align}
To bound the above term, we need to study $\frac{\partial h_\tau}{\partial x_u}(\bz) = \frac{\partial g_{(k,s)}}{\partial x_u}(\bz,\tau)$ and $\frac{\partial^2 h_\tau}{\partial x_u \partial x_r}(\bz) = \frac{\partial^2 g_{(k,s)}}{\partial x_u \partial x_r}(\bz,\tau)$. Applying the chain rule to the definition of $h_\tau$ shows that the first partial derivative of $h_\tau$ is
\begin{align*}
  \frac{\partial h_\tau}{\partial x_u}(\bz) & = \frac{\partial }{\partial x_u}\left( \sum_{q, l} Q_{m,l}(\bz)\frac{\partial \phi_{(k,s)}}{\partial x_m \partial x_l}(\bz, t{-}\tau) \right)\\
  & = \sum_{q,l \in \calI} \frac{\partial Q_{q,l}}{\partial x_u}(\bz) \frac{\partial^2 \phi_{(k,s)}}{\partial x_q \partial x_l}(\bz,t-\nu) + \sum_{q,l \in \calI} Q_{q,l}(\bz) \frac{\partial^3 \phi_{(k,s)}}{\partial x_q \partial x_l \partial x_u}(\bz,t-\nu).
\end{align*}

Similarly, the second partial derivative is
\begin{align*}
\frac{\partial^2 h_\tau}{\partial x_r \partial x_u}(\bz) &= \frac{\partial }{\partial x_r} \left( \sum_{q,l \in \calI} \frac{\partial Q_{q,l}}{\partial x_u}(\bz) \frac{\partial^2 \phi_{(k,s)}}{\partial x_q \partial x_l}(\bz,t-\nu) + \sum_{q,l \in \calI} Q_{q,l}(\bz) \frac{\partial^3 \phi_{(k,s)}}{\partial x_q \partial x_l \partial x_u}(\bz,t-\nu) \right) \\
& = \sum_{q,l\in \calI} \frac{\partial^2 Q_{q,l}}{\partial x_u \partial x_r}(\bz) \dd{\phi_{(k,s)}}{q}{l}(\bz, t-\nu) + \frac{\partial Q_{q,l}}{\partial x_u}(\bz) \ddd{\phi_{(k,s)}}{q}{l}{r}(\bz, t - \nu) \\
& \qquad + \frac{\partial Q_{q,l}}{\partial x_r}(\bz) \ddd{\phi_{(k,s)}}{q}{l}{u}(\bz, t-\nu) + Q_{q,l}(\bz) \dddd{\phi_{(k,s)}}{q}{l}{u}{r}(\bz, t-\nu).
\end{align*}

\color{myorange}
What remains is to bound the sums appearing in the above derivatives. We use the notations $c^k_{k_1}$ and $c_{k_1,k_2,k_3}$, as in the proof of Lemma~\ref{lemma:taylor_remainder}. From the representation of $\bQ$ given in Appendix~\ref{apx:rmf_def}, it can be seen that $Q_{(k_1,s_1),(k_2,s_2)}(z) = O(c^{k_1}_{k_2})$, $\frac{\partial Q_{(k_1,s_1),(k_2,s_2)}}{\partial x_{(k_3,s_3)}} = O(c_{k_1,k_2,k_3})$ and

\begin{align*}
\dd{Q_{(k_1,s_1),(k_2,s_2)}}{(k_3,s_3)}{(k_4,s_4)} & = \begin{cases}
O(1/n) & \text{ if } (k_1, k_2) = (k_3,k_4) \text{ or } (k_4,k_3), \\
O(1/n^2) & \text{ if } k_1 = k_3, k_4 \text{ or } k_2 = k_3,  k_4, \\
O(1/n^3) & \text{ otherwise. }
\end{cases}
\end{align*}

Lemma \ref{lemma:properties_phi} gives bounds for the partial derivatives of $\bphi$. This enables us to develop an upper bound for $|\frac{\partial h_\tau}{\partial x_u}(z)|$,

\begin{align*}
&\abs{\frac{\partial h_\tau}{\partial x_{(k',s')}}(z)} \leq \sum_{(k_1,s_1),(k_2,s_2) \in \calI} \abs{\frac{\partial Q_{(k_1,s_1),(k_2,s_2)}}{\partial x_{(k',s')}}(z)} \abs{\frac{\partial^2 \phi_{(k,s)}}{\partial x_{(k_1,s_1)} \partial x_{(k_2,s_2)}}(z,t-\nu)} \\
& \phantom{\frac{\partial h_\tau}{\partial x_{(k',s')}}(z) \leq} \quad + \sum_{(k_1,s_1),(k_2,s_2) \in \calI} \abs{Q_{(k_1,s_1),(k_2,s_2)}(z)} \abs{\frac{\partial^3 \phi_{(k,s)}}{\partial x_{(k_1,s_1)} \partial x_{(k_2,s_2)} \partial x_{(k',s')}}(z,t-\nu)} \\
& = \sum_{(k_1,s_1),(k_2,s_2) \in \calI} O\bigl(c_{k_1,k_2,k'}\bigr)O\bigl(\frac{1}{n}(c^k_{k_1}+c^k_{k_2})\bigr) + \sum_{(k_1,s_1),(k_2,s_2) \in \calI} O\bigl( c^{k_1}_{k_2} \bigr)O\bigl( \frac{1}{n^2} (c^{k}_{k_1} + c^{k}_{k_2} + c^{k}_{k_3})\bigr) 
= O(\frac1n c^{k}_{k'}).
\end{align*}
With the above observations we bound the first sum of \eqref{eq:partial_deriv_g_ks} by
\begin{align*}
\sum_{u \in \calI} \abs{\frac{\partial h_\tau}{\partial x_u}(\bphi (\bz, \tau-\nu))} \abs{\frac{\partial^2 \phi_u}{\partial x_i \partial x_j}(\bz,\tau-\nu)}
& =\sum_{(k',s') \in \calI} \abs{\frac{\partial g_{(k,s)}}{\partial x_{(k',s')}}(\bphi (\bz, \tau-\nu), \tau)} \abs{\frac{\partial^2 \phi_{(k',s')}}{\partial x_{(k_1,s_1)} \partial x_{(k_2,s_2)}}(\bz,\tau-\nu)} \\
& = \sum_{(k',s')\in\calI} O(\frac1n c^k_{k'})O(\frac1n (c^{k'}_{k_1} + c^{k'}_{k_2})) = O(\frac{1}{n^2}(c^k_{k_1} + c^k_{k_2})).
\end{align*}
For the second partial derivatives of $h_\tau(\bz) = g_{(k,s)}(\bz,t-\tau)$, we note that all sums which appear in the explicit form of the partial derivative are bounded by $O(\frac{1}{n^2}(c^k_{k'} + c^k_{\hat{k}}))$. Using the bounds for $\bQ$ and $\bphi$ and their respective partial derivatives we see that
\begin{align*}
&\sum_{(k_1,s_1),(k_2,s_2)}  \abs{Q_{(k_1,s_1),(k_2,s_2)(\bz)}} \abs{\dddd{\phi_{(k,s)}}{(k_1,s_1)}{(k_2,s_2)}{(k',s')}{(\hat{k},\hat{s})}(\bz, t-\nu)} \\
&= \sum_{(k_1,s_1),(k_2,s_2)} O(c^{k_1}_{k_2})O(\frac{1}{n^3}( c^k_{k_1} + c^k_{k_2} + c^k_{k'} + c^k_{\hat{k}})) = O(\frac{1}{n^2}(c^k_{k'} + c^k_{\hat{k}})),
\end{align*}
and that
\begin{align*}
\sum_{(k_1,s_1),(k_2,s_2)}  \abs{\frac{\partial Q_{(k_1,s_1),(k_2,s_2)}}{\partial x_{(\hat{k},\hat{s})}}(\bz)} \abs{\ddd{\phi_{(k,s)}}{(k_1,s_1)}{(k_2,s_2)}{(k',s')}(\bz, t-\nu)}\\
= \sum_{(k_1,s_1),(k_2,s_2)} O(c_{k_1,k_2,\hat{k}})O(\frac{1}{n^2}( c^k_{k_1} + c^k_{k_2} + c^k_{k'}))
&= O(\frac{1}{n^2}(c^k_{k'} + c^k_{\hat{k}})),
\end{align*}
and

\begin{align*}
  \sum_{(k_1,s_1),(k_2,s_2)}  \abs{\frac{\partial^2 Q_{(k_1,s_1),(k_2,s_2)}}{\partial x_{(\hat{k},\hat{s})} \partial x_{(k',s')}}(z)} \abs{\dd{\phi_{(k,s)}}{(k_1,s_1)}{(k_2,s_2)}(\bz, t-\nu)}
  = O(\frac{1}{n^2}(c^k_{k'} + c^k_{\hat{k}})).
\end{align*}

The last bound follows with careful case-by-case analysis for the second derivative of $\frac{\partial^2 Q_{(k_1,s_1),(k_2,s_2)}}{\partial x_{(\hat{k},\hat{s})}\partial x_{(k',s')}}$. As a direct consequence $\abs{\frac{\partial^2 h_\tau}{\partial x_{(k',s')} \partial x_{(\hat{k},\hat{s})}}(\bphi (\bz, \tau-\nu))} = O(\frac{1}{n^2}(c^k_{k'} + c^k_{\hat{k}}))$. This enables us to establish a bound for the second sum of \eqref{eq:partial_deriv_g_ks},
\begin{align*}
  & \sum_{u, r \in \calI} \abs{\frac{\partial^2 h_\tau}{\partial x_u \partial x_r}(\bphi (\bz, \tau-\nu))} \abs{\frac{\partial \phi_u}{\partial x_i}(\bz,\tau-\nu)} \abs{\frac{\partial \phi_r}{\partial x_j}(\bz,\tau-\nu)} \\
  & = \sum_{(k',s'), (\hat{k},\hat{s}) \in \calI} \abs{\frac{\partial^2  g_{(k,s)}}{\partial x_{(k',s')} \partial x_{(\hat{k},\hat{s})}}(\bphi (\bz, \tau-\nu),\tau)} \abs{\frac{\partial \phi_{(k',s')}}{\partial x_{(k_1,s_1)}}(\bz,\tau-\nu)} \abs{\frac{\partial \phi_{(\hat{k},\hat{s})}}{\partial x_{(k_2,s_2)}}(\bz,\tau-\nu)} \\
  & = \sum_{(k',s'), (\hat{k},\hat{s}) \in \calI} O(\frac{1}{n^2}(c^k_{k'} + c^k_{\hat{k}}))O(c^{k'}_{k_1})O(c^{\hat{k}}_{k_2}) = O(\frac{1}{n^{2}}(c^k_{k_1} + c^k_{k_2})).
\end{align*}
\color{black}

For the last part of the proof, we use the obtained results to bound $\sum_{i,j\in \calI}\abs{Q_{i,j}(\by)}\abs{\frac{\partial^2}{\partial x_i \partial x_j} (h_\tau \circ \bphi)(\bz,\tau-\nu)}$. 
By equation \eqref{eq:partial_deriv_g_ks} we see that \eqref{eq:bound_refinement_taylor} is equal to 
\begin{align*}
& \frac{1}{2}\int_0^\tau \sup_{\by,\bz \in \conv{\calX}} \sum_{i,j\in \calI}\abs{Q_{i,j}(\by)} \left| \sum_{u \in \calI} \frac{\partial h_\tau}{\partial x_u}(\bphi (\bz, \tau-\nu))\frac{\partial^2 \phi_u}{\partial x_i \partial x_j}(\bz,\tau-\nu) \right. \\
&  \qquad \left. + \sum_{u, r \in \calI} \frac{\partial^2 h_\tau}{\partial x_u \partial x_r}(\bphi (\bz, \tau-\nu)) \frac{\partial \phi_u}{\partial x_i}(\bz,\tau-\nu) \frac{\partial \phi_r}{\partial x_j}(\bz,\tau-\nu) \right| d\nu.
\end{align*}

Indeed, we bound the supremum by the two following terms
\begin{align}
&  \sum_{(k_1,s_1),(k_2,s_2)\in \calI} \abs{Q_{(k_1,s_1),(k_2,s_2)}(\by)}\sum_{(k',s') \in \calI} \abs{\frac{\partial g_{(k,s)}}{\partial x_{(k',s')}}(\bphi (\bz, \tau-\nu), \tau)} \abs{\frac{\partial^2 \phi_{(k',s')}}{\partial x_{(k_1,s_1)} \partial x_{(k_2,s_2)}}(\bz,\tau-\nu)} \nonumber\\
& \phantom{\sum_{(k_1,s_1),(k_2,s_2)}} = \sum_{(k_1,s_1),(k_2,s_2)} O(c^{k_1}_{k_2})O(\frac{1}{n^2}(c^k_{k_1} + c^k_{k_2})) = O(1/n^2) \label{eq:refinement_remainder_bound:1}
\end{align}
and
\begin{align}
& \sum_{(k_1,s_1),(k_2,s_2)\in\calI} \abs{Q_{(k_1,s_1),(k_2,s_2)}(\by)}\sum_{(k',s'), (\hat{k},\hat{s}) \in \calI} \abs{\frac{\partial^2  g_{(k,s)}}{\partial x_{(k',s')}\partial x_{(\hat{k},\hat{s})}}(\bphi (\bz, \tau-\nu),\tau)} \nonumber \\
& \phantom{\sum_{(k_1,s_1),(k_2,s_2)} \abs{Q_{(k_1,s_1),(k_2,s_2)}(z)}\sum_{(k',s'), (\hat{k},\hat{s}) \in \calI} \ldots}  \times \abs{\frac{\partial \phi_{(k',s')}}{\partial x_{(k_1,s_1)}}(\bz,\tau-\nu)} \abs{\frac{\partial \phi_{(\hat{k},\hat{s})}}{\partial x_{(k_2,s_2)}}(\bz,\tau-\nu)} \nonumber \\
&\phantom{\sum_{(k_1,s_1),(k_2,s_2)}} = \sum_{(k_1,s_1),(k_2,s_2)} O(c^{k_1}_{k_2})O(\frac{1}{n^2}(c^k_{k_1} + c^k_{k_2})) = O(1/n^2). \label{eq:refinement_remainder_bound:2}
\end{align}

The bounds \eqref{eq:refinement_remainder_bound:1} and \eqref{eq:refinement_remainder_bound:2} show that \eqref{eq:bound_refinement_taylor} is of order $O(1/n^2)$, where the hidden constant depends on $\tau, \bar{r}, \abs{\calS}$, from which the claim of the Lemma follows. \end{proof}

\end{document}